\pgfplotsset{compat=1.17}
\newcommand\footnoteref[1]{\protected@xdef\@thefnmark{\ref{#1}}\@footnotemark}
\newtheorem{theorem}{Theorem}[section] 
\newtheorem{corollary}[theorem]{Corollary}
\newtheorem{definition}[theorem]{Definition}
\newtheorem{lemma}[theorem]{Lemma}
\newcommand{\USM}{{\texttt{USM}}}
\newcommand{\CSM}{{\texttt{CSM}}}
\newcommand{\RUSM}{{\texttt{RegularizedUSM}}}
\newcommand{\RCSM}{{\texttt{RegularizedCSM}}}
\newcommand{\DetDG}{{\texttt{DeterministicDG}}}
\newcommand{\RanDG}{{\texttt{RandomizedDG}}}
\newcommand{\RR}{\mathbb R}
\newcommand{\EE}{\mathbb E}
\newcommand{\CE}{\mathcal E}
\newcommand{\NN}{\mathcal N}
\newcommand{\II}{\mathcal I}
\newcommand{\CC}{\mathcal C}
\newcommand{\MM}{\mathcal M}
\newcommand{\PP}{\mathcal P}
\newcommand{\FF}{\mathcal F}
\newcommand{\GG}{\mathcal G}
\newcommand{\TT}{\mathcal T}
\newcommand{\ZZ}{\mathcal Z}
\newcommand{\bone}{\mathbf 1}
\newcommand{\BOPT}{\mathbf{OPT}}
\newcommand{\bx}{\mathbf x}
\newcommand{\by}{\mathbf y}
\newcommand{\bz}{\mathbf z}
\newcommand{\paren}[1]{\left( #1 \right)}
\newcommand{\brac}[1]{\left[ #1 \right]}
\newcommand{\eps}{\epsilon}
\newcommand{\MO}{\mathcal O}
\newcommand{\abs}[1]{\left| #1 \right|}
\newcommand{\BIGO}[1]{\MO\left(#1\right)}
\newcommand{\ceil}[1]{\left\lceil #1 \right\rceil}
\numberwithin{equation}{section} 
\title{On Maximizing Sums of Non-monotone Submodular and Linear Functions}
\author{Benjamin Qi\thanks{Department of Electrical Engineering and Computer Science, Massachusetts Institute of Technology; \href{mailto:bqi343@mit.edu}{bqi343@mit.edu}}}
\date{\today}
\begin{document} 

\maketitle 

\thispagestyle{empty}

\begin{abstract}
We study the problem of \texttt{Regularized Unconstrained Submodular Maximization} (\RUSM{}) as defined by Bodek and Feldman [BF22]. In this problem, you are given a non-monotone non-negative submodular function $f:2^{\NN}\to \mathbb R_{\ge 0}$ and a linear function $\ell:2^{\NN}\to \mathbb R$ over the same ground set $\NN$, and the objective is to output a set $T\subseteq \NN$ approximately maximizing the sum $f(T)+\ell(T)$. Specifically, an algorithm is said to provide an \textit{$(\alpha,\beta)$-approximation} for \RUSM{} if it outputs a set $T$ such that $\EE[f(T)+\ell(T)]\ge \max_{S\subseteq \NN}[\alpha \cdot f(S)+\beta\cdot \ell(S)]$. We also study the setting where $S$ and $T$ are subject to a matroid constraint, which we refer to as \texttt{Regularized \textit{Constrained} Submodular Maximization} (\RCSM{}).

For both \RUSM{} and \RCSM{}, we provide improved $(\alpha,\beta)$-approximation algorithms for the cases of non-positive $\ell$, non-negative $\ell$, and unconstrained $\ell$. In particular, for the case of unconstrained $\ell$, we are the first to provide nontrivial $(\alpha,\beta)$-approximations for \RCSM{}, and the $\alpha$ we obtain for \RUSM{} is superior to that of [BF22] for all $\beta\in (0,1)$.

In addition to approximation algorithms, we provide improved inapproximability results for all of the aforementioned cases. In particular, we show that the $\alpha$ our algorithm obtains for \RCSM{} with unconstrained $\ell$ is tight for $\beta\ge \frac{e}{e+1}$. We also show 0.478-inapproximability for maximizing a submodular function where $S$ and $T$ are subject to a cardinality constraint, improving the long-standing 0.491-inapproximability result due to Gharan and Vondrak [GV10].

\medskip

\noindent
\textbf{Keywords:} submodular maximization, regularization, double greedy, continuous greedy, inapproximability
\end{abstract}

\newpage
\setcounter{page}{1}

\section{Introduction}\label{sec:intro}

\paragraph*{Submodularity.}

\textit{Submodularity} is a property satisfied by many fundamental set functions, including coverage functions, matroid rank functions, and
directed cut functions. Optimization of submodular set functions has found a wealth of applications in machine learning, including the spread of influence in social networks \cite{kempe2003maximizing}, sensor placement \cite{krause2008near}, information gathering \cite{krause2011submodularity}, document summarization \cite{lin2011class,wei2013using,gygli2015video}, image segmentation \cite{jegelka2011submodularity}, and multi-object tracking \cite{shen2018multiobject}, among others (see \cite{krause2014submodular} for a survey).

\paragraph*{Submodular Maximization.}

Many problems involving maximization of non-negative submodular functions can be classified as either \textit{unconstrained} or \textit{constrained}, which we refer to as \USM{} and \CSM{}, respectively. For \USM{}, the objective is to return any set in the domain of the function approximately maximizing the function, while for \CSM{}, the returned set must additionally satisfy a \textit{matroid independence} constraint (or ``matroid constraint'' for short). The simplest nontrivial example of a matroid constraint is a \textit{cardinality} constraint, which means that an upper bound is given on the allowed size of the returned set.

In general, it is impossible to approximate the maxima of instances of \USM{} or \CSM{} to arbitrary accuracy in polynomial time, so we focus on both finding algorithms that return a set with expected value at least $\alpha$ times that of the optimum, known as \textit{$\alpha$-approximation algorithms}, and proving that no such polynomial-time algorithms can exist, known as \textit{$\alpha$-inapproximability results}.

Now we briefly review past results for both \USM{} and \CSM{}. A $(1-e^{-1})$-approximation for monotone \CSM{} was achieved by Nemhauser et al. \cite{nemhauser1978analysis} using a \textit{greedy} algorithm for the special case of a cardinality constraint and later generalized by Calinescu et al. \cite{calinescu2011maximizing} to a matroid constraint using a \textit{continuous greedy} algorithm. On the other hand, a $0.5$-approximation for non-monotone \USM{} was provided by Buchbinder et al. \cite{buchbinder2012double} using a \textit{randomized double greedy} algorithm, while the best known approximation factor for non-monotone \CSM{} is $0.385$ due to Buchbinder and Feldman \cite{buchbinder2016nonsymmetric} using a \textit{local search} followed by an \textit{aided measured continuous greedy}. 

The first two approximation factors are tight; ($1-e^{-1}+\eps$)-inapproximability and ($0.5+\eps$)-inapproximability for any $\eps>0$ were shown by Nemhauser and Wolsey \cite{nemhauser1978best} and Feige et al. \cite{feige2011maximizing}, respectively, using ad hoc methods. On the other hand, the best known inapproximability factor for non-monotone \CSM{} is $0.478$ due to Gharan and Vondrak \cite{gharan2011submodular} using the \textit{symmetry gap} technique of \cite{vondrak2011symmetry}. This technique has the advantage of being able to succinctly reprove the inapproximability results of \cite{nemhauser1978best,feige2011maximizing}.

\paragraph*{Submodular + Linear Maximization.}

In this work we consider approximation algorithms for maximizing the sum of a non-negative non-monotone submodular function $f$ and a linear function $\ell$. The function $g=f+\ell$ is still submodular, though not necessarily non-negative. Here, the linear term has several potential interpretations. For example, by setting $\ell$ to be non-positive, $\ell$ serves as a \textit{regularizer} or \textit{soft constraint} that favors smaller sets \cite{harshaw2019submodular}.

Sviridenko et al. \cite{sviridenko2017optimal} were the first to study algorithms for $f+\ell$ sums in the case of $f$ monotone, in order to provide improved approximation algorithms for monotone \texttt{CSM} with bounded \textit{curvature}. Here, the curvature $c\in [0,1]$ of a non-negative monotone submodular function $g$ is roughly a measure of how far $g$ is from linear. They provide a $(1-c/e-\eps)$-approximation algorithm and a complementary $(1-c/e+\eps)$-inapproximability result. The idea of the algorithm is to decompose $g$ into $f+\ell$ and show that an approximation factor of $1-e^{-1}$ can be achieved with respect to $f$ and an approximation factor of $1$ can be achieved with respect to $\ell$ simultaneously. Formally, if $\II$ is the independent set family of a matroid, the algorithm computes a set $T\in \II$ that satisfies $\EE[g(T)]=\EE[f(T)+\ell(T)]\ge \max_{S\in \II}[(1-e^{-1}-\eps)f(S)+ (1-\eps)\ell(S)]$ by first ``guessing'' the value of $\ell(S)$, and then running continuous greedy. The algorithm also works when the sign of $\ell$ is unconstrained. Feldman subsequently removed the need for the guessing step and the dependence on $\eps\ell(S)$ by introducing a \textit{distorted objective} \cite{feldman2018guess}. Many faster algorithms for the case of $f$ monotone have since been developed \cite{harshaw2019submodular,kazemi2021regularized,nikolakaki2021efficient}. 

However, only very recently has the case of $f$ non-monotone been considered. Lu et al. \cite{lu2021regularized} were the first to do so using a \textit{distorted measured continuous greedy}, showing how to compute $T\in \II$ such that $\EE[f(T)+\ell(T)]\ge \max_{S\in \II}[(e^{-1}-\eps)f(S)+\ell(S)]$, but only when $\ell$ is non-positive. Bodek and Feldman \cite{bodek2022maximizing} were the first to consider the case where $f$ non-monotone and $\ell$ is unconstrained. They define and study the problem of \texttt{Regularized Unconstrained Submodular Maximization} (\RUSM{}):

\begin{definition}[\RUSM{}]

Given a (not necessarily monotone) non-negative submodular function $f:2^{\NN}\to \mathbb R_{\ge 0}$ and a linear function $\ell:2^{\NN}\to \mathbb R$ over the same ground set $\NN$, an algorithm is said to provide an $(\alpha,\beta)$-approximation for \RUSM{} if it outputs a set $T\subseteq \NN$ such that $\EE[f(T)+\ell(T)]\ge \max_{S\subseteq \NN}[\alpha \cdot f(S)+\beta\cdot \ell(S)]$.

\end{definition}

They note that the algorithm of \cite{sviridenko2017optimal} may be modified to provide $(1-e^{-\beta}-\eps,\beta-\eps)$-approximations for $f$ monotone for all $\beta\in [0,1]$. They also note that when $\ell$ is non-positive, the algorithm of \cite{lu2021regularized} provides $(\beta e^{-\beta}-\eps,\beta)$-approximations for all $\beta\in [0,1]$ when $f$ is non-monotone as well as $(1-e^{-\beta}-\eps,\beta)$-approximations for all $\beta\ge 0$ when $f$ is monotone. The main approximation result of \cite{bodek2022maximizing} is the first non-trivial guarantee for \RUSM{} with $f$ non-monotone and the sign of $\ell$ unconstrained. Specifically, they use \textit{non-oblivious local search} to provide $\paren{\alpha(\beta)-\eps,\beta-\eps}\triangleq \paren{\beta(1-\beta)/(1+\beta)-\eps,\beta-\eps}$-approximations for all $\beta\in [0,1]$. They also prove inapproximability results for the cases of $\ell$ non-negative and $\ell$ non-positive using the symmetry gap technique \cite{vondrak2011symmetry}, including $(1-e^{-\beta}+\eps,\beta)$-inapproximability for monotone $f$ and non-positive $\ell$ for all $\beta\ge 0$, showing that the algorithm of \cite{lu2021regularized} is tight for this case \cite[Theorem 1.1]{bodek2022maximizing}.

\section{Our Contributions}\label{sec:our-contrib}

In this work, we present improved approximability and inapproximability results for \RUSM{} as well as the setting where $S$ and $T$ are subject to a matroid constraint, which we refer to as \texttt{Regularized \textit{Constrained} Submodular Maximization} (\RCSM{}):

\begin{definition}[\RCSM{}]

Given a (not necessarily monotone) non-negative submodular function $f:2^{\NN}\to \mathbb R_{\ge 0}$ and a linear function $\ell:2^{\NN}\to \mathbb R$ over the same ground set $\NN$, as well as a matroid with family of independent sets denoted by $\II$, an algorithm is said to provide an $(\alpha,\beta)$-approximation for \RCSM{} if it outputs a set $T\in \II$ such that $\EE[f(T)+\ell(T)]\ge \max_{S\in \II}[\alpha \cdot f(S)+\beta\cdot \ell(S)]$.

\end{definition}

In particular, we are the first to present $(\alpha,\beta)$-approximation algorithms for \RCSM{} when $\ell$ is not non-positive, and the $\alpha$ we obtain for \RUSM{} is superior to that of \cite{bodek2022maximizing} for all $\beta\in (0,1)$. To show approximability, the main techniques we use are the \textit{measured continuous greedy} introduced by Feldman et al. \cite{feldman2011unified} and used by \cite{buchbinder2016nonsymmetric,lu2021regularized}, the \textit{distorted objective} introduced by Feldman \cite{feldman2018guess} and used by \cite{lu2021regularized}, as well as the ``guessing step'' of \cite{sviridenko2017optimal}. To show inapproximability, the main technique we use is the symmetry gap of \cite{vondrak2011symmetry}, and most of our symmetry gap constructions are based on those of \cite{gharan2011submodular}.

\paragraph*{Organization of the Paper.} We present the definitions and notation used throughout this paper in \Cref{sec:prelims}. \Cref{sec:cardinality-inapprox,sec:non-positive,sec:non-neg-usm,sec:non-neg-csm,sec:unconstrained} form the bulk of our paper and are summarized below. We conclude with discussion of open problems in \Cref{sec:open}.

\subsubsection*{\Cref{sec:cardinality-inapprox}: Inapproximability of Maximization with Cardinality Constraint}

We first consider \CSM{} without a regularizer. Gharan and Vondrak \cite{gharan2011submodular} proved 0.491-\allowbreak inapproximability of \CSM{} in the special case where the matroid constraint is a cardinality constraint. We improve the inapproximability factor to 0.478 in \Cref{thm:0.478-inapprox} by modifying a construction from the same paper \cite[Theorem E.2]{gharan2011submodular} that uses the \textit{symmetry gap} technique of \cite{vondrak2011symmetry}.

\subsubsection*{\Cref{sec:non-positive}: Non-positive $\ell$} 

The results of this section are summarized in \Cref{fig:ell-non-pos}. In \Cref{subsec:non-pos-approx}, we present improved  $(\alpha(\beta),\beta)$\allowbreak-approximations for \RUSM{} for all $\beta\ge 0$ and \RCSM{} for all $\beta\in [0,1]$. Previously, the best known result for both \RUSM{} and \RCSM{} was $\alpha(\beta)=\beta e^{-\beta}-\eps$ due to Lu et al. \cite{lu2021regularized}. This function achieves its maximum value at $\alpha(1)=e^{-1}-\eps>0.367$. We improve the approximation factor for \RCSM{} to $\alpha(1)>0.385$, matching the best known approximation factor for \texttt{CSM} without a regularizer due to Buchbinder and Feldman \cite{buchbinder2016nonsymmetric}. Additionally, we show that larger values of $\alpha(\beta)$ are achievable for \RUSM{} when $\beta>1$. The idea is to combine the ``guessing step'' of Sviridenko et al. \cite{sviridenko2017optimal} with a generalization of the \textit{aided measured continuous greedy} algorithm of Buchbinder and Feldman \cite{buchbinder2016nonsymmetric}.

\begin{figure}
    \centering
    
    \begin{tikzpicture}
    \begin{axis}[
        title={\texttt{RegularizedUSM}, non-positive $\ell$},
        xlabel={coefficient of $\ell$ ($\beta$)},
        ylabel={coefficient of $f$ ($\alpha$)},
        xmin=0, xmax=1.4,
        ymin=0, ymax=1,
        legend pos=north west,
    ]
    \addplot[domain=0:2,color=blue]
    {
        (x <= 1) * x * exp(-x) + (x > 1) * exp(-1)
    };
    \addlegendentry{$(\beta e^{-\beta}-\eps, \beta)$ Approximability \cite{lu2021regularized}}
    
    \addplot[dashed,domain=0:2,color=blue,name path=lower]coordinates{
        (0.0, 0.0)
        (0.01, 0.009)
        (0.02, 0.0181)
        (0.03, 0.0271)
        (0.04, 0.0362)
        (0.05, 0.0452)
        (0.06, 0.0543)
        (0.07, 0.0633)
        (0.08, 0.0724)
        (0.09, 0.0814)
        (0.1, 0.0905)
        (0.11, 0.0978)
        (0.12, 0.1051)
        (0.13, 0.1125)
        (0.14, 0.1198)
        (0.15, 0.1271)
        (0.16, 0.1344)
        (0.17, 0.1418)
        (0.18, 0.1491)
        (0.19, 0.1564)
        (0.2, 0.1637)
        (0.21, 0.1696)
        (0.22, 0.1754)
        (0.23, 0.1813)
        (0.24, 0.1871)
        (0.25, 0.193)
        (0.26, 0.1988)
        (0.27, 0.2047)
        (0.28, 0.2105)
        (0.29, 0.2164)
        (0.3, 0.2222)
        (0.31, 0.2268)
        (0.32, 0.2314)
        (0.33, 0.236)
        (0.34, 0.2406)
        (0.35, 0.2452)
        (0.36, 0.2498)
        (0.37, 0.2544)
        (0.38, 0.259)
        (0.39, 0.2635)
        (0.4, 0.2681)
        (0.41, 0.2716)
        (0.42, 0.2752)
        (0.43, 0.2787)
        (0.44, 0.2822)
        (0.45, 0.2857)
        (0.46, 0.2892)
        (0.47, 0.2927)
        (0.48, 0.2962)
        (0.49, 0.2998)
        (0.5, 0.3033)
        (0.51, 0.3059)
        (0.52, 0.3085)
        (0.53, 0.3111)
        (0.54, 0.3137)
        (0.55, 0.3163)
        (0.56, 0.3189)
        (0.57, 0.3215)
        (0.58, 0.3241)
        (0.59, 0.3267)
        (0.6, 0.3293)
        (0.61, 0.3311)
        (0.62, 0.333)
        (0.63, 0.3348)
        (0.64, 0.3366)
        (0.65, 0.3384)
        (0.66, 0.3403)
        (0.67, 0.3421)
        (0.68, 0.3439)
        (0.69, 0.3458)
        (0.7, 0.3476)
        (0.71, 0.3491)
        (0.72, 0.3507)
        (0.73, 0.3522)
        (0.74, 0.3537)
        (0.75, 0.3551)
        (0.76, 0.3566)
        (0.77, 0.3581)
        (0.78, 0.3596)
        (0.79, 0.3611)
        (0.8, 0.3626)
        (0.81, 0.364)
        (0.82, 0.3655)
        (0.83, 0.367)
        (0.84, 0.3682)
        (0.85, 0.3694)
        (0.86, 0.3706)
        (0.87, 0.3718)
        (0.88, 0.373)
        (0.89, 0.3743)
        (0.9, 0.3755)
        (0.91, 0.3767)
        (0.92, 0.3779)
        (0.93, 0.3788)
        (0.94, 0.3798)
        (0.95, 0.3808)
        (0.96, 0.3817)
        (0.97, 0.3827)
        (0.98, 0.3837)
        (0.99, 0.3846)
        (1.0, 0.3856)
        (1.01, 0.3863)
        (1.02, 0.387)
        (1.03, 0.3878)
        (1.04, 0.3885)
        (1.05, 0.3892)
        (1.06, 0.39)
        (1.07, 0.3907)
        (1.08, 0.3913)
        (1.09, 0.3918)
        (1.1, 0.3924)
        (1.11, 0.3929)
        (1.12, 0.3934)
        (1.13, 0.3939)
        (1.14, 0.3944)
        (1.15, 0.3948)
        (1.16, 0.3951)
        (1.17, 0.3955)
        (1.18, 0.3958)
        (1.19, 0.3962)
        (1.2, 0.3965)
        (1.21, 0.3967)
        (1.22, 0.3969)
        (1.23, 0.3971)
        (1.24, 0.3973)
        (1.25, 0.3975)
        (1.26, 0.3977)
        (1.27, 0.3978)
        (1.28, 0.3978)
        (1.29, 0.3979)
        (1.3, 0.3979)
        (1.31, 0.398)
        (1.32, 0.398)
        (1.33, 0.398)
        (1.34, 0.398)
        (1.35, 0.398)
        (1.36, 0.398)
        (1.37, 0.398)
        (1.38, 0.398)
        (1.39, 0.398)
        (1.4, 0.398)
        (1.41, 0.398)
        (1.42, 0.398)
        (1.43, 0.398)
        (1.44, 0.398)
        (1.45, 0.398)
        (1.46, 0.398)
        (1.47, 0.398)
        (1.48, 0.398)
        (1.49, 0.398)
        (1.5, 0.398)
    };
    \addlegendentry{Approximability (\Cref{thm:rusmNonposExtended})}
    
    \addplot[domain=0:2,color=red]{
        1-exp(-x)
    };
    \addlegendentry{Inapproximability \cite[Theorem 1.1]{bodek2022maximizing}}
    
    \addplot[
        color=red,
        dashed
    ]
    coordinates {
    (0.0000,0.2500)(0.0100,0.2525)(0.0200,0.2550)(0.0300,0.2575)(0.0400,0.2600)(0.0500,0.2625)(0.0600,0.2650)(0.0700,0.2675)(0.0800,0.2700)(0.0900,0.2725)(0.1000,0.2750)(0.1100,0.2775)(0.1200,0.2800)(0.1300,0.2825)(0.1400,0.2849)(0.1500,0.2874)(0.1600,0.2899)(0.1700,0.2924)(0.1800,0.2949)(0.1900,0.2974)(0.2000,0.2998)(0.2100,0.3023)(0.2200,0.3048)(0.2300,0.3073)(0.2400,0.3097)(0.2500,0.3122)(0.2600,0.3147)(0.2700,0.3171)(0.2800,0.3196)(0.2900,0.3220)(0.3000,0.3245)(0.3100,0.3269)(0.3200,0.3294)(0.3300,0.3318)(0.3400,0.3343)(0.3500,0.3367)(0.3600,0.3391)(0.3700,0.3416)(0.3800,0.3440)(0.3900,0.3464)(0.4000,0.3488)(0.4100,0.3513)(0.4200,0.3537)(0.4300,0.3561)(0.4400,0.3585)(0.4500,0.3609)(0.4600,0.3633)(0.4700,0.3657)(0.4800,0.3681)(0.4900,0.3705)(0.5000,0.3728)(0.5100,0.3752)(0.5200,0.3776)(0.5300,0.3799)(0.5400,0.3823)(0.5500,0.3847)(0.5600,0.3870)(0.5700,0.3894)(0.5800,0.3917)(0.5900,0.3941)(0.6000,0.3964)(0.6100,0.3987)(0.6200,0.4010)(0.6300,0.4034)(0.6400,0.4057)(0.6500,0.4080)(0.6600,0.4103)(0.6700,0.4126)(0.6800,0.4149)(0.6900,0.4172)(0.7000,0.4195)(0.7100,0.4217)(0.7200,0.4240)(0.7300,0.4263)(0.7400,0.4285)(0.7500,0.4308)(0.7600,0.4331)(0.7700,0.4353)(0.7800,0.4375)(0.7900,0.4398)(0.8000,0.4420)(0.8100,0.4442)(0.8200,0.4465)(0.8300,0.4486)(0.8400,0.4507)(0.8500,0.4527)(0.8600,0.4547)(0.8700,0.4566)(0.8800,0.4585)(0.8900,0.4603)(0.9000,0.4621)(0.9100,0.4639)(0.9200,0.4655)(0.9300,0.4672)(0.9400,0.4687)(0.9500,0.4703)(0.9600,0.4718)(0.9700,0.4732)(0.9800,0.4746)(0.9900,0.4760)(1.0000,0.4773)(1.0100,0.4786)(1.0200,0.4798)(1.0300,0.4810)(1.0400,0.4821)(1.0500,0.4832)(1.0600,0.4843)(1.0700,0.4853)(1.0800,0.4863)(1.0900,0.4873)(1.1000,0.4882)(1.1100,0.4890)(1.1200,0.4899)(1.1300,0.4907)(1.1400,0.4914)(1.1500,0.4921)(1.1600,0.4928)(1.1700,0.4935)(1.1800,0.4941)(1.1900,0.4947)(1.2000,0.4952)(1.2100,0.4958)(1.2200,0.4962)(1.2300,0.4967)(1.2400,0.4971)(1.2500,0.4975)(1.2600,0.4979)(1.2700,0.4982)(1.2800,0.4985)(1.2900,0.4988)(1.3000,0.4990)(1.3100,0.4992)(1.3200,0.4994)(1.3300,0.4996)(1.3400,0.4997)(1.3500,0.4998)(1.3600,0.4999)(1.3700,0.5000)(1.3800,0.5000)(1.3900,0.5000)(1.4000,0.5000)(1.4100,0.5000)(1.4200,0.5000)(1.4300,0.5000)(1.4400,0.5000)(1.4500,0.5000)(1.4600,0.5000)(1.4700,0.5000)(1.4800,0.5000)(1.4900,0.5000)(1.5000,0.5000)
    };
    \addlegendentry{Inapproximability \cite[Theorem 1.3]{bodek2022maximizing}}
        
    \addplot[name path=upper, color=red, densely dotted]
    coordinates{
        (0.0000,0.0000)(0.0200,0.0198)(0.0400,0.0390)(0.0600,0.0576)(0.0800,0.0758)(0.1000,0.0935)(0.1200,0.1107)(0.1400,0.1274)(0.1600,0.1436)(0.1800,0.1592)(0.2000,0.1744)(0.2200,0.1892)(0.2400,0.2033)(0.2600,0.2171)(0.2800,0.2306)(0.3000,0.2433)(0.3200,0.2557)(0.3400,0.2678)(0.3600,0.2793)(0.3800,0.2903)(0.4000,0.3009)(0.4200,0.3111)(0.4400,0.3211)(0.4600,0.3306)(0.4800,0.3394)(0.5000,0.3478)(0.5200,0.3560)(0.5400,0.3640)(0.5600,0.3710)(0.5800,0.3780)(0.6000,0.3848)(0.6200,0.3916)(0.6400,0.3982)(0.6600,0.4044)(0.6800,0.4104)(0.7000,0.4163)(0.7200,0.4219)(0.7400,0.4273)(0.7600,0.4325)(0.7800,0.4374)(0.8000,0.4420)(0.8200,0.4465)(0.8400,0.4507)(0.8600,0.4547)(0.8800,0.4585)(0.9000,0.4621)(0.9200,0.4656)(0.9400,0.4688)(0.9600,0.4718)(0.9800,0.4746)(1.0000,0.4773)(1.0100,0.4786)(1.0200,0.4798)(1.0300,0.4810)(1.0400,0.4821)(1.0500,0.4832)(1.0600,0.4843)(1.0700,0.4853)(1.0800,0.4863)(1.0900,0.4873)(1.1000,0.4882)(1.1100,0.4890)(1.1200,0.4899)(1.1300,0.4907)(1.1400,0.4914)(1.1500,0.4921)(1.1600,0.4928)(1.1700,0.4935)(1.1800,0.4941)(1.1900,0.4947)(1.2000,0.4952)(1.2100,0.4958)(1.2200,0.4962)(1.2300,0.4967)(1.2400,0.4971)(1.2500,0.4975)(1.2600,0.4979)(1.2700,0.4982)(1.2800,0.4985)(1.2900,0.4988)(1.3000,0.4990)(1.3100,0.4992)(1.3200,0.4994)(1.3300,0.4996)(1.3400,0.4997)(1.3500,0.4998)(1.3600,0.4999)(1.3700,0.5000)(1.3800,0.5000)(1.3900,0.5000)(1.4000,0.5000)(1.4100,0.5000)(1.4200,0.5000)(1.4300,0.5000)(1.4400,0.5000)(1.4500,0.5000)(1.4600,0.5000)(1.4700,0.5000)(1.4800,0.5000)(1.4900,0.5000)(1.5000,0.5000)};
    \addlegendentry{Inapproximability (\Cref{thm:inapprox-usm-non-pos})}
    
    \node[label={225:{$(0.478, 1)$}},circle,fill,inner sep=1pt,color=red] at (axis cs:1,0.478) {};
    
    \node[label={225:{$(0.5, 1.386)$}},circle,fill,inner sep=1pt,color=red] at (axis cs:1.3862,0.5) {};
    
    \addplot [
        thick,
        color=black,
        fill=black, 
        fill opacity=0.05
    ]
    fill between[
        of=lower and upper,
        soft clip={domain=0:1.4},
    ];

    \end{axis}
    \end{tikzpicture}

    \caption{Graphical presentation of results for \texttt{RegularizedUSM} with a non-positive linear function $\ell$ (\Cref{sec:non-positive}). Following the convention of \cite{bodek2022maximizing}, the $x$ and $y$ axes represent the coefficients of $\ell$ and $f$, respectively. We use blue for approximation algorithms and red for inapproximability results, and the shaded area represents the gap between the best known approximation algorithms and inapproximability results. Observe that \Cref{thm:inapprox-usm-non-pos} unifies the two inapproximability theorems from \cite{bodek2022maximizing}. $(0.5, 2\ln 2-\eps)$-inapproximability is due to \Cref{thm:inapprox-two-ln-two}. For \texttt{RegularizedCSM}, the results are the same for $\beta\le 1$.}
    
    \label{fig:ell-non-pos}
\end{figure}
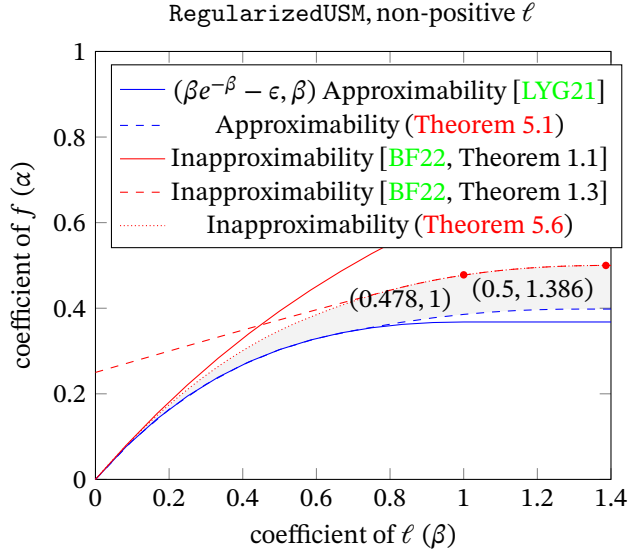

\begin{restatable*}{theorem}{rusmNonposExtended}\label{thm:rusmNonposExtended}

For \RUSM{} with non-positive $\ell$, an $(\alpha(\beta),\beta)$-approximation algorithm exists for any $(\alpha(\beta),\beta)$ in \Cref{tab:0.385-nonpos-extended}. In particular, $\alpha(1)>0.385$ and $\alpha(1.3)>0.398$. When $\beta\le 1$, there is an algorithm for \RCSM{} that achieves the same approximation factor.

\end{restatable*}

A natural follow-up question is whether there is a $(0.5,\beta)$-approximation algorithm for \RUSM{} with non-positive $\ell$ for some $\beta$. Although it is unclear whether this is the case for general $f$, we use linear programming to show this result when $f$ is an undirected or directed cut function (\Cref{thm:max-cut-csm,thm:max-dicut-usm}).

In \Cref{subsec:non-pos-inapprox}, we use the symmetry gap technique to demonstrate improved inapproximability for \RUSM{} with non-positive $\ell$. The previous best inapproximability results were \cite[Theorem 1.1]{bodek2022maximizing} near $\beta=0$ and \cite[Theorem 1.3]{bodek2022maximizing} near $\beta=1$. Our result, which generalizes the construction from \Cref{sec:cardinality-inapprox}, beats or matches both of these theorems for all $\beta$.

\begin{restatable*}{theorem}{inapproxRusmNonpos}\label{thm:inapprox-usm-non-pos}

There are instances of \RUSM{} with non-positive $\ell$ such that $(\alpha(\beta),\beta)$ is inapproximable for any $(\alpha(\beta),\beta)$ in \Cref{tab:inapprox-usm-non-pos}. In particular, $\alpha(0)\approx 0$, matching the result of \cite[Theorem 1.1]{bodek2022maximizing}, and $\alpha(1) < 0.478$, matching the result of \cite[Theorem 1.3]{bodek2022maximizing}.

\end{restatable*}

We conclude this section by showing that taking the limit of \Cref{thm:inapprox-usm-non-pos} as $\alpha(\beta)\to 0.5$ shows $(0.5,2\ln 2-\eps\approx 1.386)$-inapproximability (\Cref{thm:inapprox-two-ln-two}).

\subsubsection*{\Cref{sec:non-neg-usm}: Non-negative $\ell$, \RUSM{}}

The results of this subsection and the next are summarized in \Cref{fig:det-dg,fig:ell-non-neg-usm}.

We note that \Cref{thm:rusmNonposExtended} can be modified to obtain guarantees for \RUSM{} with non-negative $\ell$ (in \Cref{subsec:non-neg-usm-approx}). But first, we take a slight detour and reanalyze the guarantee for this task provided by the randomized double greedy algorithm of \cite{buchbinder2012double} (\RanDG{}), which achieves the best-known $(\alpha(\beta),\beta)$-approximations near $\beta=3/4$. We also reanalyze the guarantee of the deterministic variant of double greedy from the same paper (\DetDG{}).

Recall that \DetDG{} achieves a $1/3$-approximation for \USM{}, while \RanDG{} achieves a $1/2$-approximation for \USM{} in expectation. \cite{bodek2022maximizing} extended these guarantees to \RUSM{} with non-negative $\ell$, showing that \DetDG{} simultaneously achieves $(\alpha,1-\alpha)$-approximations for all $\alpha\in [0,1/3]$, and that \RanDG{} simultaneously achieves $(\alpha,1-\alpha/2)$-approximations for all $\alpha\in [0,1/2]$. In \Cref{subsec:dg-better}, we show improved approximation factors for a variant of \DetDG{} and the original \RanDG{}:

\begin{figure}
    \centering
    \begin{tikzpicture}
    \begin{axis}[
        title={Double Greedy, non-negative $\ell$},
        xlabel={coefficient of $\ell$ ($\beta$)},
        ylabel={coefficient of $f$ ($\alpha$)},
        xmin=0.6, xmax=1,
        ymin=0, ymax=1,
        legend pos=north west,
    ]
    
    \addplot[
        domain=0:1/3,
        samples=2,
        samples y=0,
        color=orange,
    ]({1-x},{x});
    \addlegendentry{\DetDG{}, \cite[Theorem 1.4]{bodek2022maximizing}}
    
    \addplot[
        domain=1:100,
        samples=1000,
        samples y=0,
        color=blue,
    ]({(1+x)/(1+x+1/x)},{(1)/(1+x+1/x)});
    \addlegendentry{Variant of \DetDG{} (\Cref{thm:deterministic-dg-better})}
    
    \addplot[
        domain=0:1/2,
        samples=2,
        samples y=0,
        color=orange,
        dashed,
    ]({1-x/2},{x});
    \addlegendentry{\RanDG{}, \cite[Theorem 1.5]{bodek2022maximizing}}
    
    \addplot[
        domain=1:100,
        samples=1000,
        samples y=0,
        color=blue,
        dashed,
    ]({(2+x)/(2+x+1/x)},{2/(2+x+1/x)});
    \addlegendentry{\RanDG{} (\Cref{thm:randomized-dg-better})}
    
    \end{axis}
    \end{tikzpicture}
    
    \caption{Graphical presentation of improved approximability for both deterministic and randomized double greedy with non-negative $\ell$ (\Cref{subsec:dg-better}).
    }
    
    \label{fig:det-dg}
\end{figure}
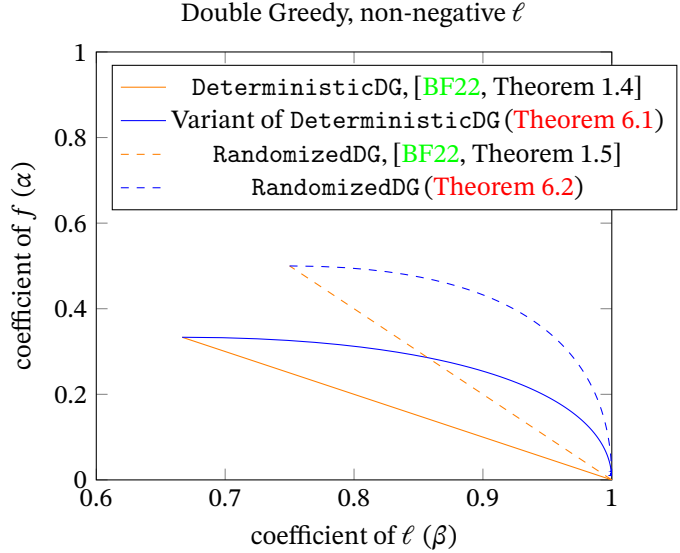

\begin{itemize}
    \item 
    \textbf{Improved analysis of a variant of \DetDG{} (\Cref{thm:deterministic-dg-better}). }
    For any $r\ge 1$, we describe a variant of \DetDG{} that simultaneously achieves $(0,1)$ and $\paren{\frac{1}{r+1+r^{-1}},\frac{r+1}{r+1+r^{-1}}}$-approximations. For $r=1$, the variant is actually just the original \DetDG{}.
    
    \item 
    \textbf{Improved analysis of \RanDG{} (\Cref{thm:randomized-dg-better}).} We then show that \RanDG{} simultaneously achieves $\paren{\frac{2}{r+2+r^{-1}},\frac{r+2}{r+2+r^{-1}}}$-approximations for all $r\ge 1$.
\end{itemize}

Observe that for both \DetDG{} and \RanDG{}, increasing $r$ improves the dependence of the approximation on $\ell$ but decreases the dependence on $f$. Setting $r=1$ recovers the guarantees of \cite{bodek2022maximizing}. We also provide examples showing that neither \DetDG{} nor \RanDG{} achieve $(\alpha,\beta)$-approximations better than \Cref{thm:deterministic-dg-better,thm:randomized-dg-better} in \Cref{thm:deterministic-dg-tight,thm:randomized-dg-tight}, respectively.

In \Cref{subsec:non-neg-usm-approx} we provide improved approximation algorithms for non-negative $\ell$ near $\beta=1$ by combining the results of \Cref{subsec:non-pos-approx,subsec:dg-better}:

\begin{figure}
    \centering
    \begin{tikzpicture}
    \begin{axis}[
        title={\texttt{RegularizedUSM}, non-negative $\ell$},
        xlabel={coefficient of $\ell$ ($\beta$)},
        ylabel={coefficient of $f$ ($\alpha$)},
        xmin=0.75, xmax=1,
        ymin=0, ymax=1,
        legend pos=north west,
    ]
    
    \addplot[
        name path=lower1,
        domain=1:20,
        samples=1000,
        samples y=0,
        color=blue,
    ]({(2+x*x)/(x+1/x)/(x+1/x)},{(2)/(x+1/x)/(x+1/x)});
    \addlegendentry{Approximability (\Cref{thm:randomized-dg-better})}
    
    \node[label={190:{$(0.478,1)$}},circle,fill,inner sep=1pt,color=red] at (axis cs:1,0.478) {};
    \node[circle,fill,inner sep=1pt,color=red] at (axis cs:1,0.4998) {};
    \node[label={110:{$(0.5,0.943)$}},circle,fill,inner sep=1pt,color=red] at (axis cs:0.943, 0.5) {};
    
    \addplot[name path=lower2, color=blue, dashed
    ] coordinates {
        (0.84, 0.4799)
        (0.85, 0.4748)
        (0.86, 0.4697)
        (0.87, 0.4646)
        (0.88, 0.4595)
        (0.89, 0.4544)
        (0.9, 0.4492)
        (0.91, 0.4441)
        (0.92, 0.439)
        (0.93, 0.4338)
        (0.94, 0.4283)
        (0.95, 0.4224)
        (0.96, 0.416)
        (0.97, 0.4092)
        (0.98, 0.4018)
        (0.99, 0.3938)
        (1.0, 0.3856)
    };
    \addlegendentry{Approximability (\Cref{thm:rusmNonnegComb})}
    

    \addplot[
        name path=upper,
        domain=0:1, 
        color=red,
    ]{0.5};
    \addlegendentry{Inapproximability of \USM{} \cite{feige2011maximizing}};
    
    \addplot [
        thick,
        color=black,
        fill=black, 
        fill opacity=0.05
    ]
    fill between[
        of=lower1 and upper,
        soft clip={domain=0.75:0.84},
    ];
    
    \addplot [
        thick,
        color=black,
        fill=black, 
        fill opacity=0.05
    ]
    fill between[
        of=lower2 and upper,
        soft clip={domain=0.84:1.0},
    ];
    
    \end{axis}
    \end{tikzpicture}
    
    \caption{Graphical presentation of results for \RUSM{} with a non-negative linear function $\ell$ (\Cref{sec:non-neg-usm}). \cite{feige2011maximizing} showed ($0.5+\eps$)-inapproximability and \cite[Lemma 6.3]{bodek2022maximizing} showed $(0.4998+\eps,1)$-inapproximability. $(0.478,1)$-inapproximability is due to \Cref{thm:0.478-1-inapprox-nonneg}, and $(0.5,2\sqrt 2/3+\eps)$-inapproximability is due to \Cref{thm:0.5-0.943-inapprox-nonneg}.}
    
    \label{fig:ell-non-neg-usm}
\end{figure}

\begin{restatable*}{theorem}{rusmNonnegComb}\label{thm:rusmNonnegComb}

An $(\alpha(\beta),\beta)$-approximation algorithm for \RUSM{} with non-negative $\ell$ exists for any $(\alpha(\beta),\beta)$ in \Cref{tab:nonneg-comb}. In particular, the $\alpha(\beta)$ obtained for $\beta\ge 0.85$ is superior to that of \Cref{thm:randomized-dg-better} alone, and $\alpha(1)>0.385$, matching the approximation factor of \Cref{thm:rusmNonposExtended}.

\end{restatable*}

In \Cref{subsec:non-neg-usm-inapprox}, we use the symmetry gap technique to prove both $(0.478,1-\eps)$ and $(0.5,2\sqrt 2/3\approx 0.943+\eps)$-inapproximability (\Cref{thm:0.478-1-inapprox-nonneg,thm:0.5-0.943-inapprox-nonneg}). These results are much stronger than \cite[Theorem 1.6]{bodek2022maximizing}, which only proved $(0.4998+\eps,1)$-inapproximability. Again, our constructions are variants of that used in \Cref{sec:cardinality-inapprox}.

\subsubsection*{\Cref{sec:non-neg-csm}: Non-negative $\ell$, \RCSM{}} 

The results of this section are summarized in \Cref{fig:ell-non-neg-csm}. To the best of our knowledge, we are the \emph{first} to obtain non-trivial approximations for \RCSM{} when $\ell$ is not necessarily non-positive. In \Cref{subsec:non-neg-csm-approx} we combine the \textit{distorted measured continuous greedy} of \cite{lu2021regularized} with the \textit{aided measured continuous greedy} of \cite{buchbinder2016nonsymmetric} to show the following.

\begin{figure}
    \centering
    
    \begin{tikzpicture}
    \begin{axis}[
        title={\texttt{RegularizedCSM}, non-negative $\ell$},
        xlabel={coefficient of $\ell$ ($\beta$)},
        ylabel={coefficient of $f$ ($\alpha$)},
        xmin=0, xmax=1,
        ymin=0, ymax=1,
        legend pos=north west,
    ]
    \node[label={225:{$\paren{\frac{1}{e},\frac{e-1}{e}}$}},circle,fill,inner sep=1pt, color=blue] at (axis cs:1-1/e,1/e) {};
    
    \addplot[
        color=blue,
        name path=lower,
    ]coordinates{
        (0.0, 0.3857)
(0.01, 0.3857)
(0.02, 0.3857)
(0.03, 0.3857)
(0.04, 0.3857)
(0.05, 0.3857)
(0.06, 0.3857)
(0.07, 0.3857)
(0.08, 0.3857)
(0.09, 0.3857)
(0.1, 0.3857)
(0.11, 0.3857)
(0.12, 0.3857)
(0.13, 0.3857)
(0.14, 0.3857)
(0.15, 0.3857)
(0.16, 0.3857)
(0.17, 0.3857)
(0.18, 0.3857)
(0.19, 0.3857)
(0.2, 0.3857)
(0.21, 0.3857)
(0.22, 0.3857)
(0.23, 0.3857)
(0.24, 0.3857)
(0.25, 0.3857)
(0.26, 0.3857)
(0.27, 0.3857)
(0.28, 0.3857)
(0.29, 0.3857)
(0.3, 0.3857)
(0.31, 0.3857)
(0.32, 0.3857)
(0.33, 0.3857)
(0.34, 0.3857)
(0.35, 0.3857)
(0.36, 0.3857)
(0.37, 0.3857)
(0.38, 0.3857)
(0.39, 0.3857)
(0.4, 0.3857)
(0.41, 0.3857)
(0.42, 0.3857)
(0.43, 0.3857)
(0.44, 0.3857)
(0.45, 0.3857)
(0.46, 0.3857)
(0.47, 0.3857)
(0.48, 0.3857)
(0.49, 0.3857)
(0.5, 0.3857)
(0.51, 0.3857)
(0.52, 0.3857)
(0.53, 0.3857)
(0.54, 0.3857)
(0.55, 0.3857)
(0.56, 0.3857)
(0.57, 0.3857)
(0.58, 0.3857)
(0.59, 0.3857)
(0.6, 0.3848)
(0.61, 0.3822)
(0.62, 0.3774)
(0.63, 0.3699)
(0.64, 0.36)
(0.65, 0.35)
(0.66, 0.34)
(0.67, 0.33)
(0.68, 0.32)
(0.69, 0.31)
(0.7, 0.3)
(0.71, 0.29)
(0.72, 0.28)
(0.73, 0.27)
(0.74, 0.26)
(0.75, 0.25)
(0.76, 0.24)
(0.77, 0.23)
(0.78, 0.22)
(0.79, 0.21)
(0.8, 0.2)
(0.81, 0.19)
(0.82, 0.18)
(0.83, 0.17)
(0.84, 0.16)
(0.85, 0.15)
(0.86, 0.14)
(0.87, 0.13)
(0.88, 0.12)
(0.89, 0.11)
(0.9, 0.1)
(0.91, 0.09)
(0.92, 0.08)
(0.93, 0.07)
(0.94, 0.06)
(0.95, 0.05)
(0.96, 0.04)
(0.97, 0.03)
(0.98, 0.02)
(0.99, 0.01)
(1.0, 0.0)
        };
    \addlegendentry{Approximability (\Cref{thm:ell-nonneg-csm})}

    \addplot[
        domain=0:1,
        color=red,
        name path=upper1,
    ]{0.478};
    \addlegendentry{Inapproximability of \CSM{} \cite{gharan2011submodular}}
    
    \addplot[
        name path=upper2,
        domain=0:1,
        color=red,
        dashed,
    ]{1-x};
    \addlegendentry{Inapproximability (\Cref{thm:inapprox-csm-beta-1})}
    
    \addplot [
        thick,
        color=black,
        fill=black, 
        fill opacity=0.05
    ]
    fill between[
        of=lower and upper1,
        soft clip={domain=0:1-0.478},
    ];
    
    \addplot [
        thick,
        color=black,
        fill=black, 
        fill opacity=0.05,
        color=blue
    ]
    fill between[
        of=lower and upper2,
        soft clip={domain=1-0.478:1},
    ];
    
    \end{axis}
    \end{tikzpicture}

    \caption{Graphical presentation of results for \CSM{} with a non-negative linear function $\ell$ (\Cref{sec:non-neg-csm}).  Recall that \cite[Theorem E.2]{gharan2011submodular} showed $0.478$-inapproximability.
    }
    \label{fig:ell-non-neg-csm}
\end{figure}

\begin{restatable*}{theorem}{approxNonnegCsm}\label{thm:ell-nonneg-csm}

For \RCSM{} with non-negative $\ell$, there is a $\paren{\alpha(\beta)-\eps,\beta-\eps}$ approximation algorithm for all $\beta\in [0,1]$ where $\alpha$ is a decreasing concave function satisfying $\alpha(0.385)>0.385$, $\alpha(0.6)>0.384, \alpha\paren{1-e^{-1}}=e^{-1}$, and $\alpha(1)=0$.

\end{restatable*}

Note that $\alpha(0.385)>0.385$ matches the (trivial) result of directly applying the algorithm of \cite{buchbinder2016nonsymmetric} to $f+\ell$. In \Cref{subsec:non-neg-csm-inapprox}, we prove a complementary inapproximability result showing that our algorithm is tight for $\beta\ge \frac{e-1}{e}$.

\begin{restatable*}[Inapproximability of \RCSM{} Near $\beta=1$]{theorem}{inapproxNonnegCsm}\label{thm:inapprox-csm-beta-1}

For any $0\le \beta\le 1$, there exist instances of \RCSM{} with non-negative $\ell$ such that a $(1-\beta+\eps,\beta)$-approximation would require exponentially many value queries.

\end{restatable*}

\subsubsection*{\Cref{sec:unconstrained}: Unconstrained $\ell$} 

The results of this section are summarized in \Cref{fig:ell-arbitrary-sign}. In \Cref{subsec:unconstrained-approx} we modify and reanalyze the \textit{distorted measured continuous greedy} of \cite{lu2021regularized} to achieve a better approximation factor for \RUSM{} than \cite[Theorem 1.2]{bodek2022maximizing} for all $\beta \in (0,1)$:

\begin{figure}
    \centering
    \begin{subfigure}[b]{0.45\textwidth}
        \centering
        \begin{tikzpicture}
        \begin{axis}[
            title={\texttt{RegularizedUSM}, unconstrained $\ell$},
            xlabel={coefficient of $\ell$ ($\beta$)},
            ylabel={coefficient of $f$ ($\alpha$)},
            xmin=0, xmax=1,
            ymin=0, ymax=1,
            legend pos=north west,
            width=\linewidth,
        ]
        
        \addplot[
            domain=0:1,
            color=blue,
        ]{x*(1-x)/(1+x)};
        \addlegendentry{Approximability \cite[Theorem 1.2]{bodek2022maximizing}}
        
        \addplot[
            name path=lower,
            domain=0:10,
            samples=1000,
            samples y=0,
            color=blue,
            dashed
        ]({x/(x+exp(-x))}, {x*exp(-x)/(x+exp(-x))});
        \addlegendentry{Approximability (\Cref{thm:ell-arbitrary-sign})}

        \addplot[color=red]coordinates{
        (0.0000,0.0000)(0.0200,0.0198)(0.0400,0.0390)(0.0600,0.0576)(0.0800,0.0758)(0.1000,0.0935)(0.1200,0.1107)(0.1400,0.1274)(0.1600,0.1436)(0.1800,0.1592)(0.2000,0.1744)(0.2200,0.1892)(0.2400,0.2033)(0.2600,0.2171)(0.2800,0.2306)(0.3000,0.2433)(0.3200,0.2557)(0.3400,0.2678)(0.3600,0.2793)(0.3800,0.2903)(0.4000,0.3009)(0.4200,0.3111)(0.4400,0.3211)(0.4600,0.3306)(0.4800,0.3394)(0.5000,0.3478)(0.5200,0.3560)(0.5400,0.3640)(0.5600,0.3710)(0.5800,0.3780)(0.6000,0.3848)(0.6200,0.3916)(0.6400,0.3982)(0.6600,0.4044)(0.6800,0.4104)(0.7000,0.4163)(0.7200,0.4219)(0.7400,0.4273)(0.7600,0.4325)(0.7800,0.4374)(0.8000,0.4420)(0.8200,0.4465)(0.8400,0.4507)(0.8600,0.4547)(0.8800,0.4585)(0.9000,0.4621)(0.9200,0.4656)(0.9400,0.4688)(0.9600,0.4718)(0.9800,0.4746)(1.0000,0.4773)(1.0100,0.4786)(1.0200,0.4798)(1.0300,0.4810)(1.0400,0.4821)(1.0500,0.4832)(1.0600,0.4843)(1.0700,0.4853)(1.0800,0.4863)(1.0900,0.4873)(1.1000,0.4882)(1.1100,0.4890)(1.1200,0.4899)(1.1300,0.4907)(1.1400,0.4914)(1.1500,0.4921)(1.1600,0.4928)(1.1700,0.4935)(1.1800,0.4941)(1.1900,0.4947)(1.2000,0.4952)(1.2100,0.4958)(1.2200,0.4962)(1.2300,0.4967)(1.2400,0.4971)(1.2500,0.4975)(1.2600,0.4979)(1.2700,0.4982)(1.2800,0.4985)(1.2900,0.4988)(1.3000,0.4990)(1.3100,0.4992)(1.3200,0.4994)(1.3300,0.4996)(1.3400,0.4997)(1.3500,0.4998)(1.3600,0.4999)(1.3700,0.5000)(1.3800,0.5000)(1.3900,0.5000)(1.4000,0.5000)(1.4100,0.5000)(1.4200,0.5000)(1.4300,0.5000)(1.4400,0.5000)(1.4500,0.5000)(1.4600,0.5000)(1.4700,0.5000)(1.4800,0.5000)(1.4900,0.5000)(1.5000,0.5000)};
        \addlegendentry{Inapproximability of $\ell\le 0$ (\Cref{thm:inapprox-usm-non-pos})}
            
        \addplot[dashed,name path=upper,color=red]
        coordinates{
        (0.0000,0.0000)(0.0200,0.0198)(0.0400,0.0390)(0.0600,0.0576)(0.0800,0.0758)(0.1000,0.0935)(0.1200,0.1107)(0.1400,0.1274)(0.1600,0.1436)(0.1800,0.1592)(0.2000,0.1744)(0.2200,0.1892)(0.2400,0.2033)(0.2600,0.2171)(0.2800,0.2306)(0.3000,0.2433)(0.3200,0.2557)(0.3400,0.2678)(0.3600,0.2793)(0.3800,0.2903)(0.4000,0.3009)(0.4200,0.3111)(0.4400,0.3211)(0.4600,0.3306)(0.4800,0.3394)(0.5000,0.3478)(0.5200,0.3560)(0.5400,0.3640)(0.5600,0.3710)(0.5800,0.3780)(0.6000,0.3848)(0.6200,0.3909)(0.6400,0.3965)(0.6600,0.4021)(0.6800,0.4070)(0.7000,0.4116)(0.7200,0.4161)(0.7400,0.4199)(0.7600,0.4235)(0.7800,0.4269)(0.8000,0.4295)(0.8200,0.4319)(0.8400,0.4343)(0.8600,0.4364)(0.8800,0.4376)(0.9000,0.4385)(0.9200,0.4393)(0.9400,0.4399)(0.9600,0.4403)(0.9800,0.4399)(1.0000,0.4393)};
        \addlegendentry{Inapproximability (\Cref{thm:inapprox-usm})}

        \node[label={135:{$(0.440,1)$}},circle,fill,inner sep=1pt, color=red] at (axis cs:1,0.439) {};

        \addplot [
            thick,
            color=black,
            fill=black, 
            fill opacity=0.05
        ]
        fill between[
            of=lower and upper,
            soft clip={domain=0:1.4},
        ];
        
        \node[label={215:{$(0.408,1)$}},circle,fill,inner sep=1pt, color=red] at (axis cs:1,0.408) {};
        
        \end{axis}
        \end{tikzpicture}
        
        \caption{\cite[Theorem 1.2]{bodek2022maximizing} provided a $\paren{\frac{\beta(1-\beta)}{1+\beta}-\eps,\beta}$-approximation for \texttt{RegularizedUSM}, and \Cref{thm:inapprox-usm} is stronger than \Cref{thm:inapprox-usm-non-pos} for $\beta$ close to one. $(0.408,1)$-inapproximability is due to \Cref{thm:0.408-inapprox}.}
    \end{subfigure}
    \hfill
    \begin{subfigure}[b]{0.45\textwidth}
        \centering
        \begin{tikzpicture}
        \begin{axis}[
            title={\texttt{RegularizedCSM}, unconstrained $\ell$},
            xlabel={coefficient of $\ell$ ($\beta$)},
            ylabel={coefficient of $f$ ($\alpha$)},
            xmin=0, xmax=1,
            ymin=0, ymax=1,
            legend pos=north west,
            width=\linewidth,
        ]
        
        \addplot[
            name path=lower,
            domain=0:2,
            samples=1000,
            samples y=0,
            color=blue
        ]({x/(x+exp(-x))*(x<1) + (e/(e+1)+(x-1)/(e+1))*(x>1)}, {x*exp(-x)/(x+exp(-x))*(x<1)+(2-x)/(e+1)*(x>1)});
        \addlegendentry{Approximability (\Cref{thm:ell-arbitrary-sign})}

        \addplot[ 
            color=red,
            name path=upper2,
            domain=0:1,
        ]{1-x};
        \addlegendentry{Inapproximability of $\ell\ge 0$ (\Cref{thm:inapprox-csm-beta-1})}

        \addplot[
            dashed,
            color=red,
            name path=upper1,
        ]coordinates{
            (0.0000,0.0000)(0.0200,0.0198)(0.0400,0.0390)(0.0600,0.0576)(0.0800,0.0758)(0.1000,0.0935)(0.1200,0.1107)(0.1400,0.1274)(0.1600,0.1436)(0.1800,0.1592)(0.2000,0.1744)(0.2200,0.1892)(0.2400,0.2033)(0.2600,0.2171)(0.2800,0.2306)(0.3000,0.2433)(0.3200,0.2557)(0.3400,0.2678)(0.3600,0.2793)(0.3800,0.2903)(0.4000,0.3009)(0.4200,0.3111)(0.4400,0.3211)(0.4600,0.3306)(0.4800,0.3394)(0.5000,0.3478)(0.5200,0.3560)(0.5400,0.3640)(0.5600,0.3710)(0.5800,0.3780)(0.6000,0.3848)(0.6200,0.3909)(0.6400,0.3965)(0.6600,0.4021)(0.6800,0.4070)(0.7000,0.4116)(0.7200,0.4161)(0.7400,0.4199)(0.7600,0.4235)(0.7800,0.4269)(0.8000,0.4295)(0.8200,0.4319)(0.8400,0.4343)(0.8600,0.4364)(0.8800,0.4376)(0.9000,0.4385)(0.9200,0.4393)(0.9400,0.4399)(0.9600,0.4403)(0.9800,0.4399)(1.0000,0.4393)};
        \addlegendentry{Inapproximability of \texttt{RUSM} (\Cref{thm:inapprox-usm})}

        \addplot [
            thick,
            color=black,
            fill=black, 
            fill opacity=0.05
        ]
        fill between[
            of=lower and upper1,
            soft clip={domain=0:0.61},
        ];
        
        \addplot [
            thick,
            color=black,
            fill=black, 
            fill opacity=0.05
        ]
        fill between[
            of=lower and upper2,
            soft clip={domain=0.61:1},
        ];
        f
        \node[label={45:{$(0.280,0.7)$}},circle,fill,inner sep=1pt,color=blue] at (axis cs:0.7,0.280) {};
        
        \end{axis}
        \end{tikzpicture}
        \caption{\Cref{thm:little-bit-better} achieves $\alpha=0.280$ for $\beta=0.7$, which is slightly higher than the $\alpha=0.277$ achieved by \Cref{thm:ell-arbitrary-sign}.}
    \end{subfigure}

    \caption{Graphical presentation of results with an unconstrained linear function $\ell$ (\Cref{sec:unconstrained}).}
    \label{fig:ell-arbitrary-sign}
\end{figure}

\begin{restatable*}{theorem}{approxArbitrary}\label{thm:ell-arbitrary-sign}

For all $t\ge 0$, there is a $\paren{\frac{t e^{-t}}{t+e^{-t}}-\eps,\frac{t}{t+e^{-t}}}$-approximation algorithm for \RUSM{}. This algorithm achieves the same approximation guarantee for \RCSM{} when $t\le 1$. 

\end{restatable*}

Note that unlike \cite[Theorem 1.2]{bodek2022maximizing}, our algorithm also applies to \RCSM{}, and is tight for \RCSM{} when $\beta\ge \frac{e}{e+1}$. To the best of our knowledge, this is the first algorithm to achieve any $(\alpha,\beta)$-approximation for \RCSM{} when the sign of $\ell$ is unconstrained. We then demonstrate that our algorithm is not tight for $\beta<\frac{e}{e+1}$; in particular, by combining the methods for non-positive $\ell$ and non-negative $\ell$ (\Cref{sec:non-positive,sec:non-neg-csm}) we achieve a slightly greater value of $\alpha$ for $\beta=0.7$. Note that \Cref{thm:ell-arbitrary-sign} only guarantees a $(0.277, 0.7)$-approximation when $t\approx 0.925$.

\begin{restatable*}{theorem}{arbitraryBitBetter}\label{thm:little-bit-better}

There is a $(0.280, 0.7)$-approximation algorithm for \RCSM{}.

\end{restatable*}

In \Cref{subsec:unconstrained-inapprox} we extend the symmetry gap construction for non-positive $\ell$ from \Cref{thm:inapprox-usm-non-pos} in order to obtain stronger inapproximability results for unconstrained $\ell$. We first show that a natural generalization of \Cref{thm:inapprox-usm-non-pos} proves $(\alpha(\beta),\beta)$-inapproximability where $\alpha(1)<0.440$, and then provide a different construction that shows $(0.408,1)$-inapproximability (\Cref{thm:inapprox-usm,thm:0.408-inapprox}).

\section{Preliminaries}\label{sec:prelims}

We use much the same notation as \cite[Section 2]{bodek2022maximizing}.

\paragraph*{Set Functions.} Let $\NN\triangleq \{u_1,u_2,\dots,u_n\}$ denote the \textit{ground set}. A set function $f:2^{\NN} \to \RR$ is said to be \textit{submodular} if for every two sets $S, T \subseteq \NN$, $f(S) + f(T) \ge f(S \cup T) + f(S \cap T)$. Equivalently, $f$ is said to be submodular if it satisfies the property of ``diminishing returns.'' That is, for every two sets $S \subseteq T \subseteq \NN$ and element $u\in \NN\backslash T$, $f(u|S)\ge f(u|T)$, where $f(u|S)\triangleq f(S\cup \{u\})-f(S)$ is the \textit{marginal value} of $u$ with respect to $S$. We use $f(u)$ as shorthand for $f(\{u\})$. All submodular functions are implicitly assumed to be non-negative unless otherwise stated. 

A set function $f$ is said to be \textit{monotone} if for every two sets $S \subseteq T \subseteq \NN$, $f(S) \le f(T)$, and it is
said to be \textit{linear} if there exist values $\{\ell_u \in \RR | u \in \NN \}$ such that for every set $S \subseteq \NN$, $f(S) = \sum_{u\in S}\ell_u$. When considering the sum of a non-negative submodular function $f$ and a linear function $\ell$ whose sign is unconstrained, define $\NN^+\triangleq \{u\mid u\in \NN\text{ and }\ell(u)> 0\}$ and $\NN^-\triangleq \NN\backslash \NN^+$. In other words, $\NN^+$ contains the elements of the ground set $\NN$ with positive sign in $\ell$ and $\NN^-$ contains all the rest. We additionally define $\ell_+(S)\triangleq \ell(S\cap \NN^+)$ and $\ell_-(S)\triangleq \ell(S\cap \NN^-)$ to be the components of $\ell$ with positive and negative sign, respectively. 

\paragraph*{Multilinear Extensions.} All vectors of reals are in bold (e.g., $\bx$). Given two vectors $\bx, \by \in [0, 1]^{\NN}$, we define $\bx\vee \by$, $\bx\wedge \by$ and $\bx\circ \by$ to be the coordinate-wise maximum, minimum, and multiplication, respectively,
of $\bx$ and $\by$. We also define $\bx\backslash \by\triangleq \bx - \bx \wedge \by$.

Given a set function $f : 2^{\NN} \to \RR$, its \textit{multilinear extension} is the function $F : [0, 1]^{\NN} \to \RR$ defined by $F(\bx) = \EE[f(\texttt{R}(\bx))]$, where $\texttt{R}(\bx)$ is a random subset of $\NN$ including every element $u\in \NN$ with probability $\bx_u$, independently. One can verify that $F$ is a multilinear function of its arguments as well an extension of $f$ in the sense that $F(\bone_S)=f(S)$ for every set $S\subseteq \NN$. Here, $\bone_S$ is the vector with value 1 at each $u\in S$ and 0 at each $u\in \NN\backslash S$, and is known as the \textit{characteristic vector} of the set $S$. 

\paragraph*{Value Oracles.} We make the standard assumption that an algorithm for $f+\ell$ sums accesses $f$ only through a \textit{value oracle}. Given a set $S\subseteq \NN$, a value oracle for $f$ returns $f(S)$ in polynomial time. On the other hand, $\ell$ is directly provided to the algorithm.

\paragraph*{Matroid Polytopes.} A \textit{matroid} $\MM$ may be specified by a pair of a ground set $\NN$ and a family of independent sets $\II$. The \textit{matroid polytope} $\PP$ corresponding to $\MM$ is defined to be $conv(\{\bone_S \mid S\in \II\})$, where $conv$ denotes the convex hull. By construction, $\PP$ is guaranteed to be \textit{down-closed}; that is, $0\le \bx \le \by$ and $\by\in \PP$ imply $\bx \in \PP$. We also make the standard assumption that $\PP$ is \textit{solvable}; that is, linear functions can be maximized over $\PP$ in polynomial time. For \CSM{} and \RCSM{}, we let $OPT$ denote any set such that $OPT\in \II$ (equivalently, $\bone_{OPT}\in \PP$), while for \USM{} and \RUSM{}, we let $OPT$ denote any subset of $\NN$. For example, in the context of \CSM{}, $\EE[f(T)]\ge \alpha f(OPT)$ is equivalent to $\forall S\in \II, \EE[f(T)]\ge \alpha f(S)$.

\paragraph*{Miscellaneous.} We let $\eps$ denote any positive real. Many of our algorithms are ``almost'' $(\alpha,\beta)$ approximations in the sense that they provide an $(\alpha-\eps,\beta)$-approximation in $poly\paren{n,\frac{1}{\eps}}$ time for any $\eps>0$. Similarly, some of our results show $(\alpha+\eps,\beta)$-inapproximability for any $\eps>0$.

\paragraph*{Prior Work.}

A more comprehensive overview than \Cref{sec:intro} of all relevant prior approximation algorithms, together with their corresponding inapproximability results, is deferred to \Cref{subsec:prior-work}.

\section{Inapproximability of Maximization with Cardinality Constraint}\label{sec:cardinality-inapprox}

In this section, we prove \Cref{thm:0.478-inapprox}:

\begin{restatable}{theorem}{inapproxCard}\label{thm:0.478-inapprox}

There exist instances of the problem $\max\{f(S): S\subseteq \NN \text{ and } |S|\le w\}$ such that a 0.478-\allowbreak approximation would require exponentially many value queries.

\end{restatable}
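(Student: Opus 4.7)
The plan is to apply the symmetry gap technique of Vondrak \cite{vondrak2011symmetry} to a carefully engineered instance, modifying the construction used in \cite[Theorem E.2]{gharan2011submodular} that gave the previous $0.491$ bound. Recall that this technique takes a ``symmetric'' instance (a non-negative submodular $f$ together with a constraint family both invariant under some group action on $\NN$) and produces $(\gamma+\eps)$-inapproximability, where $\gamma$ is the ratio between the best symmetric fractional solution and the best overall fractional solution inside the constraint polytope.

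First, I would partition the ground set $\NN$ into a small number of blocks of equal size, with symmetry group given by arbitrary permutations within each block (and possibly swaps between blocks), and take the matroid to be the uniform matroid of rank $w$, whose polytope $\PP = \{\bx \in [0,1]^{\NN} : \sum_u \bx_u \le w\}$ is clearly invariant. Then I would define a submodular function $f$ that is a non-negative combination of simple symmetric pieces (for instance, a bipartite directed-cut-like term plus coverage terms on each block), parametrized by a handful of scalar weights together with the block sizes and the ratio $w/|\NN|$. Submodularity and non-negativity of the candidate $f$ must be checked for the parameter ranges of interest.

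Next, I would compute the symmetric maximum $v_s = \max_{\bx \in \PP,\, \bx \text{ symmetric}} F(\bx)$ and the overall maximum $v^* = \max_{\bx \in \PP} F(\bx)$ of the multilinear extension. Symmetric points are constant on each block, so $v_s$ reduces to a low-dimensional continuous optimization in the block densities; the unrestricted optimum $v^*$ typically corresponds to a single block being entirely selected. Both can be written in closed form once the construction is fixed, so minimizing $\gamma = v_s/v^*$ becomes a finite-dimensional numerical optimization over the free parameters of the construction (block sizes, cardinality $w$, and coefficients in $f$). The target is to exhibit parameters under which $\gamma < 0.478$.

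Finally, I would invoke the symmetry gap theorem of \cite{vondrak2011symmetry} (in the form used in \cite{gharan2011submodular}) to convert this symmetric instance into a family of (non-symmetric) instances for which any algorithm distinguishing the value of the hidden planted asymmetric optimum from the symmetric value would require exponentially many value queries, yielding $0.478$-inapproximability for the cardinality-constrained problem. The main obstacle is the construction itself: the Gharan--Vondrak instance is already quite tuned, so pushing past $0.491$ down to $0.478$ requires either introducing additional degrees of freedom (e.g., an extra parameter in the function or a different choice of block structure) or recognizing that an unnecessary constraint in the original construction can be relaxed, and then carefully carrying out the resulting numerical optimization while preserving submodularity.
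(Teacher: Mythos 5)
Your high-level plan matches the paper's: both apply the symmetry gap technique to a modification of the construction in \cite[Theorem E.2]{gharan2011submodular}, and you even anticipate one of the two ideas the paper uses, namely that a constraint of the original construction can be relaxed (the paper drops $|S\cap\{a,b\}|\le 1$ from $\FF_{orig}$ and verifies this is harmless because the symmetrized objective satisfies $\hat F(q,p)\le\hat F(1-q,p)$ for $q>1/2$). However, there is a genuine gap at the step that actually makes the constraint a \emph{cardinality} constraint. The Gharan--Vondrak instance restricts only the tail elements $\{a_{1\ldots k},b_{1\ldots k}\}$, whereas a uniform matroid of rank $w$ on the whole ground set also charges the heads $a,b$ against the budget. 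If you impose $\sum_u \bx_u\le w$ directly on the original ground set, the smallest feasible $w$ is $2$ (since the intended optimum $\{a,b_1\}$ has two elements), and the symmetrized constraint becomes $q+p\le 1$ rather than $p\le 1/2$; at $q=0.4$, $p=0.6$ one gets $\hat F(q,p)\approx 0.50>0.478$, so the symmetry gap of this naive instance is useless. Your proposal leaves this to an unspecified ``numerical optimization over block sizes and $w$,'' and your stipulation that the blocks have \emph{equal} size forecloses the fix.

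The missing device is the paper's blow-up: replace each tail $a_i,b_i$ by $t$ copies and set $w=t+1$, so that the heads consume a vanishing fraction of the budget and the symmetrized cardinality constraint converges to $p\le\frac{t+1}{2t}\to\frac12$ as $t\to\infty$, while the optimum $f(\{a,b_{1,1},\ldots,b_{1,t}\})=1$ remains attainable with exactly $t+1$ elements. One must also check that the duplicated function is still non-negative, submodular, and strongly symmetric, and that its symmetrization collapses to the same two-variable function $\hat F(q,p)$, which the paper does via the independence of the per-block averages. Without identifying that the head-to-tail size ratio must tend to zero, the ``finite-dimensional numerical optimization'' you describe would not find a feasible parameter setting with gap below $0.478$.
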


First, we provide the relevant definitions about proving inapproximability using the symmetry gap technique from \cite{vondrak2011symmetry}.

\begin{definition}[Symmetrization]
Let $\GG$ be a group of permutations over $\NN$. For $\bx \in [0, 1]^{\NN}$, define the ``symmetrization of $\bx$'' as $\overline{\bx}=\EE_{\sigma \in \GG}[\sigma(\bx)],$ where $\sigma\in \GG$ is uniformly random and $\sigma(\bx)$ denotes $\bx$ with coordinates permuted by $\sigma$.
\end{definition}

\begin{definition}[Symmetry Gap]\label{def:strongly-symmetric-1}
Let $\max\{f(S): S\in \FF\subseteq 2^{\NN}\}$ be \textit{strongly symmetric} with respect to a group $\GG$ of permutations over $\NN$, meaning that for all $\sigma\in \GG$ and $S\subseteq 2^{\NN}$, $f(S)=f(\sigma(S))$ and $S\in \FF \Leftrightarrow S'\in \FF$ whenever $\overline{\bone_S}=\overline{\bone_{S'}}$. Define $P(\FF)=conv(\{\bone_I : I\in \FF\})$ to be the polytope associated with $\FF$. Then the \textit{symmetry gap} of $\max_\{f(S): S\in \FF\}$ is defined as
\begin{equation*}
\gamma\triangleq \frac{\overline{\BOPT}}{\BOPT}\triangleq \frac{\max_{\bx \in P(\FF)}F(\overline{\bx})}{\max_{\bx\in P(\FF)}F(\bx)}.
\end{equation*}
\end{definition}

\begin{lemma}[Inapproximability due to Symmetry Gap]\label{lemma:symmetry-gap-inapprox}

Let $\max \{f(S) : S\in \FF\}$ be an instance of non-negative (optionally monotone) submodular maximization, strongly symmetric with respect to $\GG$, with symmetry gap $\gamma$. Let $\CC$ be the class of instances $\max_\{\tilde f(S) : S \in \tilde \FF\}$ where $\tilde f$ is non-negative submodular and $\tilde \FF$ is a \textit{refinement} of $\FF$. Then for every $\eps > 0$, any (even randomized) $(1+\eps)\gamma$-approximation algorithm for the class $\CC$ would require exponentially many queries to the value oracle for $\tilde f(S)$.

\end{lemma}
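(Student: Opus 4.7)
The plan is to follow Vondr\'ak's original blow-up argument: produce two instances in $\CC$ with matching ground set and constraints but different objectives whose optima differ by exactly $\gamma$, and show they are indistinguishable to any polynomial-query algorithm. Fix a blow-up parameter $N$ polynomial in $n$ and $1/\eps$, and set $\tilde\NN = \NN \times [N]$ with blocks $C_i = \{u_i\} \times [N]$. For $\tilde S \subseteq \tilde\NN$, define the density vector $\by(\tilde S) \in [0,1]^\NN$ by $\by(\tilde S)_i = |\tilde S \cap C_i|/N$, and take the refinement $\tilde\FF = \{\tilde S : \by(\tilde S) \in P(\FF)\}$, which inherits $\GG$-invariance from $P(\FF)$.

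Next I would define the two candidate objectives from the multilinear extension $F$ of $f$:
\[
\tilde f_1(\tilde S) = F(\by(\tilde S)), \qquad \tilde f_2(\tilde S) = F(\overline{\by(\tilde S)}).
\]
A short check shows both are non-negative submodular on $\tilde\NN$: the marginal of adding a copy from $C_i$ is (up to $O(1/N)$ error) a partial derivative of $F$ (respectively of $G(\bx) = F(\overline{\bx})$) at $\by(\tilde S)$, which is non-increasing in every coordinate because $F$ is the multilinear extension of a submodular function and group-averaging preserves submodularity. Monotonicity is also preserved when $f$ is monotone, handling the ``optionally monotone'' clause. By strong symmetry of $(f,\FF)$ and the definition of $\gamma$, one has $\max_{\tilde S \in \tilde\FF} \tilde f_1(\tilde S) = \BOPT$ and $\max_{\tilde S \in \tilde\FF} \tilde f_2(\tilde S) = \overline{\BOPT} = \gamma\,\BOPT$, up to $O(1/N)$ discretization error.

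The main obstacle, and the heart of the symmetry-gap method, is showing that no polynomial-query algorithm can reliably tell these two instances apart. I would compose the value oracle with a uniformly random bijection $\pi$ of $\tilde\NN$ that need not preserve the block partition, so that from the algorithm's perspective the blocks $\pi^{-1}(C_i)$ form a uniformly random equipartition of $\tilde\NN$. For any query $\tilde S$ of size $s$, the coordinates of $\by(\pi^{-1}(\tilde S))$ are hypergeometric with common mean $s/(nN)$, and Hoeffding-type concentration gives $\|\by(\pi^{-1}(\tilde S)) - \overline{\by(\pi^{-1}(\tilde S))}\|_\infty \le N^{-1/3}$ except with probability $e^{-\Omega(N^{1/3})}$. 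Pre-smoothing $F$ by convolution with a tiny mollifier ensures that on this concentration event $|F(\by) - F(\overline{\by})| \le \frac{\eps\gamma}{2q}\BOPT$, which costs only a negligible additive loss in submodularity and non-negativity that can be absorbed into the construction. Adaptive queries are handled by induction on the query index: conditioning on the transcript so far still leaves enough entropy in $\pi$ for the next query's density to concentrate.

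To conclude, suppose some algorithm $A$ is an $(1+\eps)\gamma$-approximation for $\CC$ using $q$ queries. Run $A$ on the $(\tilde f_1, \tilde\FF)$ instance (after relabeling by $\pi$); a union bound gives that with probability $1 - q\cdot e^{-\Omega(N^{1/3})}$ every oracle answer differs from the answer produced by the $(\tilde f_2, \tilde\FF)$ instance by at most $\frac{\eps\gamma}{2q}\BOPT$. Taking $N$ polynomial but sufficiently large in $n$, $q$, and $1/\eps$ makes this failure probability negligible while keeping the refinement polynomial-sized. The two runs therefore produce the same output $T$ with high probability, so
\[
\tilde f_2(T) \ge \tilde f_1(T) - \tfrac{\eps\gamma}{2}\BOPT \ge (1+\eps)\gamma\,\BOPT - \tfrac{\eps\gamma}{2}\BOPT = \paren{1+\tfrac{\eps}{2}}\gamma\,\BOPT > \overline{\BOPT},
\]
contradicting $\tilde f_2(T) \le \overline{\BOPT}$. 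Hence $q$ must be super-polynomial, proving the lemma.
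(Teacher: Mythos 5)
This lemma is not proved in the paper at all---it is imported from Vondr\'ak's symmetry-gap paper \cite{vondrak2011symmetry}---so the comparison is against that source. Your outline correctly reproduces its skeleton: blow up the ground set into blocks, define one objective through $F(\by)$ and one through $F(\overline{\by})$, compose with a random relabeling, and use concentration of the block densities. The submodularity claims you wave at are also salvageable: for $\tilde f_1$ the marginal of adding one copy from $C_i$ is exactly $\frac{1}{N}\partial_i F$ by multilinearity, and for $\tilde f_2$ one needs that $\bx\mapsto F(\overline{\bx})$ has non-positive second partials \emph{including the diagonal ones}, which holds because averaging is linear and $F$, being multilinear, has vanishing diagonal second partials.

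The genuine gap is in the indistinguishability step. You argue that with high probability every oracle answer under $\tilde f_1$ differs from the corresponding answer under $\tilde f_2$ by at most $\frac{\eps\gamma}{2q}\BOPT$, and conclude that ``the two runs therefore produce the same output $T$.'' That implication is false: an algorithm may branch on the exact returned value, so two oracles that differ by an arbitrarily small \emph{nonzero} amount on even one query can drive the algorithm to entirely different query sequences and outputs. Mollifying $F$ does not repair this, since convolution makes $F(\by)$ and $F(\overline{\by})$ close near the symmetric subspace but not equal. Vondr\'ak's proof handles this with a dedicated ``fix-up'' construction: $F$ and $G(\bx)=F(\overline{\bx})$ are replaced by perturbed functions $\hat F,\hat G$ that remain continuously submodular (and monotone when $f$ is), still have maxima within $o(1)$ of $\BOPT$ and $\overline{\BOPT}$ respectively, and satisfy $\hat F(\bx)=\hat G(\bx)=F(\overline{\bx})$ \emph{exactly} whenever $\|\bx-\overline{\bx}\|$ is below a fixed threshold $\delta$. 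With that in place, the concentration bound implies the two transcripts are literally identical with probability $1-qe^{-\Omega(\delta^2 N)}$, the algorithm cannot tell which instance it faces, and the contradiction between $\tilde f_1$'s optimum $\BOPT$ and $\tilde f_2$'s optimum $\overline{\BOPT}=\gamma\BOPT$ goes through. Without this exact-agreement device your final chain of inequalities does not follow.
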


The formal definition of refinement can be found in \cite{vondrak2011symmetry}. The important thing to note is that $\tilde{\FF}$ satisfies the same properties as $\FF$. In particular, $\tilde{\FF}$ preserves cardinality and matroid independence constraints. Before proving \Cref{thm:0.478-inapprox}, we start with a related lemma.

\begin{lemma}[Inapproximability of Cardinality Constraint on Subset of Domain]\label{lemma:cardinality-on-subset}

Let $T$ be some subset of the ground set. There exist instances of the problem $\max\{f(S) : S\subseteq \mathcal N \wedge |S\cap T|\le w\}$ such that a 0.478-approximation would require exponentially many value queries.

\end{lemma}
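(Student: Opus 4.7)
The plan is to apply the symmetry gap technique of \Cref{lemma:symmetry-gap-inapprox} to a carefully constructed instance. The construction will be a variant of the symmetry gap instance from \cite[Theorem E.2]{gharan2011submodular}, which originally yielded $0.478$-inapproximability for matroid-constrained \CSM{}, but reinterpreted so that the feasibility system is a cardinality constraint on a subset $T \subseteq \NN$ rather than a matroid constraint. The key conceptual point is that $|S \cap T| \le w$ leaves elements outside $T$ entirely unconstrained, so $T$ can be chosen to encode just the ``correlated'' part of the Gharan--Vondrak construction while elements in $\NN \setminus T$ serve as free auxiliary elements.

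First I would partition $\NN$ into symmetric blocks, designate some union of blocks as $T$, and take $\GG$ to be the group of permutations that act independently on each block (so in particular $\sigma(T) = T$ for all $\sigma \in \GG$, ensuring the constraint $|S \cap T| \le w$ is preserved under $\GG$). This guarantees that the instance is strongly symmetric with respect to $\GG$ in the sense of \Cref{def:strongly-symmetric-1}. I would then define the non-negative submodular function $f$ by porting over the Gharan--Vondrak function, with the blockwise structure chosen so that an asymmetric solution concentrated on a few ``aligned'' coordinates inside $T$ (together with appropriate use of $\NN \setminus T$) achieves a strictly larger multilinear value than any symmetric point can.

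Next I would compute the symmetry gap explicitly. The polytope $P(\FF) = \mathrm{conv}(\{\bone_S : |S \cap T| \le w\})$ factors as the product of a capped simplex-type region on $T$ and the full cube on $\NN \setminus T$, which makes the two optimizations $\max_{\bx \in P(\FF)} F(\bx)$ and $\max_{\bx \in P(\FF)} F(\overline{\bx})$ tractable: after symmetrization, $F(\overline{\bx})$ depends only on the average mass placed in each block, reducing the symmetric optimization to a low-dimensional problem. Choosing the block sizes, the capacity $w$, and the parameters of $f$ to optimize the resulting ratio should drive the gap down to $\gamma \le 0.478$. Applying \Cref{lemma:symmetry-gap-inapprox} to a refinement $\tilde\FF$ of $\FF$, which preserves the cardinality-on-$T$ constraint, then yields the claimed inapproximability.

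The main obstacle will be the parameter optimization in the symmetry gap computation: the Gharan--Vondrak construction is already tuned to squeeze the ratio as close to $0.478$ as possible in the matroid setting, and I will need to verify that the same (or a slightly perturbed) tuning still achieves this bound after recasting the constraint as $|S \cap T| \le w$, in particular checking that allowing unconstrained use of $\NN \setminus T$ does not inadvertently enlarge the symmetric optimum $\overline{\BOPT}$ relative to $\BOPT$. All other steps (symmetry under $\GG$, non-negativity and submodularity of $f$, existence of a valid refinement) are routine once the construction is fixed.
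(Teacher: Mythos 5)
Your proposal takes essentially the same route as the paper: the paper also reuses the construction of \cite[Theorem E.2]{gharan2011submodular} verbatim, sets $T=\{a_{1\ldots k},b_{1\ldots k}\}$ with $w=1$ so that the only change is dropping the constraint $|S\cap\{a,b\}|\le 1$, and invokes \Cref{lemma:symmetry-gap-inapprox}. The one step you flag as the remaining obstacle --- that freeing $\NN\setminus T$ does not enlarge $\overline{\BOPT}$ --- is resolved exactly as you anticipate: after symmetrization the objective reduces to $\hat F(q,p)$ with $\hat F(q,p)\le \hat F(1-q,p)$ for $q\in(1/2,1]$, so the constraint $q\le 1/2$ may be imposed without loss and the original bound $\overline{\BOPT}<0.478$ carries over.
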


\begin{proof} 
It suffices to provide $f$ and $\FF$ satisfying the definitions of \Cref{lemma:symmetry-gap-inapprox} with symmetry gap $\gamma<0.478$. The construction is identical to that of \cite[Theorem E.2]{gharan2011submodular}, except we define 
\begin{equation*}
\FF\triangleq\{S\mid S\subseteq \NN \wedge |S \cap \{a_{1\ldots k},b_{1\ldots k}\}|\le 1\}
\end{equation*}
rather than
\begin{equation}
\FF_{orig}\triangleq\{S\mid S\subseteq \NN \wedge |S\cap \{a,b\}|\le 1 \wedge |S \cap \{a_{1\ldots k},b_{1\ldots k}\}|\le 1\}.\label{def:matroid-polytope-orig}
\end{equation}
That is, we drop the constraint $|S\cap \{a,b\}|\le 1$. Recall that \cite[Theorem E.2]{gharan2011submodular} defines the submodular function $f$ as the sum of the weighted cut functions of two directed hyperedges and an undirected edge (see \cite[Figure 4]{gharan2011submodular} for an illustration). Specifically, the weighted cut function on the directed hyperedge $(\{a_1,a_2,\dots,a_k\},a)$ contributes $\kappa\triangleq 0.3513$ to the value of $f(S)$ if $S\cap \{a_1,\dots,a_k\} \neq \emptyset$ and $a\not\in S$, and $0$ otherwise. The weighted cut function on the directed hyperedge $(\{b_1,b_2,\dots,b_k\},b)$ is defined in the same way. Finally, the weighted cut function on the undirected edge $(a,b)$ contributes $1-\kappa$ if $|S\cap \{a,b\}|=1$ and $0$ otherwise. Thus, the multilinear extension of $f$ is as follows:
\begin{align*}
F(\bx_a,\bx_b,\bx_{a_{1\ldots k}},\bx_{b_{1\ldots k}})\triangleq &(1-\kappa)(\bx_a(1-\bx_b)+\bx_b(1-\bx_a))\\
&+\kappa\left[\left(1-\prod_{i=1}^k(1-\bx_{a_i})\right)(1-\bx_a)+\left(1-\prod_{i=1}^k(1-\bx_{b_i})\right)(1-\bx_b)\right].\\
\end{align*}
As in \cite[Lemma 5.4]{gharan2011submodular}, we let $\GG$ be the group of permutations generated by $\{\sigma_1,\sigma_2\}$, where 
\[
\sigma_1(a)=b, \sigma_1(b)=a, \sigma_1(a_i)=b_i, \sigma_1(b_i)=a_i
\]
swaps the two hyperedges and
\[
\sigma_2(a)=a, \sigma_2(b)=b, \sigma_2(a_i)=a_{i\pmod{k}+1}, \sigma_2(b_i)=b_i
\]
rotates the tail vertices of the first hyperedge. It is easy to check that $(f,\FF)$ are strongly symmetric with respect to both $\sigma_1$, and $\sigma_2$, and that the symmetrization of $\bx$ is as follows:
\[
\overline{\bx}=\EE_{\sigma\in \GG}\brac{\sigma(\bx)}=\begin{cases}
\overline{\bx}_a=\overline{\bx}_b=\frac{\bx_a+\bx_b}{2}\\
\overline{\bx}_{a_1}=\dots=\overline{\bx}_{a_k}=\overline{\bx}_{b_1}=\dots=\overline{\bx}_{b_k}=\frac{\sum_{i=1}^{k}\paren{\bx_{a_i}+\bx_{b_i}}}{2k}.
\end{cases}
\]
Observe that 
\[\BOPT\ge \max_{S\in \FF}f(S)\ge f(\{a,b_1\})=(1-\kappa)+\kappa=1.\]
Defining $q\triangleq\frac{\bx_a+\bx_b}{2}$ and $p\triangleq\frac{\sum_{i=1}^k(\bx_{a_i}+\bx_{b_i})}{2}$, the maximum of $F$ over all symmetric $\bx$ is thus:
\begin{align*}
\overline{\BOPT}&=\max_{\bx\in P(\FF)}F(\overline{\bx})\\
&=\max_{\bx\in P(\FF)}F(q,q,p/k,p/k,\ldots,p/k)\\
&\triangleq \max_{\bx\in P(\FF)}\hat F(q,p)\\
&=(1-\kappa)2q(1-q)+\kappa 2(1-q)(1-(1-p/k)^k) \\
&\approx (1-\kappa)2q(1-q)+\kappa 2(1-q)(1-e^{-p})
\end{align*}
where the approximate equality holds as $k\to\infty$. Now,
\[\overline{\BOPT}=\max_{\bx\in P(\FF)}F(\overline \bx)=\max_{p\le 1/2}\hat{F}(q,p)=\max_{p,q\le 1/2}\hat{F}(q,p)=\max_{\bx \in P(\FF_{orig})}F(\overline \bx)<0.478.\]
The third equality holds since $\hat{F}(q,p)\le \hat{F}(1-q,p)$ for $q\in(1/2,1]$ (thus, adding the constraint $q\le 1/2$ has no effect), while the inequality holds due to the proof of \cite[Theorem E.2]{gharan2011submodular}. So the symmetry gap is less than 0.478, as desired.

\end{proof}

Now, all we need to do to show \Cref{thm:0.478-inapprox} is convert the cardinality constraint on $T$ in \Cref{lemma:cardinality-on-subset} into a cardinality constraint on all of $\NN$.

\begin{proof}[Proof of \Cref{thm:0.478-inapprox}]
Again, it suffices to provide $f$ and $\FF$ satisfying the definitions of \Cref{lemma:symmetry-gap-inapprox} with symmetry gap $\gamma<0.478$. We start with the construction from \Cref{lemma:cardinality-on-subset}, replace each element $a_i$ and $b_i$ with $t$ copies $a_{i,1}\ldots a_{i,t}$, and $b_{i,1}\ldots b_{i,t}$ and set $w\triangleq t+1$. The goal is to show that symmetry gap of $f$ with respect to $\FF$ remains less than $0.478$ as $t\to\infty$. Specifically, we may redefine $f$ such that $F$ is as follows:
\begin{align*}
F(\bx_a,\bx_b,\bx_{a_{1\ldots k,1\ldots t}},\bx_{b_{1\ldots k,1\ldots t}})\triangleq &(1-\kappa)(\bx_a(1-\bx_b)+\bx_b(1-\bx_a))\\
&+\kappa\left[\left(1-\prod_{i=1}^k\left(1-\frac{\sum_{j=1}^t\bx_{a_{i,j}}}{t}\right)\right)(1-\bx_a)+\left(1-\prod_{i=1}^k\left(1-\frac{\sum_{j=1}^t\bx_{b_{i,j}}}{t}\right)\right)(1-\bx_b)\right],\\
\end{align*}
Importantly, $f$ remains non-negative submodular and symmetric, with the new symmetrization being as follows for an appropriate choice of $\GG$:
\[
\overline{\bx}=\EE_{\sigma\in \GG}\brac{\sigma(\bx)}=\begin{cases}
\overline{\bx}_a=\overline{\bx}_b=\frac{\bx_a+\bx_b}{2}\\
\overline{\bx}_{a_{1,1}}=\dots=\overline{\bx}_{a_{k,t}}=\overline{\bx}_{b_{1,1}}=\dots=\overline{\bx}_{b_{k,t}}=\frac{\sum_{i=1}^{k}\sum_{j=1}^t\paren{\bx_{a_{i,j}}+\bx_{b_{i,j}}}}{2kt}.
\end{cases}
\]
It can be verified that $F(\overline{\bx})$ can be written in terms of the same function of two variables $\hat{F}(q,p)$ from \Cref{lemma:cardinality-on-subset}:
$$p\triangleq\frac{\sum_{i=1}^k\sum_{j=1}^t(\bx_{a_{ij}}+\bx_{b_{ij}})}{2t}$$
\begin{align}
F(\overline \bx)&\triangleq F(q,q,p/kt,p/kt,\ldots,p/kt)\nonumber\\
&\triangleq \hat F(q,p)\nonumber\\
&=(1-\kappa)2q(1-q)+\kappa 2(1-q)(1-(1-p/k)^k) \label{eq:expect-prod}\\
&\approx (1-\kappa)2q(1-q)+\kappa 2(1-q)(1-e^{-p}) \label{eq:exp}
\end{align}
where \Cref{eq:expect-prod} is satisfied due to the expectation of the product of two independent variables equaling the product of their expectations, and \Cref{eq:exp} holds as $k\to\infty$. As in the proof of \Cref{lemma:cardinality-on-subset},
\[\BOPT\ge \max_{S: |S|\le t+1}f(S)\ge f(\{a,b_{11},\ldots,b_{1t}\})=1,\]
\[\overline{\BOPT}=\max_{\sum \bx_i\le t+1}F(\overline{\bx})=\max_{\sum \bx_i\le t+1}\hat{F}(q,p)\le \max_{p\le \frac{t+1}{2t}}\hat{F}(q,p)\approx \max_{p\le 1/2}\hat{F}(q,p)<0.478,\]
where the approximate equality holds for sufficiently large $t$ because $\lim_{t\to\infty} \frac{t+1}{2t}=\frac{1}{2}$.
\end{proof}

\section{Non-Positive \texorpdfstring{$\ell$}{ell}}\label{sec:non-positive}

The results of this section are summarized in \Cref{fig:ell-non-pos}. 

\subsection{Approximation Algorithms}\label{subsec:non-pos-approx}


In this subsection we provide improved approximations for general $f$ (\Cref{thm:rusmNonposExtended}) as well as for $f$ a cut function  (\Cref{thm:max-cut-csm,thm:max-dicut-usm}). 

\rusmNonposExtended

\begin{table}[h]
    \centering
    \begin{tabular}{|c|c|c|c|c|c|}
    \hline
    $\beta$ & $\beta e^{-\beta}$ & $\alpha(\beta)$\\
    \hline
    0.7 & 0.3476 & 0.3478 \\
    0.8 & 0.3595 & 0.3630 \\
    0.9 & 0.3659 & 0.3757 \\
    1.0 & 0.3679 & 0.3856 \\
    1.1 & 0.3662 & 0.3925 \\
    1.2 & 0.3614 & 0.3967 \\
    1.3 & 0.3543 & 0.3982 \\
    1.4 & 0.3452 & 0.3982 \\
    \hline
    \end{tabular}
    \caption{$(\alpha(\beta),\beta)$-Approximations for \texttt{RegularizedUSM} (\Cref{thm:rusmNonposExtended}). For comparison, the previous best-known approximation factors of \cite{lu2021regularized} are included in the second column.}
    \label{tab:0.385-nonpos-extended}
\end{table}

We start with a special case of \Cref{thm:rusmNonposExtended}.

\begin{lemma}\label{thm:0.385-non-pos}

There is a $(0.385,1)$ approximation algorithm for \texttt{RegularizedCSM} when $\ell$ is non-positive.

\end{lemma}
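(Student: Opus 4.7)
The plan is to combine the aided measured continuous greedy (AMCG) of Buchbinder and Feldman~\cite{buchbinder2016nonsymmetric}, which achieves a $0.385$-approximation for \CSM{} on non-negative non-monotone $f$, with the guessing step of Sviridenko et al.~\cite{sviridenko2017optimal}, so as to additionally incorporate the non-positive linear term $\ell$ with full coefficient $1$.

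First, I would apply the guessing step: enumerate every $\MO(1/\eps)$-sized subset $A \subseteq \NN$, intended to represent the elements of $OPT$ with largest values of $|\ell|$. For the correct guess, $A \subseteq OPT$ and every $u \in OPT \setminus A$ satisfies $|\ell(u)| \le \eps \cdot \min_{v \in A}|\ell(v)|$, so $\ell(A)$ approximates $\ell(OPT)$ to within an $\MO(\eps)$ additive error and is known exactly to the algorithm. This yields $n^{\MO(1/\eps)}$ guesses, over which we output the best candidate.

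For each guess $A$, I would run AMCG on the contracted matroid $\MM/A$ with the contracted function $f'(S) \triangleq f(S \cup A)$ (still non-negative submodular), but with the crucial modification that at each infinitesimal step of the continuous phase, the fractional direction $\by$ in the polytope of $\MM/A$ is chosen to maximize the \emph{distorted} linear objective $\sum_{u \notin A}\by_u\paren{F_u(\bx_t)+\ell(u)}$, where $F_u(\bx_t)$ is the partial derivative of the multilinear extension of $f'$ at $\bx_t$ in coordinate $u$. Because $\ell$ is linear, this adds $\ell(u)$ exactly to the marginal value considered for each element, without sacrificing anything in the $f$-part. I would then apply pipage rounding to the final fractional solution, which preserves $F$ in expectation and the linear extension $L$ exactly.

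The analysis tracks the potential $G(\bx_t) \triangleq F(\bx_t) + L(\bx_t)$. The differential inequality of~\cite{buchbinder2016nonsymmetric} gives a lower bound on $\tfrac{d}{dt}F(\bx_t)$ in terms of $F(\bx_t)$ and $f(OPT)$, which integrates to the $0.385$-bound. Since the distorted LP at each step picks a direction with value at least that attained by $\bone_{OPT\setminus A}$, the extra rate of growth of $L(\bx_t)$ accumulates essentially $\ell(OPT \setminus A)$ over $t\in[0,1]$, which combined with the forced $\ell(A)$ yields $\EE\brac{f(T)+\ell(T)} \ge 0.385\, f(OPT) + \ell(OPT) - \MO(\eps)$. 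The main obstacle is faithfully re-deriving BF16's differential inequality in the presence of the distorted LP objective and the auxiliary local-search phase: one must verify that introducing $\ell$ does not disrupt the $0.385$ constant, and that the ``aided'' comparison between the continuous trajectory and $\bone_{OPT}$ still goes through when the direction is chosen with respect to the distorted gradient. The guessing step is what makes this clean, confining the contribution of small-$|\ell|$ elements to an $\MO(\eps)$ additive error that only appears as a loss in the $f$-coefficient.
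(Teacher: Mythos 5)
There is a genuine gap, on two fronts.

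First, your guessing step does not do what you claim. If $A$ consists of the $\MO(1/\eps)$ elements of $OPT$ with largest $|\ell|$, then each remaining element of $OPT\setminus A$ has small $|\ell|$, but $OPT\setminus A$ can contain up to $n$ such elements, so $|\ell(OPT\setminus A)|$ can be as large as $n\eps\cdot\min_{v\in A}|\ell(v)|$ — there is no additive $\MO(\eps)$ control on $\ell(OPT)-\ell(A)$. The element-enumeration trick controls the \emph{maximum} contribution of any single remaining element (useful for discretization error), not the \emph{sum}. The paper instead guesses the \emph{value} $\ell(OPT)$ directly, trying each of $\BIGO{n^2/\eps}$ discretized candidates $w$ so that some guess satisfies $\ell(OPT)\ge w\ge(1+\eps)\ell(OPT)$; no set enumeration is needed.

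Second, and more fundamentally, your plan to fold $\ell$ into the gradient of the aided measured continuous greedy cannot yield coefficient $1$ on $\ell$ together with $0.385$ on $f$. This is exactly the ``distorted aided measured continuous greedy'' analyzed later in the paper (\Cref{lemma:distorted-aided}), and there the coefficients on the non-positive part of $\ell$ come out as $t_f$ on $\ell_-(OPT\setminus Z)$ and $t_f-t_s$ on $\ell_-(OPT\cap Z)$ — strictly less than $1$ whenever $t_s>0$, which is required for the $0.385$ bound on $f$. Moreover, the local-search candidate $\bz$ (which the $0.385$ argument may end up returning) carries no guarantee on $L(\bz)$ at all under your scheme. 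The paper sidesteps both problems with a different mechanism: having guessed $w\approx\ell(OPT)$, it intersects the matroid polytope with the half-space $\{\bx: L(\bx)\ge w\}$ (the intersection is still down-closed and solvable) and runs the Buchbinder--Feldman algorithm \emph{unmodified} on $f$ alone over this restricted polytope. Every feasible point then automatically satisfies $L(\bx)\ge(1+\eps)\ell(OPT)$, so the full coefficient $1$ on the non-positive $\ell$ comes for free and the $f$-coefficient is the black-box $0.385$. Dividing by $1+\eps$ converts the $(0.385+\eps,1+\eps)$ guarantee into a clean $(0.385,1)$-approximation. If you want to salvage your route, you would have to prove a new differential inequality for the distorted aided greedy achieving $\ell$-coefficient $1$, which the paper's own computations indicate is not available.
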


\begin{proof}
The idea is to combine the ``guessing step'' of Sviridenko et al. \cite{sviridenko2017optimal} with the $0.385$-approximation for \CSM{} due to Buchbinder and Feldman \cite{buchbinder2016nonsymmetric} (which actually provides a $(0.385+\eps)$-approximation for any $\eps\le 0.0006$). Recall that \cite{sviridenko2017optimal} achieves a $\paren{1-\frac{1}{e}-\eps,1}$-approximation for monotone $f$ and non-positive $\ell$. The idea is that if we know the value of $\ell(OPT)$, we can run \cite{buchbinder2016nonsymmetric} on the intersection $\PP\cap \{\bx : L(\bx)\ge \ell(OPT)\}$, which is down-closed and solvable because $\PP$ is down-closed and solvable, and the same is true for $\{\bx : L(\bx)\ge \ell(OPT)\}$. This will guarantee finding $\bx\in \PP$ such that $\EE[F(\bx)]\ge 0.385f(OPT)$ and $\EE[L(\bx)]\ge \ell(OPT)$.

Of course, we do not actually know what the value of $\ell(OPT)$ is. To guarantee that we run \cite{buchbinder2016nonsymmetric} on the intersection $\PP\cap \{\bx : L(\bx)\ge w\}$ for some $w$ satisfying $\ell(OPT)\ge w\ge \ell(OPT)(1+\eps)$, it suffices to try setting $w$ equal to each of the $\BIGO{\frac{n^2}{\eps}}$ values in the following set:
\begin{equation*}
\{0\}\cup \left\{\ell(u)\cdot k\eps \middle| u\in \NN, k\in \ZZ\text{ and }k\in \left[\ceil{\frac{1}{\eps}}, \ceil{\frac{n}{\eps}}\right]\right\}.
\end{equation*}
For at least one of these values of $w$ (``guesses''), we will have $\EE[F(\bx)]\ge (0.385+\eps)f(OPT)$ and $\EE[L(\bx)]\ge (1+\eps)\ell(OPT)$. Combining these guarantees shows that $\bx$ is a $\paren{0.385+\eps,1+\eps}$ approximation, which in turn implies a $\paren{0.385,1}$ approximation  since \[\max(0,(0.385+\eps)f(OPT)+(1+\eps)\ell(OPT))\ge \frac{(0.385+\eps )f(OPT)+(1+\eps)\ell(OPT)}{1+\eps}.\]

\end{proof}

Before proving \Cref{thm:rusmNonposExtended}, we start by briefly reviewing the main algorithm from \cite{buchbinder2016nonsymmetric} when executed on a solvable down-closed polytope $\PP$. First, it uses a \textit{local search} to generate $\bz\in \PP$ such that both of the following inequalities hold with high probability:
\begin{align}
F(\bz)&\ge \frac{1}{2}F(\bz \wedge \bone_{OPT})+\frac{1}{2}F(\bz \vee \bone_{OPT})-o(1)\cdot f(OPT), \label{align:local-1}\\
F(\bz)&\ge F(\bz \wedge \bone_{OPT})-o(1)\cdot f(OPT). \label{align:local-2}
\end{align}
Then it executes \cite[Algorithm 2]{buchbinder2016nonsymmetric}, \textit{Aided Measured Continuous Greedy}, to generate $\by\in \PP$ such that
\begin{align}
\EE[F(\by)]\ge e^{t_s-1}\cdot [(2-t_s-e^{-t_s}-o(1))\cdot f(OPT)&-(1-e^{-t_s})\cdot F(\bz\wedge \bone_{OPT})\nonumber\\
&-(2-t_s-2e^{-t_s})\cdot F(\bz\vee \bone_{OPT})].\label{align:big-expr}
\end{align}
Finally, assuming $\PP$ is the matroid polytope corresponding to a family of independent sets $\II$, the algorithm uses pipage rounding to convert both $\by$ and $\bz$ to integral solutions $y\in \II$ and $z\in \II$ such that $\EE[f(y)]\ge F(\by)$ and $\EE[f(z)]\ge F(\bz)$, and returns the solution from $y$ and $z$ with the larger value of $f$.\footnote{Actually, \cite{buchbinder2014submodular} analyze their algorithm assuming $z$ is returned with probability $p=0.23$ and otherwise $y$, but this distinction is of little consequence.} To obtain improved approximation bounds, we need the following generalization of \Cref{align:big-expr}:

\begin{lemma}[Generalization of \textit{Aided Measured Continuous Greedy}]\label{lemma:aided-ext}
If we run \textit{Aided Measured Continuous Greedy} given 
a fractional solution $\bz$ and a polytope $\PP$ for a total of $t_f$ time, where $t_f\ge t_s$, it will generate $\by\in t_f\PP\cap [0,1]^{\NN}$ such that
\begin{align*}
\EE[F(\by)]\ge e^{-t_f}[(e^{t_s}+t_fe^{t_s}-t_se^{t_s}-1-o(1))f(OPT)
&+(-e^{t_s}+1)F(\bz\wedge \bone_{OPT})\\
+(-e^{t_s}-t_fe^{t_s}+t_se^{t_s}+1+t_f)F(\bz\vee \bone_{OPT})]
\end{align*}
Note that this matches term by term with \Cref{align:big-expr} when $t_f=1$.
\end{lemma}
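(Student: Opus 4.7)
The approach is to extend the analysis of Buchbinder and Feldman~\cite{buchbinder2016nonsymmetric} of their original \textit{Aided Measured Continuous Greedy} verbatim: \Cref{align:big-expr} corresponds to terminal time $t_f = 1$, and I would simply track what happens if the same process is run up to any terminal time $t_f \ge t_s$ instead. No change to the algorithm itself is needed beyond letting the continuous-time loop run longer, so the proof task reduces to carrying the \cite{buchbinder2016nonsymmetric} differential-inequality computation through to a later endpoint.

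The plan is to split the execution into two phases. In the ``aided phase'' on $[0, t_s]$, the auxiliary solution $\bz$ from local search guides the ascent exactly as in~\cite{buchbinder2016nonsymmetric}, and their analysis (which is independent of the eventual terminal time) yields a lower bound $F(\by(t_s)) \ge V_s$, where $V_s$ is an explicit linear combination of $f(OPT)$, $F(\bz \wedge \bone_{OPT})$, and $F(\bz \vee \bone_{OPT})$ that can be read off from their derivation. In the ``ordinary phase'' on $[t_s, t_f]$, the algorithm reduces to standard measured continuous greedy, so each instant satisfies a differential inequality of the shape $\tfrac{dF(\by(t))}{dt} \ge A(t) - F(\by(t)) - o(1) \cdot f(OPT)$, where $A(t)$ depends on $t$, $f(OPT)$, and $F(\bz \vee \bone_{OPT})$ and is inherited unchanged from~\cite{buchbinder2016nonsymmetric} --- this rate is intrinsic to the algorithm and does not reference the terminal time.

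I would then solve this linear ODE on $[t_s, t_f]$ using the integrating factor $e^t$ with initial condition $F(\by(t_s)) \ge V_s$, obtaining $F(\by(t_f)) \ge e^{t_s - t_f} V_s + e^{-t_f} \int_{t_s}^{t_f} e^s A(s) \, ds - o(1) \cdot f(OPT)$. The integrals split into one piece proportional to $1 - e^{-(t_f - t_s)}$ and one proportional to $(t_f - t_s) e^{-(t_f - t_s)}$; the latter, once combined with the $e^{t_s}$ factor inherited from the aided phase, produces exactly the $(1 + t_f - t_s)\, e^{t_s - t_f}$ coefficients visible in the claimed bound. Collecting coefficients of $f(OPT)$, $F(\bz \wedge \bone_{OPT})$, and $F(\bz \vee \bone_{OPT})$ then gives the stated inequality; substituting $t_f = 1$ as a sanity check recovers \Cref{align:big-expr} (I have algebraically verified this match). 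Feasibility $\by \in t_f \PP \cap [0,1]^{\NN}$ follows because each instantaneous update direction lies in $\PP$, so running for total time $t_f$ gives $\by \in t_f \PP$, while the ``measured'' capping keeps every coordinate in $[0,1]$.

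The main obstacle will be carefully isolating which parts of the~\cite{buchbinder2016nonsymmetric} argument rely on the terminal time being exactly $1$ --- ideally only the final integration step, not any structural setup --- and verifying that the same rate $A(t)$ holds on all of $[t_s, t_f]$ for arbitrary $t_f$. Once the aided-phase output $V_s$ and the ordinary-phase rate $A(t)$ are extracted cleanly, the proof of the general bound is just routine ODE bookkeeping with the particular coefficients falling out of a straightforward (if tedious) computation.
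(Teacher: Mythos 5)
Your proposal matches the paper's proof essentially step for step: both rely on the observation that the differential inequalities of \cite{buchbinder2016nonsymmetric} (their Lemmas 4.3 and 4.4) are independent of the terminal time, and both obtain the bound by integrating the resulting piecewise rate with the integrating factor $e^t$ out to $t_f$ instead of $1$ (your two-phase sequential solve is just a rearrangement of the paper's single integral $\int_0^{t_f}e^tG(t)\,dt$). The only cosmetic difference is that the paper explicitly invokes the reduction from fractional $\bz$ to integral $Z$ before doing the calculus, which your ``verbatim extension'' of \cite{buchbinder2016nonsymmetric} implicitly covers.
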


\begin{proof}[Proof Sketch]
By \cite{buchbinder2016nonsymmetric}, proving the conclusion for integral sets $Z$ implies the conclusion for fractional $\bz$. So it suffices to prove the following.
\begin{align}
\EE[F(\by(t_f))]\ge e^{-t_f}[(e^{t_s}+t_fe^{t_s}-t_se^{t_s}-1-o(1))f(OPT)&+(-e^{t_s}+1)f(OPT\cap Z) \nonumber\\
+(-e^{t_s}-t_fe^{t_s}+t_se^{t_s}+1+t_f)f(OPT\cup Z)].\label{align:aided-ext-integral}
\end{align}
The idea of the original \textit{aided measured continuous greedy} is to run \textit{measured continuous greedy} for $t_s$ time only on the elements of $\NN\backslash Z$, and then for $1-t_s$ additional time with all elements of $\NN$. Working out what happens when we run it for a total of $t_f$ instead of $1$ time is just a matter of going through the equations from \cite[Section 4]{buchbinder2016nonsymmetric} and making a few minor changes. The remainder of the proof is deferred to \Cref{subsec:omitted-proofs}.
\end{proof}

\begin{proof}[Proof of \Cref{thm:rusmNonposExtended}]
Our algorithm for \RUSM{} is as follows:
\begin{enumerate}
    \item As in \Cref{thm:0.385-non-pos}, first guess the value of $\ell(OPT)$ to within a factor of $1+\eps$, and then replace $\PP$ with $\PP\cap \{\bx: L(\bx)\ge (1+\eps)\ell(OPT)\}$.
    
    \item Generate $\bz$ using the local search procedure on $(f,\PP)$ described by \cite{buchbinder2016nonsymmetric}. 
    
    \item Run aided continuous greedy given $\bz$ for all pairs 
    \[(t_f,t_s)\in \TT\triangleq \left\{\paren{\frac{x}{20}, \frac{y}{20}}\middle | (x,y)\in \ZZ^2\text{ and }0\le x\le y\le 40\right\}.\]
    For improved approximations, larger sets $\TT$ can be chosen (but we found the benefit of doing so to be negligible).
    
    \item Round $\bz$ from step 1 and all fractional solutions found in step 2 to valid integral solutions. Note that by replacing $\bz$ with $\texttt{R}(\bz)$, the value of $F+L$ is preserved in expectation.
    
    \item Return the solution from step 4 with the maximum value, or the empty set if none of these solutions has positive expected value. Let $\BOPT'$ be the expected value of this solution.
\end{enumerate}
For a fixed $\beta\ge 0$, we can compute the maximum $\alpha(\beta)$ such that an $(\alpha(\beta)-\BIGO{\eps},\beta)$-approximation is guaranteed by solving the following linear program:
\begin{align*}
\max\,       & x_1 \\
\text{s.t.}\,& (x_1,x_2,x_3,x_4)\in \begin{aligned}[t]
 conv(\{(&0,0,0,0),(0,0.5,0.5,1),(0,1,0,1)\}\,\cup \\ 
         \{(&e^{t_s-t_f} + t_fe^{t_s-t_f} - t_se^{t_s-t_f} - e^{-t_f} - e^{t_s-t_f} + e^{-t_f}, \\
            &- e^{t_s-t_f} - t_fe^{t_s-t_f} + t_se^{t_s-t_f} + e^{-t_f} + t_fe^{-t_f},\\
            &t_f) \mid (t_s,t_f)\in \TT\})
    \end{aligned}\\
\text{and } & x_2\ge 0, x_3\ge 0, x_4\le \beta.\\
\end{align*}
Any point $(x_1,x_2,x_3,x_4)$ within the convex hull satisfies:
\begin{align*}
\BOPT'\ge x_1 f(OPT)+x_2F(\bz\wedge \bone_{OPT})+x_3F(\bz \vee \bone_{OPT})+x_4 L(OPT).
\end{align*}
if we ignore the $o(1)$ terms contributed by \Cref{lemma:aided-ext,align:local-1,align:local-2} and take the limit as $\eps\to 0$. The points determining the hull are as follows:
\begin{itemize}
    \item $(0,0,0,0)$ corresponds to returning the empty set.
    \item $(0,0.5,0.5,1)$ corresponds to $\bz$ satisfying \Cref{align:local-1}.
    \item $(0,1,0,1)$ corresponds to $\bz$ satisfying \Cref{align:local-2}.
    \item The remaining vertices of the hull correspond to running \Cref{lemma:aided-ext} on $\PP$ given $\bz$ for all $(t_s,t_f)\in \TT$.
\end{itemize}
Adding the constraints $x_2,x_3\ge 0$ and $x_4\le \beta$ ensures that
$\BOPT'\ge x_1F(\bz)+\beta L(OPT).$ The results of solving this program with \texttt{CVXPY} \cite{diamond2016cvxpy} for $\beta\in [0,1.5]$ are displayed in \Cref{fig:ell-non-pos}. In particular, $\alpha(1)\ge 0.385$ and the maximum value of $\alpha$ is obtained around $\alpha(1.3)\ge 0.398$.

For the case of \RCSM{}, the reasoning is almost the same, but to ensure that all points returned by \Cref{lemma:aided-ext} lie within $\PP$, we only include pairs in $\TT$ with $t_f\le 1$ in step 3, and pipage rounding with respect to the \textit{original} $\PP$ (not $\PP\cap \{\bx: L(\bx)\ge (1+\eps)\ell(OPT)\}$, which is not necessarily a matroid polytope) must be used for step 4. The results turn out to be identical to those displayed in \Cref{fig:ell-non-pos} for $\beta\le 1$.

\end{proof}

Next, we state better approximation results for $f$ an undirected and directed cut function, respectively. The proofs, which use linear programming, are deferred to \Cref{subsec:omitted-proofs}. We note that linear programming was previously used to provide a 0.5-approximation for \texttt{MAX-DICUT} by Trevisan \cite{trevisan1998parallel} and later by Halperin and Zwick \cite{halperin2001combinatorial}.

\begin{theorem}\label{thm:max-cut-csm}

There is a $(0.5,1)$-approximation algorithm for \RCSM{} when $\ell$ has arbitrary sign and $f$ is the cut function of a weighted undirected graph $(V,E,w)$; that is, for all $S\subseteq V$,
\[f(S)\triangleq\sum_{ab\in E}w_{ab}\cdot [|S\cap\{a,b\}|=1],\] 
where each edge weight $w_{ab}$ is non-negative.

\end{theorem}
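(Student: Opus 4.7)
The plan is to reduce to a linear-programming relaxation with one auxiliary variable per edge, solve the LP, and round via pipage (or swap) rounding. Concretely, for each edge $ij\in E$ I would introduce a variable $y_{ij}$ meant to capture whether $ij$ is cut, and consider the LP
\begin{align*}
\text{maximize}\quad& \tfrac{1}{2}\sum_{ij\in E}w_{ij}\,y_{ij} \;+\; \sum_{i\in V}\ell_i\,x_i \\
\text{subject to}\quad& y_{ij}\le x_i+x_j,\quad y_{ij}\le 2-x_i-x_j \quad \forall\, ij\in E,\\
              & \bx\in \PP,
\end{align*}
where $\PP$ is the matroid polytope of $\MM$. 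For any $S\in \II$, setting $\bx=\bone_S$ and $y_{ij}=[|S\cap \{i,j\}|=1]$ is feasible with objective value $\tfrac{1}{2}f(S)+\ell(S)$, so the LP optimum $v^{*}$ satisfies $v^{*}\ge \tfrac{1}{2}f(OPT)+\ell(OPT)$. The LP is solvable in polynomial time because $\PP$ is solvable.

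Given an optimal fractional solution $(\bx^{*},\by^{*})$, I would then apply pipage (or swap) rounding to $\bx^{*}$ to obtain an integral $T\in \II$ satisfying $\EE[f(T)]\ge F(\bx^{*})$ (since $f$ is non-negative submodular) and $\EE[\ell(T)]=\sum_i \ell_i x_i^{*}$ (since $\ell$ is linear, regardless of its sign). Because the multilinear extension of a cut function decomposes edgewise as $F(\bx^{*})=\sum_{ij\in E}w_{ij}\paren{x_i^{*}+x_j^{*}-2x_i^{*}x_j^{*}}$, the whole argument reduces to the pointwise inequality $x_i^{*}+x_j^{*}-2x_i^{*}x_j^{*}\ge \tfrac{1}{2}y_{ij}^{*}$ for every edge $ij\in E$. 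Summing this over edges and adding $\sum_i \ell_i x_i^{*}$ yields
\begin{equation*}
\EE[f(T)+\ell(T)]\;\ge\; F(\bx^{*})+\sum_i \ell_i x_i^{*} \;\ge\; v^{*} \;\ge\; \tfrac{1}{2}f(OPT)+\ell(OPT),
\end{equation*}
which is the desired $(0.5,1)$-approximation.

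The only nontrivial step, and hence the main obstacle, is verifying the pointwise inequality, which I would handle by a short two-variable casework. Writing $s=x_i^{*}+x_j^{*}$, if $s\le 1$ then the active upper bound on $y_{ij}^{*}$ is $s$; AM--GM gives $x_i^{*}x_j^{*}\le s^2/4$, so $x_i^{*}+x_j^{*}-2x_i^{*}x_j^{*}\ge s-s^2/2\ge s/2\ge \tfrac{1}{2}y_{ij}^{*}$, where the middle step uses $s\le 1$. If $s\ge 1$, the substitution $t_i=1-x_i^{*}$, $t_j=1-x_j^{*}$ preserves the quantity $x_i^{*}+x_j^{*}-2x_i^{*}x_j^{*}=t_i+t_j-2t_it_j$ while yielding $t_i+t_j=2-s\le 1$ and $y_{ij}^{*}\le 2-s$, reducing to the first case. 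The reason the argument accommodates a full coefficient $1$ on $\ell$ is that pipage rounding treats linear functions exactly in expectation; the coefficient $\tfrac{1}{2}$ on $f$ is precisely the tight edgewise LP gap for undirected cuts, achieved fractionally at $x_i^{*}=x_j^{*}=\tfrac{1}{2}$.
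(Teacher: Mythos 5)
Your proposal is correct and follows essentially the same route as the paper's proof: the same LP relaxation with edge variables bounded by $\bx_a+\bx_b$ and $2-\bx_a-\bx_b$ over $\PP$, the same pointwise inequality $\bx_a+\bx_b-2\bx_a\bx_b\ge\frac{1}{2}\min(\bx_a+\bx_b,2-\bx_a-\bx_b)$ verified by the same complementation-plus-AM--GM argument, and the same pipage-rounding finish. No substantive differences.
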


Note that while our above result for undirected cut functions applies to \RCSM{}, our subsequent result for directed cut functions only applies to \RUSM{}.

\begin{theorem}\label{thm:max-dicut-usm}

There is a $(0.5,1)$-approximation algorithm for \RUSM{} when $\ell$ has arbitrary sign and $f$ is the cut function of a weighted directed graph $(V,E,w)$; that is, for all $S\subseteq V$,
\[f(S)=\sum_{ab\in \CE}w_{ab}\cdot [a\in S\text{ and }b\not\in S],\] 
where each edge weight $w_{ab}$ is non-negative.

\end{theorem}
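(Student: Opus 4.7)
The plan is to combine the ``guessing'' technique of Sviridenko et al.~\cite{sviridenko2017optimal} with an LP-based $\tfrac{1}{2}$-approximation for MAX-DICUT in the spirit of Trevisan~\cite{trevisan1998parallel} and Halperin--Zwick~\cite{halperin2001combinatorial}. First, as in the proof of \Cref{thm:0.385-non-pos}, I would enumerate $\BIGO{n^2/\eps}$ guesses $w$ so that at least one satisfies $\ell(OPT)\ge w\ge \ell(OPT)/(1+\eps)$. For each guess, solve the LP
\begin{equation*}
\max\ \sum_{(a,b)\in E} w_{ab}\, y_{ab}\quad\text{s.t.}\quad y_{ab}\le x_a,\ y_{ab}\le 1-x_b,\ y_{ab}\ge 0,\ 0\le x_v\le 1,\ \sum_v \ell_v x_v \ge w.
\end{equation*}
For the correct guess $\bone_{OPT}$ is feasible, so the LP value is at least $f(OPT)$; moreover, any feasible $\bx^*$ satisfies $L(\bx^*)\ge w \ge \ell(OPT)/(1+\eps)$ by the added linear constraint.

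Next, I would round $\bx^*$ to a set $T\subseteq \NN$ satisfying $\EE[\ell(T)]\ge L(\bx^*)$ and $\EE[f(T)]\ge \tfrac{1}{2}\sum_{(a,b)\in E} w_{ab}\, y^*_{ab}$. Independent rounding with marginal $\Pr[v\in T]=x^*_v$ preserves $\ell$ exactly but can undershoot on $f$ when $x^*_a$ is small and $x^*_b$ is large on the same edge (the ``bad region''). The fix is to mix independent rounding with edge-local anti-correlated couplings that raise $\Pr[a\in T,\, b\notin T]$ to at least $\tfrac{1}{2}\min(x^*_a,1-x^*_b)=\tfrac{1}{2}y^*_{ab}$ while keeping the vertex marginals at $x^*_v$. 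The existence of such a joint distribution can be phrased as a polynomial-size feasibility LP over pair marginals, and I would establish feasibility by LP duality (following the edge-wise analysis of Halperin--Zwick). Summing the two expectations and absorbing $\eps$ via the bound $\max(0,(\tfrac12+\eps)f(OPT)+(1+\eps)\ell(OPT))\ge \tfrac{(\tfrac12+\eps)f(OPT)+(1+\eps)\ell(OPT)}{1+\eps}$ (as at the end of \Cref{thm:0.385-non-pos}) would yield the stated $(0.5,1)$-approximation.

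The main obstacle is feasibility of the pair-marginal rounding in the bad region: even in trivial instances the LP may return highly fractional $\bx^*$ (e.g.\ $x^*_a=\tfrac14,\, x^*_b=\tfrac34$ on a heavy edge $a\to b$ with $\ell_a<0<\ell_b$), and the required pair marginals must be realized by a single distribution on $2^{\NN}$ \emph{consistent across all edges sharing a vertex}. I expect this is handled either by directly solving the pair-marginal LP (of polynomial size), or by a pipage-style post-processing that monotonically improves $\tfrac{1}{2}F(\bx)+L(\bx)$ along a direction $e_u-e_v$: for a directed cut function the multilinear extension $F$ is convex along this direction (its second derivative in $t$ equals $2(w_{uv}+w_{vu})\ge 0$), so $\tfrac12 F+L$ restricted to the line is also convex and its maximum is attained at an endpoint, allowing us to round $\bx^*$ to an integer vertex of the polytope without loss.
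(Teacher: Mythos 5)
Your high-level plan (an LP relaxation of \texttt{MAX-DICUT} in the style of Trevisan/Halperin--Zwick, losing a factor $\tfrac12$ only on the cut part) is the same as the paper's, and you correctly identify the central difficulty: independent rounding does not satisfy $F(\bx)\ge\tfrac12\hat f(\bx)$ in general. However, your specific route has a genuine gap. The paper does \emph{not} guess $\ell(OPT)$; it puts $\tfrac12\hat f(\bx)+L(\bx)$ directly into the LP objective over $[0,1]^{\NN}$ and then uses the fact that every \emph{vertex} of the basic dicut polytope $\{\,c_{ab}\ge 0,\ c_{ab}\le x_a,\ c_{ab}\le 1-x_b,\ 0\le x\le 1\,\}$ is half-integral, combined with the pointwise inequality $F(\bx)\ge\tfrac12\hat f(\bx)$ at half-integral $\bx$; independent rounding then preserves $F+L$ exactly. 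Your added guessing constraint $\sum_v\ell_v x_v\ge w$ destroys exactly this half-integrality (the paper makes the analogous point when explaining why the theorem fails for \RCSM{} once $\bx\in\PP$ is imposed), which is why you end up in the ``bad region'' with genuinely fractional $\bx^*$ and need a new rounding idea.

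Neither of your two proposed fixes closes the gap. The pair-marginal coupling is only checked edge-locally; global feasibility of a single distribution on $2^{\NN}$ realizing all the required pair events simultaneously with fixed vertex marginals is precisely the hard part (local consistency of pair marginals does not imply the existence of a consistent joint distribution), and you give no argument for it. The pipage-style post-processing is correct as far as it goes ($F$ is indeed convex along $e_u-e_v$ for a dicut function), but it only guarantees an integral $T$ with $\tfrac12 f(T)+\ell(T)\ge\tfrac12 F(\bx^*)+L(\bx^*)$, whereas the bound you need is against $\tfrac12\hat f(\bx^*)+L(\bx^*)$; since $F(\bx^*)$ can be as small as $\tfrac14\hat f(\bx^*)$ on the cut part (e.g.\ $x_a^*=\tfrac14$, $x_b^*=\tfrac34$ gives $x_a^*(1-x_b^*)=\tfrac1{16}$ versus $\min(x_a^*,1-x_b^*)=\tfrac14$), this does not yield the $(0.5,1)$ guarantee. (Pipage along $e_u-e_v$ also leaves the hyperplane $\sum_v\ell_vx_v=w$ unless $\ell_u=\ell_v$.) The fix is to drop the guessing step entirely, fold $L$ into the LP objective with coefficient $1$ against $\tfrac12$ for the cut term, and insist on a vertex optimum of the unconstrained-in-$\ell$ polytope, whose half-integrality does the work.
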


\subsection{Inapproximability}\label{subsec:non-pos-inapprox}

In this subsection, we prove \Cref{thm:inapprox-usm-non-pos}. Recall from \Cref{fig:ell-non-pos} that it unifies the guarantees of \cite[Theorem 1.1]{bodek2022maximizing} and \cite[Theorem 1.3]{bodek2022maximizing}. 

\inapproxRusmNonpos

\begin{table}[h]
    \centering
    \begin{tabular}{|c|c|c|c|c|c|}
    \hline
    $\beta$ & $\alpha(\beta)$ \cite[Theorem 1.3]{bodek2022maximizing} & $\alpha(\beta)$ (\Cref{thm:inapprox-usm-non-pos}) & $\kappa$ & $\ell_p$ & $\ell_q$ \\
    \hline
    0.1 & 0.2750 & 0.0935 & 0.6705 & -0.6095 & -0.2680 \\
    0.2 & 0.2998 & 0.1743 & 0.6513 & -0.5322 & -0.2192 \\
    0.3 & 0.3245 & 0.2433 & 0.6498 & -0.4705 & -0.1505 \\
    0.4 & 0.3488 & 0.3008 & 0.6506 & -0.4207 & -0.0893 \\
    0.5 & 0.3728 & 0.3477 & 0.6484 & -0.3800 & -0.0410 \\
    0.6 & 0.3964 & 0.3846 & 0.6388 & -0.3400 & 0.0000 \\
    0.7 & 0.4195 & 0.4162 & 0.5811 & -0.2887 & 0.0000 \\
    0.8 & 0.4420 & 0.4420 & 0.5088 & -0.2289 & 0.0000 \\
    0.9 & 0.4621 & 0.4621 & 0.4335 & -0.1766 & 0.0000 \\
    1.0 & 0.4773 & 0.4773 & 0.3515 & -0.1294 & 0.0000 \\
    \hline
    \end{tabular}
    \caption{Inapproximability of $(\alpha(\beta),\beta)$-approximations for \texttt{RegularizedUSM} with non-positive $\ell$. The parameters $\kappa, \ell_p, \ell_q$ are as described in the proof of \Cref{thm:inapprox-usm-non-pos}.}
    \label{tab:inapprox-usm-non-pos}
\end{table}

Before proving \Cref{thm:inapprox-usm-non-pos}, we state a generalization of the symmetry gap technique to $f+\ell$ sums that we use for \Cref{thm:inapprox-usm-non-pos} and the rest of our inapproximability results. 

\begin{definition}

We say that $\max_{S\in \FF}\brac{f(S)+\ell(S)}$ is \textit{strongly symmetric} with respect to a group of permutations $\GG$ if $\ell(S)=\ell(\sigma(S))$ for all $\sigma\in \GG$ and $(f,\FF)$ are strongly symmetric with respect to $\GG$ as defined in \Cref{def:strongly-symmetric-1}.

\end{definition}

\begin{lemma}[Inapproximability of $(\alpha,\beta)$ Approximations]\label{lemma:symmetry-gap-inapprox-linear}

Let $\max_{S\in \FF}\brac{f(S)+\ell(S)}$ be an instance of non-negative submodular maximization, strongly symmetric with respect to a group of permutations $\GG$. For any two constants $\alpha,\beta\ge 0$, if
\[\max_{\bx \in P(\FF)}\brac{F(\overline \bx)+L(\overline \bx)}<\max_{S\in \FF}[\alpha f(S)+\beta \ell(S)],\]
then no polynomial-time algorithm for \RCSM{} can guarantee a $(\alpha,\beta)$-approximation. The same inapproximability holds for \RUSM{} by setting $\FF=2^{\NN}$.

\end{lemma}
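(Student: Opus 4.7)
The plan is to mirror the original symmetry gap argument of \cite{vondrak2011symmetry} (which underlies \Cref{lemma:symmetry-gap-inapprox}) and extend it to account for the linear term $\ell$. First I would invoke Vondrak's construction to obtain, for every $\delta > 0$, a ``refined'' family of instances $(\tilde f,\tilde \FF)$ on a blown-up ground set $\tilde{\NN}$, where each element $u \in \NN$ is replaced by $N$ copies $u^{(1)},\dots,u^{(N)}$ and $\tilde{\FF}$ is a refinement of $\FF$ preserving matroid/cardinality structure. The submodular function $\tilde f$ is designed so that a uniformly random re-labelling of the copies is information-theoretically indistinguishable from the ``blurred'' function $\hat f(S) := F(\overline{\bone_S})$ under any polynomially bounded value-oracle algorithm; this is the content of the main technical lemma in \cite{vondrak2011symmetry}. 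I would then lift $\ell$ to $\tilde \NN$ by assigning weight $\ell(u)/N$ to every copy of $u$, yielding a linear function $\tilde \ell$ that the algorithm receives explicitly (so no hiding is needed).

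Next I would argue the output bound. Let $\mathcal A$ be any algorithm for \RCSM{} using polynomially many queries, and let $T \subseteq \tilde\NN$ be its (random) output on the refined instance with a uniformly random relabelling of copies applied. By Vondrak's lemma, up to an additive $\delta$-error, the distribution of $\bone_T$ can be coupled with a distribution whose mean lies in $P(\tilde \FF)$ and is $\GG$-symmetric, so that
\begin{equation*}
\EE[\tilde f(T)] \le \tilde F(\overline{\bx}) + \delta \cdot \max_{S}|f(S)|
\end{equation*}
for some $\bx \in P(\tilde \FF)$. For the linear part, linearity of $\tilde L$ combined with $\GG$-invariance of $\tilde\ell$ (inherited from the assumed invariance of $\ell$) gives exactly $\EE[\tilde \ell(T)] = \tilde L(\overline{\bx})$ without any $\delta$-error. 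Finally, by the definition of the refinement, symmetric points $\overline{\bx} \in P(\tilde \FF)$ are in bijection (up to scaling of copies) with symmetric points in $P(\FF)$, and this bijection preserves both $F$ and $L$ because $F$ is multilinear in each element's marginals and $\ell$ is constant on $\GG$-orbits. Hence
\begin{equation*}
\EE[\tilde f(T) + \tilde \ell(T)] \;\le\; \max_{\by \in P(\FF)}\bigl[F(\overline{\by}) + L(\overline{\by})\bigr] + \delta \cdot C,
\end{equation*}
where $C$ is a bound on $\max_S |f(S)|$.

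Combining with the hypothesis, the right-hand side is strictly less than $\max_{S \in \FF}[\alpha f(S) + \beta \ell(S)]$ for $\delta$ small enough, and the latter quantity lifts unchanged to the refined instance (taking any $S^* \in \FF$ attaining the max and lifting to a single copy per element gives a feasible $\tilde S \in \tilde \FF$ with $\tilde f(\tilde S) = f(S^*)$ and $\tilde \ell(\tilde S) = \ell(S^*)$). Therefore $\mathcal A$ fails to guarantee an $(\alpha,\beta)$-approximation on the refined instance, proving the claim. For the \RUSM{} statement, the same argument applies verbatim with $\FF = 2^{\NN}$ and $\tilde\FF = 2^{\tilde\NN}$.

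The main obstacle I anticipate is handling the $\ell$ term carefully at the refinement level: one must check that the weight-splitting $\tilde\ell(u^{(i)}) = \ell(u)/N$ is compatible with the group action that Vondrak's refinement induces on $\tilde\NN$ (so that $\tilde\ell$ really is $\tilde\GG$-invariant), and that the bijection between symmetric points of $P(\tilde\FF)$ and $P(\FF)$ preserves $L$ exactly rather than only up to $o(1)$. Everything else is a direct transcription of \cite{vondrak2011symmetry}, since the submodular-query indistinguishability result there is a statement purely about $\tilde f$ and makes no reference to any secondary objective.
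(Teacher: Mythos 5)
The paper's own ``proof'' of \Cref{lemma:symmetry-gap-inapprox-linear} is a one-line citation: it invokes \cite[Theorem 3.1]{bodek2022maximizing} for the \RUSM{} case and asserts the \RCSM{} case is ``similar.'' You are instead reconstructing the underlying symmetry-gap machinery from \cite{vondrak2011symmetry} directly and explaining how the linear term rides along. That is a genuinely more self-contained route, and the structural observations you make --- $\ell$ is given explicitly so it need not be hidden; $\tilde\ell$ inherits $\tilde\GG$-invariance so $\tilde L(\bx)=\tilde L(\overline\bx)$ exactly; the error from Vondrak's indistinguishability lemma affects only the $f$ part --- are exactly the right reasons the argument goes through, and they correspond to what \cite{bodek2022maximizing} does in Theorem 3.1.

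Two details need fixing. First, your lifting of the optimum is inconsistent with your definition $\tilde\ell(u^{(i)})=\ell(u)/N$: taking a single copy per element of $S^*$ gives $\tilde\ell(\tilde S)=\ell(S^*)/N$, not $\ell(S^*)$. You should lift $S^*$ to all $N$ copies of each of its elements; since the averaged indicator of this set is $\bone_{S^*}\in P(\FF)$ this remains feasible for the refined family, and then $\tilde\ell(\tilde S^*)=\ell(S^*)$ and (on the ``structured'' branch of the refined function) $\tilde f(\tilde S^*)=f(S^*)$ as you want. Second, for the \RCSM{} part, you should explicitly note that Vondrak's refinement takes a matroid family $\FF$ to a matroid family $\tilde\FF$ on the blown-up ground set --- this is what lets the same construction give an \RCSM{} instance rather than merely an abstract constrained instance. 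With those two repairs the argument is sound and is, in effect, an unpacking of the citation the paper relies on.
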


\begin{proof}[Proof Sketch] 

\cite[Theorem 3.1]{bodek2022maximizing} shows this lemma for the case of \texttt{RegularizedUSM}. The proof for \texttt{RegularizedCSM} is similar, so it is omitted.

\end{proof}

The idea behind the proof of \Cref{thm:inapprox-usm-non-pos} is to generalize the symmetry gap construction of \cite[Theorem 1.3]{bodek2022maximizing}, which in turn is a modification of the 0.478-inapproximability result of \cite{gharan2011submodular} used in \Cref{sec:cardinality-inapprox}.

\begin{proof}[Proof of \Cref{thm:inapprox-usm-non-pos}]

Set $f$ to be the same as defined in \Cref{lemma:cardinality-on-subset}. Now apply \Cref{lemma:symmetry-gap-inapprox-linear} with $S=\{a,b_1\}$. For a fixed $\beta$, we can show $(\alpha,\beta)$-inapproximability using this method if it is possible to choose $\ell$ and $\kappa$ such that the following inequality is true:
\[\max_{\bx \in P(\FF)}\brac{F(\overline \bx)+L(\overline \bx)}<\alpha f(\{a,b_1\})+\beta \ell(\{a,b_1\})=\alpha + \beta \ell(\{a,b_1\})\]
\[\implies \max_{\bx \in P(\FF)}\brac{F(\overline \bx)+L(\overline \bx)}-\beta \ell(\{a,b_1\})<\alpha.\]
So our goal is now to minimize to the LHS of the above inequality. \cite[Theorem 1.3]{bodek2022maximizing} sets $\ell_a=\ell_b=0$, and then chooses $\kappa$ and $\ell_{a_{1\dots k}}=\ell_{b_{1\dots k}}\triangleq \ell_p$ in order to minimize the quantity
\begin{align*}
\max_{\bx \in [0,1]^{\NN}}\brac{F(\overline \bx)+L(\overline \bx)}-\beta \ell(\{a,b_1\})&=\max_{\bx \in [0,1]^{\NN}}\brac{F(\overline \bx)+L(\overline \bx)}-\beta \ell_p\\
&=\max_{0\le q\le 1, 0\le p}\brac{(1-\kappa)2q(1-q)+\kappa 2 (1-q)(1-e^{-p})+2p\ell_p}-\beta \ell_p
\end{align*}
However, choosing $\ell_a=\ell_b\triangleq \ell_q$ to be negative rather than zero gives superior bounds for small $\beta$. That is, our goal is to compute
\begin{equation}
\min_{0\le \kappa\le 1, \ell_q\le 0, \ell_p\le 0}\brac{\max_{0\le q\le 1, 0\le p}\brac{(1-\kappa)2q(1-q)+\kappa 2(1-q)(1-e^{-p})+2p\ell_p+2q\ell_q}-\beta (\ell_p+\ell_q)}.\label{eq:regularized-usm-nonpos}
\end{equation}
We can approximate the optimal value by brute forcing over a range of $(\kappa,\ell_q,\ell_p)$. For $\beta\in \{0.8,0.9,1.0\}$, it is optimal to set $\ell_q=0$, and our guarantee is the same as that of \cite[Theorem 1.3]{bodek2022maximizing}. Our results for $\beta\in \{0.6,0.7\}$ are stronger than those of \cite[Theorem 1.3]{bodek2022maximizing} even though they also satisfy $\ell_q=0$, because that theorem actually only considers $\ell_p\ge -0.5$ and $\kappa\le 0.5$.

\end{proof}

Next, we consider the limit of \Cref{thm:inapprox-usm-non-pos} as $\alpha(\beta)\to 0.5$. Note that this is not a new result in the sense that \cite[Theorem 1.3]{bodek2022maximizing} can already prove it when the parameters $\ell_p$ and $\kappa$ are chosen appropriately, but we nevertheless believe that there is value in explicitly stating it.

\begin{restatable}{theorem}{inapproxRusmNonposLimit}\label{thm:inapprox-two-ln-two}

For any $\eps>0$, there are instances of \RUSM{} with non-positive $\ell$ such that $(0.5,2\ln 2-\eps\approx 1.386)$ is inapproximable.

\end{restatable}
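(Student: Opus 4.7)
The plan is to instantiate the symmetry-gap construction from \Cref{thm:inapprox-usm-non-pos} (keeping the same non-negative submodular function $f$ and symmetrization group $\GG$ from \Cref{lemma:cardinality-on-subset}), set $\ell_a = \ell_b = 0$ and $\ell_{a_i} = \ell_{b_i} = -s$ for a parameter $s > 0$ to be chosen, and then apply \Cref{lemma:symmetry-gap-inapprox-linear} with $S = \{a, b_1\}$. Since $f(\{a,b_1\}) = 1$ and $\ell(\{a,b_1\}) = -s$, the task reduces to exhibiting $(\kappa, s)$ such that
\[
V(\kappa,s) \;\triangleq\; \max_{q\in[0,1],\, p\ge 0}\bigl[\,2(1-\kappa)q(1-q) + 2\kappa(1-q)(1-e^{-p}) - 2ps\,\bigr] + \beta s \;<\; \tfrac{1}{2}
\]
when $\beta = 2\ln 2 - \eps$.

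The first step is an envelope argument in $s$: if $(q^*,p^*)$ is the interior maximizer, then $\partial V/\partial s = -2p^* + \beta$, so the $s$-optimal choice forces $p^* = \beta/2$. Combined with the two first-order conditions $\partial(F+L)/\partial q = 0$ and $\partial(F+L)/\partial p = 0$, and writing $c \triangleq e^{-\beta/2}$, a short algebraic manipulation yields $q^* = 1-c$, so that $L(q^*,p^*) = -\beta s$ exactly cancels the explicit $+\beta s$ in $V$, and the max value collapses to a function of $\kappa$ alone:
\[
V(\kappa) \;=\; \frac{(1-\kappa c)^2}{2(1-\kappa)}.
\]

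Next I would minimize $V(\kappa)$ over $\kappa\in[0,1)$. Setting $V'(\kappa) = 0$ gives the unique interior critical point $\kappa^* = 2 - 1/c$, which lies in $[0,1)$ precisely when $c \ge 1/2$, i.e., when $\beta \le 2\ln 2$. Substituting $\kappa = \kappa^*$ and using $1 - \kappa^* c = 2(1-c)$ and $1 - \kappa^* = (1-c)/c$ collapses the expression to the clean form $V(\kappa^*) = 2c(1-c)$, which is strictly less than $1/2$ whenever $c > 1/2$. Taking $\beta = 2\ln 2 - \eps$ for any $\eps > 0$ thus yields $(1/2, 2\ln 2 - \eps)$-inapproximability via \Cref{lemma:symmetry-gap-inapprox-linear}.

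The main obstacle is verifying that the interior critical point $(q^*,p^*) = (1-c,\, \beta/2)$ is actually the \emph{global} maximizer of $F(\overline{\bx}) + L(\overline{\bx})$, rather than merely a local one, so that the chain of equalities above computes the actual $\max_{\bx\in[0,1]^{\NN}}[F(\overline{\bx})+L(\overline{\bx})]$. This requires (a) checking that the Hessian of $F+L$ at $(q^*,p^*)$ is negative definite, which via direct computation reduces to $2(1-\kappa^*)(1-q^*) > \kappa^* e^{-p^*}$, equivalently $\kappa^* < 2/3$ (i.e., $c < 3/4$), which is comfortably satisfied as $c \to 1/2$; and (b) ruling out the two boundary candidates, namely $(q,p) = (1/2, 0)$ with value $(1-\kappa)/2$ and $(q,p) = (0, \ln(\kappa/s))$ with value $2\kappa - 2s\bigl(1 + \ln(\kappa/s)\bigr)$. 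A short Taylor expansion in $\delta = c - 1/2$ at $\kappa = \kappa^*$ shows both boundary values are strictly dominated by the interior value $2c(1-c) - \beta s$, completing the verification.
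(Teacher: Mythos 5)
Your construction is exactly the paper's: the same $f$ from \Cref{lemma:cardinality-on-subset}, $\ell_a=\ell_b=0$ with a common negative weight on the $a_i,b_i$, an application of \Cref{lemma:symmetry-gap-inapprox-linear} at $S=\{a,b_1\}$, and the same limit $\kappa\to 0^+$ (your $c\to 1/2^+$, i.e.\ $p^*\to\ln 2^-$). The difference is purely in how the inner maximum is bounded. The paper eliminates $q$ in closed form (the objective is a concave quadratic in $q$ whose vertex lies in $[0,1/2]$ for $\kappa<1/2$), reducing to a one-dimensional problem in $p$ whose objective is convex (convex $h$ plus a linear term), so global optimality of the chosen $p^*$ is automatic and no Hessian or boundary analysis is needed. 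You instead work with the joint stationary point, use an envelope condition in $s$ to force $p^*=\beta/2$, and obtain the clean closed forms $V(\kappa)=\frac{(1-\kappa c)^2}{2(1-\kappa)}$, $\kappa^*=2-1/c$, $V(\kappa^*)=2c(1-c)$ --- a genuinely nice bonus the paper does not provide, since it only asserts ``$\kappa$ sufficiently close to $0$.'' The cost is that you must separately certify global optimality; the checks you flag do go through (at $\kappa=\kappa^*$ one gets $s=\kappa^*c^2$, the $p=0$ boundary value satisfies $(1-\kappa^*)/2+\beta s<1/2$ because $\beta e^{-\beta}\le e^{-1}<1/2$, and the $q=0$ boundary value is $O(\kappa^*)$), but they are exactly the work the paper's order of elimination avoids. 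One cosmetic slip: $q^*=1-c$ holds only at $\kappa=\kappa^*$; for general $\kappa$ the first-order condition gives $q^*=\frac{1-2\kappa+\kappa c}{2(1-\kappa)}$, which is what actually produces your formula $V(\kappa)=\frac{(1-\kappa c)^2}{2(1-\kappa)}$, so the derivation should be stated in that order.
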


\begin{proof} 

To find the maximum $\beta$ such that we can show $(0.5,\beta)$-inapproximability using the construction of \Cref{thm:inapprox-usm-non-pos}, our goal is to choose $\kappa\in (0,0.5)$ and $\ell_p<0$ such that the RHS of the following inequality is maximized:
\begin{equation}
\beta<\frac{0.5-\max_{0\le q\le 1, 0\le p} \brac{(1-\kappa)2q(1-q)+\kappa 2(1-q)(1-e^{-p})+2p\ell_p}}{-\ell_p}\label{ineq:2-ln-2}
\end{equation}
We can rewrite half the expression within the $\max$ as
\begin{align*}
\max_{q, 0\le p}&\brac{(1-\kappa)q(1-q)+\kappa (1-q)(1-e^{-p})+p\ell_p}\\
&= \max_{q,0\le p}\brac{-q^2(1-\kappa)+q(1-\kappa-\kappa (1-e^{-p}))+\kappa(1-e^{-p})+p\ell_p}\\
&=\max_{0\le p}\brac{\frac{(1-2\kappa +\kappa e^{-p})^2}{4(1-\kappa)}+\kappa(1-e^{-p})+p\ell_p},
\end{align*}
so the RHS of \Cref{ineq:2-ln-2} becomes:
\begin{equation}
\frac{2\cdot \min_{0\le p}\brac{\frac{1-\kappa-(1-2\kappa+\kappa e^{-p})^2}{4(1-\kappa)}-\kappa (1-e^{-p})-p\ell_p}}{-\ell_p}=\frac{2\cdot \min_{0\le p}\brac{\frac{\kappa (2e^{-p}-1) - \kappa^2e^{-2p}}{4(1-\kappa)}-p\ell_p}}{-\ell_p}. \label{ineq:2-ln-2-again}
\end{equation}
Next, we claim that for any $p^*>0$, it is possible to choose $\ell_p<0$ such that the numerator of \Cref{ineq:2-ln-2-again} reaches its minimum at $p=p^*$. Define the function $h(p)\triangleq \frac{\kappa (2e^{-p}-1) - \kappa^2e^{-2p}}{4(1-\kappa)}$. It suffices to check that $h$ is decreasing at $p=0$ and concave up for $p\ge 0$; that is, $\frac{d}{dp}h(p)\Bigr|_{p=0}<0$ and $\frac{d^2}{dp^2}\brac{\frac{\kappa (2e^{-p}-1) - \kappa^2e^{-2p}}{4(1-\kappa)}}>0$ for all $p\ge 0$. Both of these inequalities follow from the assumption $\kappa\in (0,0.5)$. 

Finally, when $p^*<\ln 2$, $2e^{-p^*}>1$, implying that $h(p^*)>0$ when $\kappa$ is sufficiently close to 0. For such $p^*$, the RHS of \Cref{ineq:2-ln-2-again} becomes
\begin{equation*}
\frac{2(h(p^*)-p^*\ell_p)}{-\ell_p}\ge \frac{-2p^*\ell_p}{-\ell_p}=2p^*,
\end{equation*}
which can be made arbitrarily close to $\beta<2\ln 2$.

\end{proof}

\section{Non-Negative \texorpdfstring{$\ell$}{ell}: \texttt{RegularizedUSM}}\label{sec:non-neg-usm}

\subsection{Approximations with Double Greedy}\label{subsec:dg-better}

In this subsection, we show improved approximability for \DetDG{} and \RanDG{} in \Cref{thm:deterministic-dg-better,thm:randomized-dg-better}, and then show that both of these results are tight in \Cref{thm:deterministic-dg-tight,thm:randomized-dg-tight}.
The results of this subsection are summarized in \Cref{fig:det-dg}. 

First, we briefly review the behavior of the original \DetDG{} and \RanDG{} of \cite{buchbinder2012double} when executed on a non-negative submodular function $g$, as well as their approximation factors.

\paragraph*{The Algorithm:} Both algorithms construct a sequence of sets $X_i,Y_i$ for $i\in [0,n]$. First, $X_0\triangleq \emptyset$ and $Y_0\triangleq \NN$. Then for each $i$ from $1$ to $n$, execute the following two steps:
\begin{enumerate}
    \item Compute the marginal gains $g(u_i|X_{i-1})=g(X_{i-1}\cup \{u_i\})-g(X_{i-1})$ and $g(u_i|Y_{i-1}\backslash\{u_i\})= g(Y_{i-1})-g(Y_{i-1}\backslash \{u_i\})$. By the original proof of double greedy, 
    \begin{equation}
    g(u_i|X_i)-g(u_i | Y_i\backslash\{u_i\})\ge 0\label{eq:dg-important}
    \end{equation}
    holds by submodularity.
    
    \item Based on the marginal gains, either set $(X_i,Y_i)=(X_{i-1}\cup \{u_i\},Y_{i-1})$ or $(X_i,Y_i)=(X_{i-1},Y_{i-1}\backslash \{u_i\})$.
    \begin{itemize}
        \item In \DetDG{}, the first event occurs if $g(u_i|X_i)\ge -g(u_i | Y_i\backslash\{u_i\})$. 
        
        \item In \RanDG{}, the first event occurs with probability proportional to $a_i\triangleq \max(g(u_i|X_i),0)$, while the second event occurs with probability proportional to $b_i\triangleq \max(-g(u_i | Y_i\backslash\{u_i\}),0)$. In the edge case where $a_i=b_i=0$, it does not matter which event occurs.
    \end{itemize}
\end{enumerate}
Finally, the algorithm returns $X_n=Y_n$. 

\paragraph*{The Approximation Factors:} Let $OPT_i\triangleq (OPT\cup X_i)\cap Y_i$, so that $OPT_0=OPT$ while $OPT_n=X_n=Y_n$. For \DetDG{}, it can be shown via exhaustive casework that:
\begin{equation}
g(OPT_{i-1})-g(OPT_i)\le (g(X_i)-g(X_{i-1}))+(g(Y_i)-g(Y_{i-1})), \label{ineq:deterministic-dg}
\end{equation}
while for \RanDG{}, it can similarly be shown that:
\begin{equation}
\EE[g(OPT_{i-1})-g(OPT_i)]\le \frac{1}{2}\EE\brac{(g(X_i)-g(X_{i-1}))+(g(Y_i)-g(Y_{i-1}))}.\label{ineq:randomized-dg}
\end{equation}
Summing \Cref{ineq:deterministic-dg} from $i=1$ to $i=n$ gives
\begin{align*}
g(OPT)-g(X_n)&\le g(X_n)-g(X_0)+g(Y_n)-g(Y_0)\\
&\le 2g(X_n)-g(\NN)\\
\implies g(X_n)&\ge \frac{g(OPT)+g(\NN)}{3},
\end{align*}
whereas summing \Cref{ineq:randomized-dg} from $i=1$ to $i=n$ gives \[\EE\brac{g(OPT)-g(X_n)}\le \EE\brac{\frac{1}{2}\paren{2g(X_n)-g(\NN)}}\implies \EE\brac{g(X_n)}\ge \frac{2g(OPT)+g(\NN)}{4}.\]
These last two equations imply that if we substitute $f+\ell$ in place of $g$, \DetDG{} and \RanDG{} provide $(1/3,2/3)$- and $(1/2,3/4)$-approximations for \RUSM{}, respectively, because $\ell(OPT)\le \ell(\NN)$. Showing improved $(\alpha,\beta)$-approximations for $\alpha<\frac{1}{3}$ ($\alpha<\frac{1}{2}$) for \DetDG{} (\RanDG{}) is just a matter of modifying \Cref{ineq:deterministic-dg} (\Cref{ineq:randomized-dg}).

\begin{restatable}{theorem}{deterministicDgBetter}\label{thm:deterministic-dg-better}

For \RUSM{} with non-negative $\ell$ and any $r\ge 1$, there exists a variant of \DetDG{} that simultaneously achieves $(0,1)$ and $\paren{\frac{1}{r+1+r^{-1}},\frac{r+1}{r+1+r^{-1}}}$-approximations (and consequently, $(\alpha,\beta)$-approximations for all $(\alpha,\beta)$ on the segment connecting these two points as well). For $r=1$, the variant is actually just the original \DetDG{}.

\end{restatable}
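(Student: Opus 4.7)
The plan is to parameterize \DetDG{} by a weight $r \ge 1$ and analyze it through a weighted generalization of \Cref{ineq:deterministic-dg}. Writing $g \triangleq f + \ell$, the variant runs the same outer loop as \DetDG{} but at step $i$ adds $u_i$ to $X$ exactly when $g(u_i \mid X_{i-1}) + r \cdot g(u_i \mid Y_i \backslash \{u_i\}) \ge 0$, and otherwise removes it from $Y$; after the loop, it returns whichever of $X_n$ and $\NN$ has the larger $g$-value. When $r = 1$ the decision rule reduces to $g(u_i \mid X_{i-1}) \ge -g(u_i \mid Y_i \backslash \{u_i\})$, i.e., the original \DetDG{} rule. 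The $(0,1)$-guarantee is immediate from the $\NN$ candidate, since $f \ge 0$ and $\ell \ge 0$ give $g(\NN) \ge \ell(\NN) \ge \ell(OPT)$.

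The crux is to prove a weighted version of \Cref{ineq:deterministic-dg}:
\[
g(OPT_{i-1}) - g(OPT_i) \le r^{-1} \paren{g(X_i) - g(X_{i-1})} + r \paren{g(Y_i) - g(Y_{i-1})},
\]
where $OPT_i \triangleq (OPT \cup X_i) \cap Y_i$. I would verify this by the usual four-case analysis on (add vs.\ remove) $\times$ ($u_i \in OPT$ vs.\ $u_i \notin OPT$), mirroring the original \DetDG{} proof. For example, when we add $u_i \notin OPT$, the LHS equals $-g(u_i \mid OPT_{i-1})$, which is bounded above by $-g(u_i \mid Y_i \backslash \{u_i\})$ via submodularity (since $OPT_{i-1} \subseteq Y_i \backslash \{u_i\}$), and this is at most $r^{-1} g(u_i \mid X_{i-1})$ by rearranging the adding rule. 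The other three cases are analogous, with each reducing to a direct consequence of submodularity plus the $r$-weighted decision rule.

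Telescoping the weighted inequality from $i = 1$ to $n$ and using $g(\emptyset) = f(\emptyset) \ge 0$ yields
\[
g(OPT) + r \cdot g(\NN) \le \paren{1 + r + r^{-1}} g(X_n).
\]
Since $g(OPT) = f(OPT) + \ell(OPT)$ and $g(\NN) \ge \ell(\NN) \ge \ell(OPT)$, substituting gives
\[
g(X_n) \ge \frac{f(OPT) + (r+1)\,\ell(OPT)}{r + 1 + r^{-1}} = \frac{1}{r+1+r^{-1}}\,f(OPT) + \frac{r+1}{r+1+r^{-1}}\,\ell(OPT),
\]
which is the second approximation guarantee. Since the single output $T = \arg\max\{X_n, \NN\}$ under $g$ satisfies both the $(0,1)$ and $\bigl(\tfrac{1}{r+1+r^{-1}}, \tfrac{r+1}{r+1+r^{-1}}\bigr)$ bounds, taking convex combinations recovers every $(\alpha',\beta')$-approximation on the segment between them.

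The main obstacle is the four-case check for the weighted inequality. Each case is a short algebraic manipulation combining submodularity of $g$ with the $r$-weighted rule, but the specific weights $r^{-1}$ and $r$ are essential: they are exactly what make the adding condition coincide with the bound $-g(u_i \mid Y_i \backslash \{u_i\}) \le r^{-1}\,g(u_i \mid X_{i-1})$ needed in the ``add, $u_i \notin OPT$'' case, and symmetrically match the remaining three.
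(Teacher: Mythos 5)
Your proposal is correct and follows essentially the same route as the paper: the same $r$-weighted decision rule, the same weighted per-step inequality $g(OPT_{i-1})-g(OPT_i)\le r^{-1}(g(X_i)-g(X_{i-1}))+r(g(Y_i)-g(Y_{i-1}))$ verified by the same four-case submodularity argument, and the same telescoping. The only (harmless) difference is in the $(0,1)$-guarantee: you return the better of $X_n$ and $\NN$, whereas the paper observes that $g(Y_i)$ is nondecreasing under the modified rule, so $g(X_n)=g(Y_n)\ge g(Y_0)=g(\NN)\ge \ell(OPT)$ and $X_n$ alone already satisfies both bounds (which is also what makes the $r=1$ case literally the original \DetDG{}).
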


\begin{proof} 
Modify step 2 of \DetDG{} so that the first event occurs if $g(u_i|X_{i-1})\ge -rg(u_i|Y_{i-1}\backslash \{u_i\})$. We claim that the following modified version of \Cref{ineq:deterministic-dg} now holds:
\begin{equation}
g(OPT_{i-1})-g(OPT_i)\le r^{-1}(g(X_i)-g(X_{i-1}))+r(g(Y_i)-g(Y_{i-1})).\label{ineq:deterministic-dg-new}
\end{equation}
First we show that \Cref{ineq:deterministic-dg-new} implies an $\paren{\frac{1}{r+1+r^{-1}},\frac{r+1}{r+1+r^{-1}}}$-approximation. Summing it from $i=1$ to $i=n$ gives:
\begin{align*}
g(OPT)-g(X_n)&\le \paren{r^{-1}+r}g(X_n)-rg(\NN)\\
\implies g(X_n)&\ge \frac{g(OPT)}{r+1+r^{-1}}+\frac{r}{r+1+r^{-1}}g(\NN)\\
&\ge \frac{f(OPT)}{r+1+r^{-1}}+\frac{r+1}{r+1+r^{-1}}\cdot \ell(OPT),
\end{align*}
as desired. Now we show \Cref{ineq:deterministic-dg-new}. First, we consider the case $g(u_i|X_{i-1})\ge -rg(u_i|Y_{i-1}\backslash \{u_i\})$. This assumption implies that $Y_i=Y_{i-1}$, so the last part of \Cref{ineq:deterministic-dg-new} drops out.
\begin{enumerate}
    \item If $u_i\in OPT_{i-1}$, then $OPT_i=OPT_{i-1}$, and \Cref{ineq:deterministic-dg-new} reduces to $0 \le g(u_i|X_{i-1})$, which holds by combining \Cref{eq:dg-important} with the assumption.
    \item If $u_i \not\in OPT_{i-1}$, then $OPT_{i}=OPT_{i-1}\cup \{u_i\}$, then \Cref{ineq:deterministic-dg-new} reduces to 
    \begin{equation*}
    -g(u_i|OPT_{i-1})\le r^{-1}g(u_i|X_{i-1}).
    \end{equation*}
    Since $OPT_{i-1}\subseteq Y_{i-1}\backslash \{u_i\}$), the LHS of this inequality is at most $-g(u_i|Y_{i-1}\backslash \{u_i\})$ by submodularity. On the other hand, the RHS of this inequality is at least $-g(u_i|Y_{i-1}\backslash \{u_i\})$ by assumption.
\end{enumerate}
On the other hand, if $g(u_i|X_{i-1})< -rg(u_i|Y_{i-1}\backslash \{u_i\})$, then $X_i=X_{i-1}$, and the first part of \Cref{ineq:deterministic-dg-new} drops out.
\begin{enumerate}
    \item If $u_i\not\in OPT_{i-1}$, then $OPT_i=OPT_{i-1}$, and \Cref{ineq:deterministic-dg-new} reduces to $0 \le -g(u_i|Y_{i-1}\backslash \{u_i\})$, which holds by combining \Cref{eq:dg-important} with the assumption.
    \item If $u_i\in OPT_{i-1}$, then $OPT_{i}=OPT_{i-1}\backslash \{u_i\}$, then \Cref{ineq:deterministic-dg-new} reduces to 
    \begin{equation*}
    g(u_i|OPT_{i})\le -rg(u_i|Y_{i-1}\backslash \{u_i\}).
    \end{equation*}
    Since $X_i\subseteq OPT_{i}$, the LHS of this inequality is at most $g(u_i|X_i)$ by submodularity. On the other hand, the RHS of this inequality is greater than $g(u_i|X_{i-1})$ by assumption.
\end{enumerate}
It remains to show that this algorithm simultaneously achieves a $(0,1)$-approximation. Because $g(u_i|X_i)-g(u_i | Y_i\backslash\{u_i\})\ge 0$, $\max\paren{g(u_i|X_i), -rg(u_i | Y_i\backslash\{u_i\}}\ge 0$. Thus, the values of $g(X_i)$ and $g(Y_i)$ are increasing over the course of the algorithm, so:
\[g(X_n)=g(Y_n)\ge g(Y_{n-1})\ge \dots \ge g(Y_0)\ge \ell(OPT).\qedhere\]
\end{proof}

The reasoning for \RanDG{}, which we show next, is very similar. 

\begin{restatable}{theorem}{randomizedDgBetter}\label{thm:randomized-dg-better}

Running \RanDG{} on $f+\ell$ simultaneously achieves an $\paren{\frac{2}{r+2+r^{-1}},\frac{r+2}{r+2+r^{-1}}}$- approximation for all $r\ge 1$ for \RUSM{} with non-negative $\ell$. 

\end{restatable}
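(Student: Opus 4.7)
The plan is to mirror the proof of \Cref{thm:deterministic-dg-better}, with the key observation that no modification of \RanDG{} is required: the original sampling probabilities $a_i/(a_i+b_i)$ and $b_i/(a_i+b_i)$ already produce an $r$-parameterized per-step inequality courtesy of a single AM-GM identity. Concretely, I would aim to strengthen \Cref{ineq:randomized-dg} to
\[
2\,\EE[g(OPT_{i-1})-g(OPT_i)] \le \EE\brac{r^{-1}(g(X_i)-g(X_{i-1})) + r(g(Y_i)-g(Y_{i-1}))},
\]
and then sum over $i=1,\dots,n$.

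To establish the per-step inequality, I would condition on $(X_{i-1},Y_{i-1})$ and split into cases based on whether $u_i \in OPT_{i-1}$, paralleling the casework for the deterministic variant. The interesting case is $a_i,b_i>0$; for $u_i\in OPT_{i-1}$, only the ``remove'' event (probability $b_i/(a_i+b_i)$) alters $OPT$, decreasing $g$ by $g(u_i\mid OPT_i)$, which by submodularity is at most $g(u_i\mid X_{i-1})=a_i$. Hence the expected LHS is bounded by $2 a_i b_i/(a_i+b_i)$, while the expected RHS is exactly $(r^{-1}a_i^2+r b_i^2)/(a_i+b_i)$. The per-step inequality therefore reduces to $2ab \le r^{-1}a^2 + r b^2$, which is simply the perfect square $(r^{-1/2}a - r^{1/2}b)^2 \ge 0$. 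The case $u_i\notin OPT_{i-1}$ is symmetric, swapping the roles of $a_i$ and $b_i$, and the degenerate subcases where $a_i=0$ or $b_i=0$ collapse using \Cref{eq:dg-important}, which forces the LHS to be nonpositive.

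Summing the per-step inequality from $i=1$ to $i=n$ telescopes to
\[
2(g(OPT)-\EE[g(X_n)]) \le (r^{-1}+r)\EE[g(X_n)] - r^{-1} g(\emptyset) - r\,g(\NN).
\]
Dropping the nonnegative term $r^{-1}g(\emptyset)$ and rearranging yields $\EE[g(X_n)] \ge (2 g(OPT) + r\,g(\NN))/(r+2+r^{-1})$. Since $\ell$ is non-negative, $f(\NN)\ge 0$ and $\ell(\NN)\ge \ell(OPT)$, so $g(\NN)\ge \ell(OPT)$, and unpacking $g=f+\ell$ gives the claimed $\paren{\frac{2}{r+2+r^{-1}},\frac{r+2}{r+2+r^{-1}}}$-approximation.

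The only point requiring care (rather than a genuine obstacle) is verifying that the AM-GM bound $2ab \le r^{-1}a^2 + r b^2$ matches \emph{exactly} the expected RHS produced by the original \RanDG{} sampling distribution, so that no re-weighting of the probabilities is necessary. This is precisely what lets a \emph{single} execution of the vanilla \RanDG{} simultaneously achieve the entire one-parameter family of $(\alpha,\beta)$-approximations as $r$ varies, in contrast to the deterministic setting where the comparison threshold had to be tilted by $r$.
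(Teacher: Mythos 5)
Your proposal is correct and follows essentially the same route as the paper: establish the per-step inequality $\EE[g(OPT_{i-1})-g(OPT_i)]\le \frac{1}{2}\EE\brac{r^{-1}\Delta g(X)+r\Delta g(Y)}$ for the unmodified \RanDG{}, bound the left side by $\frac{a_ib_i}{a_i+b_i}$ (which the paper simply cites from the original double-greedy analysis while you re-derive it by casework), compute the right side exactly as $\frac{r^{-1}a_i^2+rb_i^2}{a_i+b_i}$, and close with the weighted AM-GM $r^{-1}a^2+rb^2\ge 2ab$ before telescoping. The only cosmetic difference is that you spell out the conditioning and the degenerate subcases that the paper leaves implicit.
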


We note that a similar lemma as \Cref{thm:randomized-dg-better} was previously used by \cite{buchbinder2014submodular} for maximizing a submodular function subject to a cardinality constraint.

\begin{proof} 
We claim that the following modified version of \Cref{ineq:randomized-dg} holds for any $r>0$:
\begin{equation}
\EE[g(OPT_{i-1})-g(OPT_i)]\le \frac{1}{2}\EE\brac{r^{-1}(g(X_i)-g(X_{i-1}))+r(g(Y_i)-g(Y_{i-1}))}.\label{ineq:randomized-dg-new}
\end{equation}
As in the proof of \Cref{thm:deterministic-dg-better}, it is easy to check that \Cref{ineq:randomized-dg-new} implies the conclusion. It remains to show \Cref{ineq:randomized-dg-new}. We note that in the edge case $a_i=b_i=0$, \Cref{eq:dg-important} implies that $g(u_i|X_{i-1})=g(u_i|Y_{i-1}\backslash \{u_i\})=0$, so the inequality reduces to $0\le 0$. Otherwise, recall that the original proof of double greedy lower bounded the LHS of \Cref{ineq:randomized-dg-new} by
\begin{equation*}
\EE[g(OPT_{i-1})-g(OPT_i)]\le \frac{a_ib_i}{a_i+b_i}.
\end{equation*}
On the other hand, we can lower bound twice the RHS by
\begin{align*}
\EE[&r^{-1}(g(X_i)-g(X_{i-1}))+r(g(Y_i)-g(Y_{i-1}))]\\
&= r^{-1}\cdot \frac{a_i}{a_i+b_i}(g(X_{i-1}\cup\{u_i\})-g(X_{i-1}))+r\cdot \frac{b_i}{a_i+b_i}(g(Y_{i-1}\backslash \{u_i\})-g(Y_{i-1}))\\
&=\frac{r^{-1}a_i^2}{a_i+b_i}+\frac{rb_i^2}{a_i+b_i} \ge \frac{2a_ib_i}{a_i+b_i},
\end{align*}
where the last step follows from the AM-GM inequality as in the original proof.
\end{proof}

Next, we prove that \DetDG{} and \RanDG{} do no better than the bounds we just showed. Recall that \cite[Theorem 1.4]{bodek2022maximizing} proved that the original \DetDG{} is an $(\alpha,\beta)$-approximation algorithm whenever $\alpha\le \frac{1}{3}$ and $\alpha+\beta\le 1$. To show that this analysis is tight, it suffices to check that whenever $\alpha>\frac{1}{3}$ or $\alpha+\beta>1$, there are instances where \DetDG{} does not achieve the desired approximation factor. The former inequality holds by \cite[Theorem II.3]{buchbinder2012double}, while the latter holds by applying the following theorem with $r=1$:

\begin{theorem}\label{thm:deterministic-dg-tight}

For any $r\ge 1$ and $\eps>0$, there are instances of \RUSM{} with non-negative $\ell$ where the variant of \DetDG{} described in the proof of \Cref{thm:deterministic-dg-better} does not achieve an $(\alpha,\beta)$-approximation for any $(\alpha,\beta)$ above the line connecting $(0,1)$ and $\paren{\frac{1}{r+1+r^{-1}}, \frac{r+1}{r+1+r^{-1}}}$.

\end{theorem}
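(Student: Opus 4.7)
The plan is to exhibit a single two-element instance on which the variant of \DetDG{} produces a set whose $g = f + \ell$ value matches the guarantee of \Cref{thm:deterministic-dg-better}, thereby simultaneously ruling out every $(\alpha,\beta)$ strictly above the claimed line. I would set $\NN = \{u_1, u_2\}$ (with $u_1$ processed first) and
\[
f(\emptyset) = 0, \quad f(\{u_1\}) = r, \quad f(\{u_2\}) = 1, \quad f(\{u_1,u_2\}) = 0, \qquad \ell(u_1) = 0, \quad \ell(u_2) = r.
\]
A quick check confirms that $f$ is non-negative submodular (the diminishing-returns inequalities reduce to $r \ge -1$ and $1 \ge -r$), $\ell$ is non-negative, and the unique $g$-maximizer is $OPT = \{u_2\}$, with $f(OPT) = 1$ and $\ell(OPT) = r$.

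Next I would trace the variant of \DetDG{} on this instance. At step one, $g(u_1 \mid X_0) = r$ and $g(u_1 \mid Y_0 \setminus \{u_1\}) = -1$, so the comparison $g(u_1 \mid X_0) \ge -r\,g(u_1 \mid Y_0 \setminus \{u_1\})$ is the equality $r \ge r$, and $u_1$ is added. At step two, both $g(u_2 \mid X_1)$ and $g(u_2 \mid Y_1 \setminus \{u_2\})$ equal $-r + r = 0$, so the comparison again holds as equality and $u_2$ is added. The algorithm thus outputs $X_2 = \NN$ with value $g(X_2) = r$, which exactly matches the bound $(f(OPT) + (r+1)\ell(OPT))/(r+1+r^{-1}) = (1 + r(r+1))/(r+1+r^{-1}) = r$ from \Cref{thm:deterministic-dg-better}.

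To conclude, for any $(\alpha,\beta)$ strictly above the line through $(0,1)$ and $\paren{\tfrac{1}{r+1+r^{-1}}, \tfrac{r+1}{r+1+r^{-1}}}$ --- equivalently, $\alpha + r\beta > r$ --- I take the witness $S^{\star} = \{u_2\}$ and observe that $\alpha f(S^{\star}) + \beta \ell(S^{\star}) = \alpha + r\beta > r = g(X_2)$, so the algorithm fails to be an $(\alpha,\beta)$-approximation on this instance. The $\eps$ in the statement can be accommodated by a small perturbation such as setting $f(\{u_1\}) = r + \eps$, which makes both \DetDG{} comparisons strict (now $u_2$ is not added) and shifts $g(X_2)$ by only $O(\eps)$. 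The main obstacle is arriving at the calibration $f(\{u_1\}) = \ell(u_2) = r$: it is precisely this choice that forces both key inequalities in the proof of \Cref{thm:deterministic-dg-better} --- the $r^{-1}$-weighted increase in $X$ and the $r$-weighted decrease in $Y$ --- to be simultaneously tight at every step for $OPT = \{u_2\}$, which lets one instance witness failure uniformly over all $(\alpha,\beta)$ above the line.
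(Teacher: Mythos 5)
Your proposal is correct and is essentially the paper's own proof: the same two-element directed-cut instance with $\ell(u_1)=0$, $\ell(u_2)=r$, the same witness $OPT=\{u_2\}$, and the same observation that points above the line are exactly those with $\alpha+r\beta>r$ while the algorithm's output has value at most $r$ (the paper simply builds the $\eps$-perturbation of the edge weight $f(\{u_1\})=r+\eps/2$ in from the start rather than first analyzing the tie case, which is the only cosmetic difference).
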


\begin{proof} 
The points $(\alpha,\beta)$ lying above the line connecting $(0,1)$ and $\paren{\frac{1}{r+1+r^{-1}}, \frac{r+1}{r+1+r^{-1}}}$ are precisely those that satisfy $\alpha+\beta r=r+\eps$ for some $\eps>0$. Define $f(S)$ to be the sum of two weighted cut functions:
\[\NN\triangleq \{u_1,u_2\}\]
\[f(S)\triangleq (r+\eps/2) \cdot [u_1\in S\text{ and }u_2\not\in S]+1\cdot [u_2\in S\text{ and }u_1\not\in S]\]
\[\ell(u_1)=0, \ell(u_2)=r\]
The weights of the directed edges are chosen such that if the variant of \DetDG{} considers $u_1$ before $u_2$, it will compute
\[g(u_1|X_0)=r+\eps/2>-rg(u_1|Y_0\backslash \{u_1\})=r,\]
so it will return a set $T$ satisfying $u_1\in T$, implying that $f(T)+\ell(T)\le r+\eps/2$ regardless of whether $u_2\in T$ or not. If we define $OPT\triangleq \{u_2\}$, then $f(OPT)=1$ and $\ell(OPT)=r$, so we get
\[f(T)+\ell(T)=r+\eps/2 <r+\eps = \alpha +\beta r= \alpha f(OPT) + \beta \ell(OPT)\]
implying that an $(\alpha,\beta)$-approximation is not achieved.
\end{proof}

Next, we generalize the construction of \Cref{thm:deterministic-dg-tight} to show that \Cref{thm:randomized-dg-better} is tight for \RanDG{}.

\begin{theorem}\label{thm:randomized-dg-tight}

For any $r\ge 1$ and $\eps>0$, there are instances of \RUSM{} with non-negative $\ell$ where \RanDG{} does not provide an $\paren{\alpha,\beta}=\paren{\frac{2}{r+2+r^{-1}}+\eps,\frac{r+2}{r+2+r^{-1}}}$-approximation.

\end{theorem}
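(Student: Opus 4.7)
The plan is to construct, for each $r\ge 1$ and $\eps>0$, an instance of \RUSM{} with non-negative $\ell$ on which \RanDG{}'s expected objective falls strictly below $(\alpha^*+\eps)f(OPT)+\beta^*\ell(OPT)$, where $\alpha^*\triangleq \frac{2}{r+2+r^{-1}}$ and $\beta^*\triangleq \frac{r+2}{r+2+r^{-1}}$. The construction will generalize the 2-element directed-cut construction used in \Cref{thm:deterministic-dg-tight}.

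First I would identify the tightness conditions for the bound proved in \Cref{thm:randomized-dg-better}. That bound was obtained by summing \Cref{ineq:randomized-dg-new} over all $n$ processing steps and then lower-bounding $g(\NN)$ by $\ell(OPT)$, so for it to be tight we need (a) the AM-GM inequality $r^{-1}a_i^2+rb_i^2\ge 2a_ib_i$ to hold with equality at every step, i.e., $a_i=r\,b_i$; (b) the marginal-loss bound $\EE[g(OPT_{i-1})-g(OPT_i)]\le \frac{a_ib_i}{a_i+b_i}$ to be tight at every step, which by submodularity forces $g(u_i\mid OPT_{i-1})$ to coincide with $g(u_i\mid X_{i-1})$ when $u_i\in OPT$ (and analogously for $u_i\notin OPT$); and (c) $f(\NN)=0$ with $\ell(\NN)=\ell(OPT)$, so that $g(\NN)=\ell(OPT)$.

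As a warm-up, one can try the same 2-element instance used for \DetDG{}: $\NN=\{u_1,u_2\}$, $f(S)=(r+\delta)[u_1\in S, u_2\notin S]+[u_2\in S, u_1\notin S]$, $\ell(u_1)=0$, $\ell(u_2)=r$. On this instance \RanDG{} adds $u_1$ with probability $\frac{r+\delta}{r+\delta+1}$ and otherwise removes it, with step~2 deterministic in each branch, giving $\EE[f(T)+\ell(T)] = \frac{(r+\delta)^2+r+1}{r+\delta+1}$. Taking $OPT=\{u_2\}$ so that $f(OPT)=1,\ \ell(OPT)=r$, a direct calculation shows that as $\delta\to 0$ the gap $\EE[f(T)+\ell(T)]-\alpha^*f(OPT)-\beta^*\ell(OPT)$ approaches the positive constant $1/(r+1)^2$. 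Hence this single-gadget instance only proves $(\alpha,\beta)$-inapproximability for $\eps$ above the fixed threshold $1/(r+1)^2$, not for arbitrary $\eps$.

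The main obstacle is shrinking this constant gap to $o(1)\cdot f(OPT)$, thereby obtaining inapproximability for arbitrarily small $\eps$. The plan is to build a more elaborate $n$-element instance, parameterized by $n=n(\eps)$, in which every processing step of \RanDG{} satisfies $a_i=r\,b_i$ exactly (so that AM-GM is tight at every step) and the marginal structure of $f$ makes the Buchbinder-et-al.\ bound tight as well. A natural candidate is a ``chain'' of carefully scaled copies of the 2-element gadget, where the randomization at each step feeds into the marginals at the next, so that the cumulative excess of $\EE[f(T)+\ell(T)]$ over $\alpha^*f(OPT)+\beta^*\ell(OPT)$ vanishes as $n\to\infty$. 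The bulk of the work would be specifying such a construction explicitly, computing $\EE[f(T)+\ell(T)]$ by induction on the branching structure of \RanDG{}, and verifying conditions (a)--(c) in the appropriate limit so that for any fixed $\eps>0$ we can choose $n$ large enough to make the gap less than $\eps\cdot f(OPT)$.
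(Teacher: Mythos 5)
Your proposal correctly diagnoses the tightness conditions for \Cref{thm:randomized-dg-better} and correctly computes that the two-element gadget from \Cref{thm:deterministic-dg-tight} leaves a fixed positive gap of $1/(r+1)^2$, but it then stops at a vague ``chain of scaled copies'' and never specifies a working instance. This is the crux of the theorem, and the missing idea is not a chain but a \emph{star}: the paper takes $\NN=\{u_1,\dots,u_n\}$ with
\[
f(S)=\frac{1}{n-1}\sum_{i=1}^{n-1}\paren{r\,[u_i\in S\wedge u_n\notin S]+[u_n\in S\wedge u_i\notin S]},\qquad \ell(u_i)=0\ (i<n),\ \ell(u_n)=r-1,
\]
and $OPT=\{u_n\}$. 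The crucial structural point you are missing is that because every edge is incident to $u_n$, which is processed last, the marginals for each $u_i$ ($i<n$) are $a_i=r/(n-1)$ and $b_i=1/(n-1)$ \emph{regardless of the randomization history}; hence the $n-1$ inclusion decisions of \RanDG{} are i.i.d.\ $\mathrm{Bernoulli}(r/(r+1))$, and a Chernoff bound concentrates the selected fraction around $r/(r+1)$. This is how the per-instance excess is driven below $\eps\cdot f(OPT)$ as $n\to\infty$. A chain (e.g.\ $u_1\to u_2\to\dots\to u_n$) does not have this property: the decision at $u_i$ changes the marginals at $u_{i+1}$, so the process is path-dependent, condition $a_i=r\,b_i$ will generally fail after the first step, and the ``induction on the branching structure'' you sketch would be substantially harder and is unlikely to produce the required cancellation. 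In short, the plan identifies the right target (shrink the gap to $o(1)\cdot f(OPT)$ by moving to many elements) but misses the decoupling trick that makes the analysis tractable, which is essentially the entire proof.
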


\begin{proof} 
Define $f(S)$ to be the sum of $2(n-1)$ weighted directed cut functions:
\[
f(S)=\frac{1}{n-1}\brac{\sum_{i=1}^{n-1}\paren{r[u_i\in S\wedge u_n\not\in S]+[u_n\in S\wedge u_i\not\in S]}}
\]
and $\ell(u_1)=\ell(u_2)=\dots=\ell(u_{n-1})=0, \ell(u_n)=r-1$. For each $i\in [1,n-1]$, \RanDG{} will compute $a_i=\frac{r}{n-1}$ and $b_i=\frac{1}{n-1}$, so it will include each of $u_{1\dots n-1}$ in its returned set $X_n$ independently with probability $\frac{r}{r+1}$ each. Thus, for any $\eps>0$, the following inequality holds by a Chernoff bound for sufficiently large $n$:
\[
\Pr\brac{\abs{\frac{|X\cap \{u_1,\dots,u_{n-1}\}|}{n-1}-\frac{r}{r+1}}\ge \frac{\eps}{2r}}=o(1),
\]
Assuming $\abs{\frac{|X_n\cap \{u_1,\dots,u_{n-1}\}|}{n-1}-\frac{r}{r+1}}< \frac{\eps}{2r}$ holds, it follows that
\[
f(X)<r\cdot \paren{\frac{r}{r+1}+\frac{\eps}{2r}}=\frac{r^2}{r+1}+\frac{\eps}{2}.
\]
regardless of whether $u_n$ is included in $X_n$ or not. On the other hand, if we define $OPT\triangleq \{u_n\}$, then
\begin{align*}
\alpha f(OPT)+\beta\ell(OPT)&= \alpha +(r-1)\beta\\
&=\frac{2}{r+2+r^{-1}}+\eps+\frac{(r-1)(r+2)}{r+2+r^{-1}}\\
&=\frac{r^2+r}{r+2+r^{-1}}+\eps\\
&=\frac{r^2}{r+1}+\eps.
\end{align*}
As $f(X_n)<\alpha f(OPT)+\beta \ell(OPT)-\frac{\eps}{2}$ with high probability and $f(X_n)$ is bounded above by a constant independent of $n$, $\EE[f(X_n)]<\alpha f(OPT)+\beta \ell(OPT)$ for sufficiently large $n$, implying that \RanDG{} does not provide an $(\alpha,\beta)$ approximation for this instance.
\end{proof}

Unfortunately, neither version of double greedy achieves any $(\alpha,\beta)$-approximation when $\ell$ is non-positive rather than non-negative. We defer further discussion to \Cref{subsec:online-algos}.


\subsection{Additional Approximation Algorithms}\label{subsec:non-neg-usm-approx}

In this subsection we prove \Cref{thm:rusmNonnegComb}. 
The results of this subsection and the next are summarized in \Cref{fig:ell-non-neg-usm}. 

\rusmNonnegComb

\begin{table}[h]
    \centering
    \begin{tabular}{|c|c|c|c|c|}
    \hline
    $\beta$ & $\alpha(\beta)$ (\Cref{thm:rusmNonnegComb}) & $\alpha(\beta)$ (\Cref{thm:randomized-dg-better}) \\
    \hline
    0.85 & 0.4749 & 0.4746 \\
    0.9 & 0.4493 & 0.4325 \\
    0.95 & 0.4226 & 0.3472 \\
    1 & 0.3856 & 0 \\
    \hline
    \end{tabular}
    \caption{$(\alpha(\beta),\beta)$-approximations for \texttt{RegularizedUSM} with non-negative $\ell$}
    \label{tab:nonneg-comb}
\end{table}

First, we show that the result for $\beta=1$ easily follows from \Cref{thm:0.385-non-pos}.

\begin{restatable}{lemma}{rusmNonnegBetaOne}\label{thm:rusmNonnegBetaOne}

For \RUSM{} with non-negative $\ell$, there is a $(0.385,1)$-approximation algorithm.

\end{restatable}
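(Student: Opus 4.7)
The plan is to reduce the non-negative $\ell$ case of \RUSM{} to the non-positive $\ell$ case of \RCSM{} already handled by \Cref{thm:0.385-non-pos}, via complementation of the ground set. Given an instance $(f,\ell)$ of \RUSM{} with $\ell \ge 0$, define
\[
\tilde f(S) \triangleq f(\NN \setminus S), \qquad \tilde \ell(S) \triangleq -\ell(S).
\]
Then $\tilde f$ is again non-negative submodular (submodularity is preserved under complementing the argument, and non-negativity is inherited from $f$), and $\tilde \ell$ is a non-positive linear function. This places $(\tilde f, \tilde \ell)$ within the regime of \Cref{thm:0.385-non-pos}, which I invoke with the trivial (free) matroid $\II = 2^{\NN}$ so that \RCSM{} degenerates to \RUSM{}.

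Running the algorithm of \Cref{thm:0.385-non-pos} on $(\tilde f, \tilde \ell)$ produces a random set $S \subseteq \NN$ with
\[
\EE[\tilde f(S) + \tilde \ell(S)] \;\ge\; \max_{S^* \subseteq \NN}\,[\,0.385\cdot \tilde f(S^*) + \tilde \ell(S^*)\,].
\]
The algorithm for the original problem then outputs $T \triangleq \NN \setminus S$.

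It remains to translate this guarantee back to $(f,\ell)$. On the left, $\tilde f(S) = f(T)$ and, by linearity of $\ell$, $\tilde \ell(S) = -\ell(S) = \ell(T) - \ell(\NN)$, so the expectation becomes $\EE[f(T)+\ell(T)] - \ell(\NN)$. On the right, substituting $T^* = \NN \setminus S^*$ gives
\[
\max_{T^* \subseteq \NN}\,[\,0.385\cdot f(T^*) - \ell(\NN \setminus T^*)\,] \;=\; \max_{T^* \subseteq \NN}\,[\,0.385\cdot f(T^*) + \ell(T^*)\,] \;-\; \ell(\NN).
\]
The additive constants $\ell(\NN)$ cancel, and I obtain exactly the $(0.385,1)$-approximation guarantee $\EE[f(T)+\ell(T)] \ge \max_{T^*}[\,0.385\cdot f(T^*)+\ell(T^*)\,]$.

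There is essentially no technical obstacle; the argument is pure bookkeeping of a complementation reduction. The only two things to verify carefully are (i) that $\tilde f$ remains non-negative submodular, which is immediate, and (ii) that \Cref{thm:0.385-non-pos}, although stated for \RCSM{}, covers the unconstrained case needed here by instantiation with the free matroid. Any subtlety sits in recognizing that the $\alpha$ and $\beta$ coefficients are preserved by this reduction precisely because the constant $\ell(\NN)$ that appears on both sides of the inequality has the same sign and cancels cleanly.
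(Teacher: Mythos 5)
Your reduction is exactly the paper's proof: complement the ground set, negate $\ell$, invoke \Cref{thm:0.385-non-pos} on the resulting non-positive-$\ell$ instance (with the free matroid), and observe that the additive $\ell(\NN)$ terms cancel. The bookkeeping is correct and nothing further is needed.
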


\begin{proof} 

Define $g(S)\triangleq f(\NN\backslash S)$, which is also non-negative submodular. Then apply \Cref{thm:0.385-non-pos} on $(g,-\ell)$ to find $T\subseteq \NN$ such that
\begin{align*}
\EE[g(T)-\ell(T)]&\ge \max_S[0.385\cdot g(S)-\ell(S)]\\
&=\max_S[0.385 f(S)-\ell(\NN\backslash S)]\\
&=\max_S[0.385 f(S)+\ell(S)]-\ell(\NN).
\end{align*}
Setting $T'=\NN\backslash T$, we have
\begin{align*}
\EE[f(T')-\ell(\NN\backslash T')]&=\EE[f(T')-\ell(\NN)+\ell(T')]\\
&\ge \max_S[0.385 f(S)+\ell(S)]-\ell(\NN).
\end{align*}
Adding $\ell(\NN)$ to both sides, we conclude that
\begin{align*}
\EE[f(T')+\ell(T')]\ge \max_S[0.385 f(S)+\ell(S)].
\end{align*}
So an algorithm returning $T'$ would achieve a $(0.385,1)$-approximation as desired.
\end{proof}

For $\beta$ close to one, we can obtain better $(\alpha,\beta)$-approximations than \Cref{thm:randomized-dg-better} alone provides by combining double greedy with the following corollary of \Cref{thm:rusmNonnegBetaOne}: 

\begin{corollary}\label{corollary:0.385-non-neg}
An $(\alpha,\beta)$-approximation algorithm for \texttt{RegularizedUSM} for the case of $\ell$ non-positive may be used to return a set $T\subseteq \NN$ such that
\begin{equation*}
\EE[f(T)]\ge \alpha f(OPT)+\beta \ell(OPT)+(1-\beta)\ell(\NN).
\end{equation*}
for the case of $\ell$ non-negative.
\end{corollary}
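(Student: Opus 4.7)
The plan is to directly generalize the argument used for Lemma~\ref{thm:rusmNonnegBetaOne} (which handles the special case $(\alpha,\beta) = (0.385, 1)$). Since the hypothesized algorithm requires a non-positive linear function, I would symmetrize the problem by defining $g(S) \triangleq f(\NN \backslash S)$, which is again non-negative submodular, and by negating the coefficients of $\ell$. Because $\ell$ is non-negative, $-\ell$ has non-positive coefficients, so $(g, -\ell)$ is a valid input for the hypothesized algorithm; let $T$ denote its output.

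Next, I would invoke the $(\alpha,\beta)$-approximation guarantee on $(g,-\ell)$ with the test set $S = \NN \backslash OPT$, where $OPT$ is the benchmark set on the non-negative-$\ell$ side. Using $g(\NN \backslash OPT) = f(OPT)$ and $\ell(\NN \backslash OPT) = \ell(\NN) - \ell(OPT)$, the guarantee rearranges to $\EE[g(T) - \ell(T)] \ge \alpha f(OPT) + \beta \ell(OPT) - \beta \ell(\NN)$. Then I would return $T' \triangleq \NN \backslash T$ as the final output; substituting $g(T) = f(T')$ and $\ell(T) = \ell(\NN) - \ell(T')$ converts the preceding bound into
\[\EE[f(T') + \ell(T')] \ge \alpha f(OPT) + \beta \ell(OPT) + (1-\beta) \ell(\NN),\]
which is the target inequality.

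There is no substantive obstacle: the entire ``idea'' is the choice $S = \NN \backslash OPT$, which is precisely what produces the additive $(1-\beta)\ell(\NN)$ term on the right. The only minor friction is notational: the stated bound has $\EE[f(T)]$ on the left, while the natural derivation yields $\EE[f(T)+\ell(T)]$. Since $\ell$ is non-negative, $\ell(T') \ge 0$, so the two quantities differ only by a non-negative amount; I would read the statement as shorthand for the $f + \ell$ version, which specializes to Lemma~\ref{thm:rusmNonnegBetaOne} exactly when $(\alpha,\beta) = (0.385, 1)$.
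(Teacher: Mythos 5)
Your argument is correct and is exactly the paper's approach: it is the direct generalization of the proof of Lemma~\ref{thm:rusmNonnegBetaOne} (complement the ground set, negate $\ell$, invoke the non-positive-$\ell$ algorithm, and complement the output), which the paper leaves implicit for this corollary. You also correctly identify that the left-hand side should read $\EE[f(T)+\ell(T)]$ rather than $\EE[f(T)]$: that is what the complementation argument produces, and since $\ell\ge 0$ makes $\EE[f(T)]$ the \emph{smaller} quantity, the bound with $\EE[f(T)]$ alone would be a strictly stronger claim that this derivation does not establish. The $f+\ell$ reading is also the one \Cref{thm:rusmNonnegComb} actually uses when it feeds the hull point $(\alpha,\beta,1-\beta)$ into its linear program to bound the expected value $\BOPT'$ of the returned solution.
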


Now we can prove \Cref{thm:rusmNonnegComb} by combining \Cref{corollary:0.385-non-neg} with \Cref{thm:randomized-dg-better}.

\begin{proof}[Proof of \Cref{thm:rusmNonnegComb}]

Our algorithm returns the best of the solutions returned by the following two algorithms:
\begin{enumerate}
    \item Double greedy on $f+\ell$
    \item \Cref{corollary:0.385-non-neg} using \Cref{thm:rusmNonposExtended} for $\beta\in \TT\triangleq\{(\alpha(1+0.01x),1+0.01x)\mid x\in \ZZ\text{ and }0\le x\le 30\}$
\end{enumerate}

As with \Cref{thm:rusmNonposExtended}, for a fixed $\beta$ we can lower bound the $\alpha(\beta)$ guaranteed by the algorithm above by the solution to the following linear program after choosing the set $\mathcal R$ appropriately:
\begin{align*}
\max\,       & x_1 \\
\text{s.t.}\,& (x_1,x_2,x_3)\in \begin{aligned}[t]
conv\Biggl(&\{(\alpha,\beta,1-\beta) \mid (\alpha,\beta)\in \TT\text{ and }\exists (\alpha,\beta)\text{-approximation algorithm for }\ell \le 0\} \cup  \\
           &\left\{\paren{\frac{2}{(r+1/r)^2}, \frac{2}{(r+1/r)^2}, \frac{r^2}{(r+1/r)^2}} \middle|r\in \mathcal R\right\}\Biggr)
\end{aligned}\\
\text{and}\, & x_2+x_3\ge \beta, x_3\ge 0
\end{align*}
Let $\BOPT'$ denote the expected value of the returned solution. Any point $(x_1,x_2,x_3)$ within the convex hull satisfies the following inequality:
$$\BOPT'\ge x_1 f(OPT) + x_2 \ell(OPT)+x_3\ell(\NN).$$
The conditions $x_2+x_3\ge \beta, x_3\ge 0$ ensure that $\BOPT'\ge x_1 f(OPT)+\beta \ell(OPT)$.
\end{proof}

\subsection{Inapproximability} \label{subsec:non-neg-usm-inapprox}

In this subsection, we prove \Cref{thm:0.478-1-inapprox-nonneg,thm:0.5-0.943-inapprox-nonneg}. 

\begin{restatable}{theorem}{inapproxRusmNonneg}\label{thm:0.478-1-inapprox-nonneg}

For some $\eps>0$, there are instances of \RUSM{} with non-negative $\ell$ such that $(0.478,1-\eps)$ is inapproximable.

\end{restatable}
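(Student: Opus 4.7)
The plan is to apply \Cref{lemma:symmetry-gap-inapprox-linear} to a variant of the symmetry-gap construction from \Cref{sec:cardinality-inapprox}, following the same blueprint as the proof of \Cref{thm:inapprox-usm-non-pos}. I would retain the Gharan--Vondrak ground set $\NN = \{a, b\} \cup \{a_i, b_i\}_{i=1}^k$ and submodular function $f$ (the sum of two directed hyperedge cut functions of weight $\kappa$ and one undirected edge cut of weight $1-\kappa$), but replace the non-positive $\ell$ used in \Cref{thm:inapprox-usm-non-pos} with a $\GG$-symmetric non-negative linear function parameterized by two scalars $\ell_q \triangleq \ell(a) = \ell(b) \ge 0$ and $\ell_p \triangleq \ell(a_i) = \ell(b_i) \ge 0$.

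The analysis then follows the same steps as in \Cref{thm:inapprox-usm-non-pos}. Strong $\GG$-symmetry of the augmented instance is immediate from the fact that $f$ and $\ell$ are both invariant under the permutation group generated by swapping $a \leftrightarrow b$ and cycling the tails. The symmetrized value reduces to a two-variable optimization
\[
\max_{\bx \in [0,1]^{\NN}} \brac{F(\overline{\bx}) + L(\overline{\bx})} = \max_{q \in [0,1],\, p \in [0, k]} \brac{\hat{F}(q, p) + 2q\ell_q + 2p\ell_p},
\]
where $\hat{F}(q, p) \to (1-\kappa)\cdot 2q(1-q) + 2\kappa(1-q)(1-e^{-p})$ as $k \to \infty$. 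On the asymmetric side, I would compute $\max_S[0.478\, f(S) + (1-\eps)\,\ell(S)]$ across the canonical candidate sets $\{a, b_1\}$, $\{a, b\}$, $\NN \setminus \{b\}$, and $\NN$, each of which dominates in a different regime of $(\ell_q, \ell_p)$, and then perform a numerical optimization over $(\kappa, \ell_q, \ell_p, \eps)$ analogous to the search summarized in \Cref{tab:inapprox-usm-non-pos} to verify the strict inequality required by \Cref{lemma:symmetry-gap-inapprox-linear}.

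The main obstacle is that a non-negative $\ell$ rewards inclusion on both sides of the symmetry-gap inequality equally: in particular the symmetric solution can always take $\bx = \bone$ to recover $\ell(\NN)$ exactly, matching the largest asymmetric set. Unlike the non-positive case in \Cref{thm:inapprox-usm-non-pos}, where $\ell_p < 0$ acts as a soft cap shrinking the effective $p$ on the symmetric side, here any symmetry gap has to come almost entirely from the $f$-contribution combined with a delicate $F$-$L$ trade-off in the symmetric maximizer. This narrow margin is precisely what forces $\beta = 1-\eps$ rather than $\beta = 1$ in the theorem statement: the small slack $\eps > 0$ compensates for the symmetric side's unconstrained access to $\ell(\NN)$ and preserves the strict inequality in \Cref{lemma:symmetry-gap-inapprox-linear}.
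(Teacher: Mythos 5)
Your approach differs from the paper's and, unfortunately, cannot succeed. The paper does not build a symmetry-gap instance from the original Gharan--Vondrak $f$ together with a non-negative $\ell$. Instead it first observes a complementation reduction: an $(\alpha,1)$-approximation for the instance $(f(\NN\backslash S),\,-\ell(S))$ yields one for $(f(S),\ell(S))$, so the $(0.478,1)$ result for $\ell\le 0$ from \Cref{thm:inapprox-usm-non-pos} transfers verbatim to $\ell\ge 0$. It then sharpens $(0.478,1)$ to $(0.478,1-\eps)$ by applying \Cref{lemma:symmetry-gap-inapprox-linear} to the \emph{complemented} construction $f(S)\triangleq f_-(\NN\backslash S)$, $\ell(S)\triangleq -\ell_-(S)$ at a fixed finite $k$, where the resulting $\alpha'$ is strictly below $0.478$; the positive slack $0.478-\alpha'$ is then traded for a positive $\eps$ in the $\beta$-coefficient. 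The missing idea in your plan is this complementation of $f$.

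Keeping the original $f$ cannot satisfy the hypothesis of \Cref{lemma:symmetry-gap-inapprox-linear} for any parameters. Since the $\GG$-symmetric non-negative $\ell$ satisfies $L(\overline\bx)=L(\bx)$, the symmetric side never loses any $\ell$-value, while the uncomplemented $f$ has an unconstrained symmetric optimum above $1/2$ once $p$ is allowed to reach $k$. Concretely, partition candidate sets $S$ by $|S\cap\{a,b\}|$. If $|S\cap\{a,b\}|=1$, then $f(S)\le 1$ and $\ell(S)\le\ell_q+2k\ell_p$, so $0.478 f(S)+(1-\eps)\ell(S)\le 0.478+(1-\eps)(\ell_q+2k\ell_p)$. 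If $|S\cap\{a,b\}|=2$, then $f(S)=0$ and the value is at most $(1-\eps)(2\ell_q+2k\ell_p)$. If $|S\cap\{a,b\}|=0$, then $f(S)\le 2\kappa$ and the value is at most $0.478\cdot 2\kappa+(1-\eps)\cdot 2k\ell_p$. Meanwhile the symmetric side is at least $\tfrac{1+\kappa}{2}+\ell_q+2k\ell_p$ (take $\bx_a=\bx_b=\tfrac{1}{2}$ and all $\bx_{a_i}=\bx_{b_i}=1$, which is its own symmetrization), at least $2\ell_q+2k\ell_p$ (take $\bx=\bone$), and at least $2\kappa+2k\ell_p$ (take $\bx_a=\bx_b=0$, all $\bx_{a_i}=\bx_{b_i}=1$). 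Using $\tfrac{1+\kappa}{2}\ge\tfrac{1}{2}>0.478$ and $\eps\ge 0$, each symmetric lower bound dominates the corresponding asymmetric upper bound, so $\max_{\bx}[F(\overline\bx)+L(\overline\bx)]\ge\max_S[0.478 f(S)+(1-\eps)\ell(S)]$ for every $\kappa\in[0,1]$, $\ell_q,\ell_p\ge 0$, $\eps\ge 0$. The numerical optimization you propose therefore has an empty feasible region. A further misconception in your final paragraph: decreasing $\beta$ from $1$ to $1-\eps$ \emph{shrinks} the right-hand side of the lemma's required inequality and so only makes it harder, not easier, to establish; the $\eps$ in the theorem is a bonus that the paper earns from $\alpha'<0.478$ at finite $k$, not a slack that repairs the symmetric side's access to $\ell(\NN)$. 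The soft cap that $\ell_p<0$ imposes on the symmetric $p$ in \Cref{thm:inapprox-usm-non-pos} is exactly what makes that proof work, and it vanishes when $\ell\ge 0$.
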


Note that this is much stronger than the $(0.4998+\eps,1)$-inapproximability provided by \cite[Lemma 6.3]{bodek2022maximizing}.

\begin{proof}[Proof Sketch]

We start by showing $(0.478,1)$-inapproximability, which is easier. It suffices to show the following generalization for \Cref{thm:rusmNonnegBetaOne}. Any $(\alpha,1)$-approximation algorithm for the \RUSM{} instance $(f(\NN\backslash S),-\ell(S))$ immediately implies a $(\alpha,1)$-approximation algorithm for $(f(S),\ell(S))$. Letting $\NN\backslash T$ be the set returned by the former approximation algorithm, we find
\begin{align*}
\EE[f(\NN \backslash (\NN\backslash T))-\ell(\NN\backslash T)]&\ge \alpha f(\NN\backslash OPT')-\ell(OPT')\\
\implies \EE\brac{f(T)-\ell(\NN)+\ell(T)} &\ge \alpha f(\NN\backslash OPT')-\ell(OPT')
\end{align*}
Substituting $OPT'=\NN\backslash OPT$ gives
\begin{align*}
\EE\brac{f(T)-\ell(\NN)+\ell(T)}&\ge \alpha f(OPT)-\ell(\NN)+\ell(OPT)\\
\implies \EE[f(T)+\ell(T)]&\ge \alpha f(OPT)\ell(OPT).
\end{align*}
Note that when $\ell$ is set to be non-negative, this means that any $(\alpha,1)$-approximation algorithm for $\ell$ non-positive implies an $(\alpha,1)$-approximation algorithm for $\ell$ non-negative. Similarly, by setting $\ell$ to be non-positive, we get the implication in the opposite direction. This also means that $(\alpha,1)$-inapproximability results for one sign of $\ell$ can be converted to corresponding inapproximability results for the other sign of $\ell$. Thus, the $(0.478,1)$-inapproximability result for non-positive $\ell$ implies the same inapproximability result for non-negative $\ell$.

The slightly stronger result of $(0.478,1-\eps)$ inapproximability for some $\eps>0$ follows from modifying the symmetry gap construction of \Cref{thm:inapprox-usm-non-pos}. Let $(f_{-},\ell_{-})$ be the $f$ and $\ell$ defined in \Cref{thm:inapprox-usm-non-pos} for $\beta=1$. Then let
\[f(S)\triangleq f_{-}(\NN\backslash S), \ell(S)\triangleq -\ell_-(S).\]
For $k$ sufficiently large, this instance shows $(\alpha',1)$-inapproximability for some $\alpha<0.478$. Furthermore, if we fix $k$ to be constant, then the desired result follows; specifically, we can choose $\eps>0$ such that
\[
\alpha' f(OPT) + \ell(OPT) = 0.478 f(OPT)+(1-\eps)\ell(OPT).\qedhere
\]
\end{proof}

Next, we provide an inapproximability result for $\alpha=0.5$ by fixing $k=2$ in the construction for \Cref{thm:0.478-1-inapprox-nonneg}.

\begin{restatable}{theorem}{inapproxRusmNonnegLimit}\label{thm:0.5-0.943-inapprox-nonneg}

For any $\eps>0$, there are instances of \RUSM{} with non-negative $\ell$ such that $(0.5,2\sqrt 2/3 \approx 0.943+\eps)$ is inapproximable.

\end{restatable}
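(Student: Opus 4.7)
The plan is to specialize the symmetry-gap construction of \Cref{thm:0.478-1-inapprox-nonneg} to the case $k=2$ and push its parameter optimization to the extremal setting that achieves $\alpha = 1/2$. Recall that \Cref{thm:0.478-1-inapprox-nonneg} obtains a non-negative-$\ell$ instance by complementing the construction of \Cref{thm:inapprox-usm-non-pos}: we set $f(S) \triangleq f_-(\NN \setminus S)$ and $\ell(S) \triangleq -\ell_-(S)$, where $(f_-, \ell_-)$ is the non-positive-$\ell$ instance with parameters $\kappa, \ell_q, \ell_p$. Fixing $k = 2$, the symmetrized multilinear quantity that \Cref{lemma:symmetry-gap-inapprox-linear} asks us to bound becomes
\[
F(\overline{\bx}) + L(\overline{\bx}) = 2(1-\kappa)q(1-q) + 2\kappa q(1 - r^2) + 2q|\ell_q| + 4r|\ell_p|,
\]
where $q, r \in [0,1]$ are the two independent entries of the symmetrized vector and $|\ell_q|, |\ell_p| \ge 0$ are the absolute values of the non-positive coefficients of $\ell_-$.

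I would then set up the inapproximability inequality with a carefully chosen target set $S$, such as $\NN \setminus \{a, b_1\}$ or $\NN$, and solve the inner maximization over $(q,r)$ in closed form: the first-order condition in $q$ gives an interior optimum in the regime $|\ell_q| + 2\kappa \le 1$, after which the resulting expression, a quadratic in $r^2$ plus a linear-in-$r$ correction from $|\ell_p|$, is maximized either at a critical point or at the boundary $r = 1$, depending on the size of $|\ell_p|$. Substituting back and comparing with $\tfrac12 f(S) + \beta \ell(S)$ yields a quadratic inequality in $|\ell_p|$ whose discriminant, set to zero, pins down the smallest $\beta$ for which $(1/2, \beta)$-inapproximability is witnessed. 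Because the hyperedge factor $1 - r^2$, rather than $1 - e^{-p}$, replaces the exponential used in the $k \to \infty$ analysis of \Cref{thm:inapprox-two-ln-two}, the resulting threshold is algebraic; the $\sqrt{2}$ in $2\sqrt{2}/3$ emerges as the square root of this discriminant.

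The main obstacle is ensuring that the extremal regime in which the threshold collapses to $2\sqrt{2}/3$ is simultaneously realized by an interior $(q, r)$ critical point and by a target set $S$ that remains optimal on the right-hand side as $\beta \to (2\sqrt{2}/3)^+$. This requires a short case analysis ruling out competing target sets, such as $\{a\}$, $\{a, b\}$, or $\NN \setminus \{a_1, b_1\}$, and verifying that the interior branch of the $q$-maximization, rather than the boundary $q = 1$, is indeed the active one at the optimum. Both checks reduce to calculations on a constant-size ground set. Once the extremal parameters $(\kappa^*, \ell_q^*, \ell_p^*)$ are identified, a continuity argument at the boundary of the feasible parameter region then yields $(1/2, 2\sqrt{2}/3 + \eps)$-inapproximability for every $\eps > 0$, as claimed.
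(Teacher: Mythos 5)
Your high-level plan matches the paper's: complement the non-positive-$\ell$ construction of \Cref{thm:inapprox-usm-non-pos}, fix $k=2$, apply \Cref{lemma:symmetry-gap-inapprox-linear} with target set $\NN\setminus\{a,b_1\}$, and take a limit at the boundary of the feasible parameter region. However, the decisive step is missing, and the mechanism you propose for extracting $2\sqrt2/3$ is not the one that works. You claim the threshold arises from setting to zero the discriminant of a quadratic inequality in $|\ell_p|$; in fact no such discriminant computation occurs. The actual argument is a limiting one: writing $g(p)\triangleq\max_{0\le q\le1}\bigl[(1-\kappa)2q(1-q)+\kappa 2(1-q)(1-(1-p/k)^k)\bigr]=\frac{(1-\kappa+\kappa c)^2}{2(1-\kappa)}$ with $c=1-(1-p/2)^2$, one checks that for $\kappa$ small, $g(p^*)<\tfrac12$ precisely when $c<\tfrac12$, i.e.\ $p^*<2-\sqrt2$ --- this is where the $\sqrt2$ enters, as the point where the hyperedge cut value reaches $\tfrac12$, not from any discriminant. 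Since $g$ is increasing and concave in $p$, one can then choose $\ell_p>0$ so that $\operatorname{argmax}_p[g(p)-2p\ell_p]=p^*$ and the maximum stays below $\tfrac12-2p^*\ell_p$, which gives
\begin{equation*}
\beta^*\le\frac{2k\ell_p-2p^*\ell_p}{(2k-1)\ell_p}=\frac{4-2p^*}{3}\xrightarrow[p^*\to2-\sqrt2]{}\frac{2\sqrt2}{3}.
\end{equation*}
Note that the ratio $\frac{2k-2p^*}{2k-1}$ comes from comparing $\ell(\NN)=2k\ell_p$ against $\ell(\NN\setminus\{a,b_1\})=(2k-1)\ell_p$; your proposal never identifies this ratio, which is the other essential ingredient.

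Two smaller issues: your symmetrized expression carries an $\ell_q$ term (weights on $a,b$), which the paper's proof of this theorem sets to zero and which only adds an unexercised degree of freedom; and your closed-form inner maximization over $(q,r)$ ``at a critical point or at the boundary $r=1$'' is not actually needed --- only the $q$-maximization is done in closed form, while the $p$-maximization is controlled qualitatively via monotonicity and concavity of $g$. As written, your proposal would need the above computation supplied before it constitutes a proof.
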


\begin{proof} 

Again, let $(f_-,\ell_-)$ be the $f$ and $\ell$ defined in \Cref{thm:inapprox-usm-non-pos}. Define
\[f(S)\triangleq f_-(\NN\backslash S),\]
\[p\triangleq k-\frac{\sum_{i=1}^k(\bx_{a_i}+\bx_{b_i})}{2}\in [0,k],\]
\[q\triangleq 1-\frac{\bx_a+\bx_b}{2}\in [0,1].\]
\[\ell(S)\triangleq -\ell_-(S)=\ell_p(2k-2p)\]
where we may choose any real number $\ell_p>0$. Applying \Cref{lemma:symmetry-gap-inapprox-linear}, we find that the LHS is given by
\[\max_{\bx\in [0,1]^{\NN}}[(F+L)(\overline{\bx})]=\max_{0\le p\le k, 0\le q\le 1}\brac{(1-\kappa)2q(1-q)+\kappa 2(1-q)(1-\paren{1-p/k}^k)-2p\ell_p}+2k\ell_p,\]
while the RHS is bounded below by
\[(\alpha f+\beta \ell)(\NN\backslash \{a,b_1\})=\alpha+\beta\brac{(2k-1)\ell_p}.\]
Now fix $k=2$ and $\alpha=0.5$, and define
\[g(p)\triangleq \max_{0\le q\le 1}\brac{(1-\kappa)2q(1-q)+\kappa 2(1-q)(1-\paren{1-p/k}^k)}.\]
Then the minimum $\beta$ such that we can show  $(\alpha,\beta+\eps)$-inapproximability using this technique is given by
\[\max_{0\le p\le k}\brac{g(p)-2p\ell_p}+2k\ell_p=0.5+\beta^*[(2k-1)\ell_p]\]
\[
\implies \beta^*=\min_{0\le \kappa\le 1, 0<\ell_p}\brac{\frac{\max_{0\le p\le k}\brac{g(p)-2p\ell_p}+2k\ell_p-0.5}{(2k-1)\ell_p}}.
\]
Choose any $p^*\in (0,2-\sqrt 2)$, which automatically guarantees $1-(1-p^*/k)^k<\frac{1}{2}$. For any such $p^*$, we claim that there exist $\kappa$ and $\ell_p$ such that 
\begin{equation}
\max_{0\le p\le k}\brac{g(p)-2p\ell_p}<0.5-2p^*\ell_p. \label{ineq:p-star}
\end{equation}
The reason why \Cref{ineq:p-star} holds is that, for sufficiently small $\kappa>0$, $g(p^*)<0.5$, $g(p)$ is increasing with respect to $p$, and $g(p)$ is concave down with respect to $p$. Thus, we can always choose $\ell_p>0$ so that $\text{argmax}_{0\le p\le k}[g(p)-2p\ell_p]=p^*$. From \Cref{ineq:p-star} we can finish as follows:
\[\beta^*\le \frac{2k\ell_p-2p^*\ell_p}{(2k-1)\ell_p}=\frac{4-2p^*}{3}.\]
Taking the limit as $p^*\to 2-\sqrt 2$ shows the inapproximability of $\beta^*=\frac{4-2(2-\sqrt 2)}{3}+\eps=\frac{2\sqrt 2}{3}+\eps$, as desired.

\end{proof}

\section{Non-Negative \texorpdfstring{$\ell$}{ell}: \texttt{RegularizedCSM}}\label{sec:non-neg-csm}

The results of this section are summarized in \Cref{fig:ell-non-neg-csm}. 

\subsection{Approximation Algorithms}\label{subsec:non-neg-csm-approx}

In this subsection we prove \Cref{thm:ell-nonneg-csm}. 

\approxNonnegCsm

Recall from \Cref{subsubsec:regularized-submod} that \cite{lu2021regularized} introduced \textit{distorted measured continuous greedy} and analyzed its guarantee for the case of non-positive $\ell$. Our improved results are based on generalizing the analysis to the case where $\ell$ contains both positive and negative components.

\begin{lemma}[Generalized Guarantee of Distorted Measured Continuous Greedy]\label{lemma:extended-distorted-measured}

For unconstrained $\ell$ and any $t_f\in [0,1]$, there is a polynomial-time algorithm for \RCSM{} that returns $T\in \II$ such that
\begin{equation*}
\EE[f(T)+\ell(T)]\ge (t_f e^{-t_f}-o(1))f(OPT)+(1-e^{-t_f}) \ell(OPT\cap \NN^+)+t_f\ell(OPT\cap \NN^-)
\end{equation*}
When $t_f>1$, the approximation guarantee still holds, although it is possible that $T\not\in \II$.
\end{lemma}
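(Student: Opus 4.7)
The plan is to generalize the distorted measured continuous greedy of \cite{lu2021regularized} so that its distortion correctly handles both positive and negative parts of $\ell$ at once. I would run the usual measured continuous greedy dynamics $\frac{d\by_u}{dt} = (1-\by_u)\,x_u^*(t)$, starting from $\by(0)=\mathbf{0}$, where at each time $t$ the search direction $x^*(t)\in \PP$ is chosen to maximize
\[
\sum_u (1-\by_u)\,x_u \left[e^{-(t_f-t)}\frac{\partial F}{\partial \by_u}(\by) + \ell_u\right],
\]
which is linear in $x$ and hence solvable over $\PP$. The potential I would track is the distorted objective
\[
\Phi(t)\triangleq e^{-(t_f-t)} F(\by(t)) + L(\by(t)),
\]
which satisfies $\Phi(0)=0$ and $\Phi(t_f)=F(\by(t_f))+L(\by(t_f))$, so a lower bound on $\Phi(t_f)$ directly gives a lower bound on the fractional value of $f+\ell$.

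The central computation is to differentiate $\Phi$, plug in $x^*$, and compare against $\bone_{OPT}\in \PP$. Two standard submodularity estimates enter: $\sum_{u\in OPT}(1-\by_u)\frac{\partial F}{\partial \by_u}(\by)\ge F(\by\vee \bone_{OPT}) - F(\by)$, and the non-monotone bound $F(\by\vee\bone_{OPT})\ge (1-\max_u \by_u)\,f(OPT)\ge e^{-t} f(OPT)$ using the measured continuous greedy invariant $\by_u(t)\le 1-e^{-t}$. The $F(\by)$ terms telescope, and the product $e^{-(t_f-t)}\cdot e^{-t}$ collapses to the constant $e^{-t_f}$, leaving
\[
\frac{d\Phi}{dt}\ge e^{-t_f} f(OPT) + \sum_{u\in OPT}(1-\by_u)\,\ell_u.
\]

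The key new step, which is what produces different coefficients on the positive and negative parts of $\ell$, is to bound the linear term by splitting on the sign of $\ell_u$. For $u\in OPT\cap \NN^+$ (so $\ell_u\ge 0$) the invariant gives $(1-\by_u)\ell_u\ge e^{-t}\ell_u$; for $u\in OPT\cap \NN^-$ (so $\ell_u\le 0$) the trivial bound $(1-\by_u)\le 1$ reverses direction to give $(1-\by_u)\ell_u\ge \ell_u$. Summing yields $\sum_{u\in OPT}(1-\by_u)\ell_u\ge e^{-t}\,\ell(OPT\cap \NN^+) + \ell(OPT\cap \NN^-)$, and integrating $\frac{d\Phi}{dt}$ from $0$ to $t_f$ gives exactly
\[
\Phi(t_f)\ge t_f e^{-t_f}\,f(OPT) + (1-e^{-t_f})\,\ell(OPT\cap \NN^+) + t_f\,\ell(OPT\cap \NN^-).
\]

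To conclude I would round $\by(t_f)$ to an integral $T$ via pipage rounding, which preserves $F+L$ in expectation up to the standard time-discretization loss of $o(1)\cdot f(OPT)$ coming from Euler-approximating the ODE. For $t_f\le 1$ the bound $\by(t)\le \int_0^t x^*(s)\,ds \in t\PP\subseteq \PP$ (via $1-e^{-x}\le x$ and down-closedness) ensures $\by(t_f)\in \PP$, so the rounded $T$ lies in $\II$; for $t_f>1$ the same $\Phi$-inequality still holds but $\int_0^{t_f} x^*(s)\,ds$ can leave $\PP$, accounting for the caveat in the statement. The main obstacle I anticipate is purely one of careful bookkeeping: arranging the distortion factor $e^{-(t_f-t)}$ so that it both cancels the non-monotone loss $e^{-t}$ on the $f(OPT)$ side and simultaneously yields the correct integrated weights $(1-e^{-t_f})$ and $t_f$ on the two parts of $\ell(OPT)$, while tracking which of these steps contribute $o(1)$ discretization slack (only the $f(OPT)$ coefficient should, and the $\ell$-coefficients should come out exact).
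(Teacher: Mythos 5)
Your proposal is correct and takes essentially the same approach as the paper: the paper also runs the distorted measured continuous greedy of Lu et al.\ and observes that the only change needed is to split the linear term by sign, using the invariant $\by_u(t)\le 1-e^{-t}$ to get the coefficient $1-e^{-t_f}$ on $\ell(OPT\cap\NN^+)$ and the trivial bound $1-\by_u\le 1$ (which reverses for $\ell_u\le 0$) to get the coefficient $t_f$ on $\ell(OPT\cap\NN^-)$. The only cosmetic difference is that you work with the continuous-time ODE while the paper phrases the same estimate in the discretized form of Lu et al.'s Lemma~3.6.
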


\begin{proof}[Proof Sketch]

Recall from \Cref{subsec:prior-work} that \cite{lu2021regularized} only prove that their algorithm (Algorithm 1) guarantees an $(e^{-1}-o(1),1)$-approximation for the case of $\ell$ non-positive when run for a total of $t_f=1$ time with a distorted objective of $G_t(\by)=e^{t-1}F(\by)+L(\by)$. Thus we only describe how to modify their analysis for $t_f=1$, but as suggested by \cite{bodek2022maximizing}, this argument can easily be generalized to general $t_f$.

Essentially, the only part of the analysis of \cite[Algorithm 1]{lu2021regularized} that needs to be changed is \cite[Lemma 3.6]{lu2021regularized}, which originally states that when $\ell$ is non-positive,
\begin{equation*}
L(\mathbf{y}(t+\delta))-L(\mathbf{y}(t))=\delta L(\mathbf{z}(t)\circ (\bone_{\NN}-\mathbf{y}(t)))\ge \delta\langle\mathbf{\ell},\mathbf{z}(t)\rangle.
\end{equation*}
Here, time has been discretized into timesteps of size $\delta>0$ where $\delta$ is sufficiently small and $\delta$ evenly divides $t_f$. To generalize this lemma to unconstrained $\ell$, we combine this reasoning with \cite[Lemma 3.1]{lu2021regularized}, which states that $\mathbf{y}_e(t)\le 1-(1-\delta)^{\frac{t}{\delta}}$. It follows that
\begin{align*}
L(\mathbf{y}(t+\delta))-L(\mathbf{y}(t))&=\delta L(\mathbf{z}(t)\circ (\bone_{\NN}-\mathbf{y}(t))) \\
&=\delta\paren{\langle \ell_+,\mathbf{z}(t)\circ (\bone_{\NN}-\mathbf{y}(t))\rangle + \langle \ell_-,\mathbf{z}(t)\circ (\bone_{\NN}-\mathbf{y}(t))\rangle} \\
&\ge \delta\paren{\langle\mathbf{\ell}_-,\mathbf{z}(t)\rangle+\langle\mathbf{\ell}_+,\mathbf{z}(t)\rangle \cdot (1-\delta)^{\frac{t}{\delta}}}.
\end{align*}
To finish, 
\[\sum_{i=0}^{t_f/\delta-1}\delta (1-\delta)^i\ge 1-(1-\delta)^{t_f/\delta}\ge 1-e^{-t_f},\]
giving us the desired coefficient for $\ell(OPT\cap \NN^+)$ when $\bone_{OPT}$ is substituted in place of $\bz(t)$.
\end{proof}

\begin{corollary}\label{corollary:regularizedcsm-nonneg-ell-e}

When $\ell\ge 0$, there is a $\paren{e^{-1}-\eps,1-e^{-1}}$-approximation algorithm for \texttt{RegularizedCSM}.

\end{corollary}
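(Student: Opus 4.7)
The plan is to derive the corollary directly by instantiating \Cref{lemma:extended-distorted-measured} at the boundary value $t_f = 1$, which is exactly the largest value allowed for \RCSM{} without risking that the returned $T$ falls outside $\II$. The two key observations are that (i) when $\ell \ge 0$, every element of the ground set lies in $\NN^+$, so $OPT \cap \NN^- = \emptyset$ and the $\ell(OPT\cap\NN^-)$ term in the guarantee simply drops out, and (ii) at $t_f = 1$ the coefficients evaluate to $t_f e^{-t_f} = e^{-1}$ for $f(OPT)$ and $1 - e^{-t_f} = 1 - e^{-1}$ for $\ell(OPT \cap \NN^+) = \ell(OPT)$.

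Concretely, I would run the distorted measured continuous greedy algorithm described in the proof of \Cref{lemma:extended-distorted-measured} with total time $t_f = 1$, discretization step $\delta$ chosen small enough that $\delta = \Theta(\eps/n)$ (so that the standard additive-error bookkeeping in the per-step analysis of \cite{lu2021regularized} and of \cite{feldman2011unified} contributes only $o(1) \cdot f(OPT)$), and then apply pipage rounding to the resulting fractional $\by(1) \in \PP$ to obtain an integral $T \in \II$ with $\EE[f(T) + \ell(T)] \ge F(\by(1)) + L(\by(1))$. Plugging into \Cref{lemma:extended-distorted-measured} and using $\ell(OPT \cap \NN^-) = 0$ yields
\begin{equation*}
\EE[f(T) + \ell(T)] \;\ge\; (e^{-1} - o(1))\,f(OPT) + (1 - e^{-1})\,\ell(OPT),
\end{equation*}
which, after absorbing the $o(1)$ into $\eps$, is exactly the $(e^{-1} - \eps,\, 1 - e^{-1})$-approximation claimed.

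There is essentially no obstacle here: all of the heavy lifting (the generalization of the distorted-objective analysis to the positive component of $\ell$, and the guarantee that $\by(t_f) \in \PP$ for $t_f \le 1$) has already been done in \Cref{lemma:extended-distorted-measured}. The only minor point worth being careful about is making the $o(1)$ term explicit: the cumulative error from the discrete-time analysis and from pipage rounding must both be shown to be bounded by $\eps \cdot f(OPT)$, which follows from the standard choice $\delta = \Theta(\eps/n)$ so that the algorithm remains polynomial-time in $n$ and $1/\eps$.
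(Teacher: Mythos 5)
Your proposal is correct and matches the paper's proof, which simply instantiates \Cref{lemma:extended-distorted-measured} at $t_f=1$ (the paper writes ``substituting $\beta=1$,'' evidently meaning $t_f=1$) and uses that $\ell\ge 0$ makes the $\ell(OPT\cap\NN^-)$ term vanish. Your version just spells out the discretization and rounding details that the paper leaves implicit.
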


\begin{proof}
The result follows immediately from substituting $\beta=1$ into \Cref{lemma:extended-distorted-measured}.
\end{proof}

In fact, we can obtain the following generalization of \Cref{corollary:regularizedcsm-nonneg-ell-e}, although before proving it we will need two more lemmas. The first lemma is simple.

\begin{lemma}[Trivial Approximation for \texttt{RegularizedCSM}]\label{lemma:trivial-approx}

When $\ell$ is unconstrained, there exists a $(0,1)$-approximation algorithm for \texttt{RegularizedCSM}.

\end{lemma}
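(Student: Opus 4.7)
The plan is to exploit the fact that a $(0,1)$-approximation requires only that the returned set $T$ satisfy $\EE[f(T)+\ell(T)]\ge \ell(OPT)$ for every $OPT\in \II$, which is a purely linear condition up to the non-negativity of $f$. Since $f\ge 0$ pointwise, it suffices to find $T\in \II$ with $\ell(T)\ge \max_{S\in \II}\ell(S)$; we can then discard the $f$ term entirely from the lower bound.

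The algorithm I would propose is therefore simply: output $T^\ast \in \arg\max_{S\in \II}\ell(S)$. This is computable in polynomial time because, under the standing assumption from \Cref{sec:prelims} that the matroid polytope $\PP$ is \emph{solvable}, linear functions can be maximized over $\PP$ in polynomial time; equivalently, we may use the standard matroid greedy (sort elements by decreasing $\ell(u)$ and insert each if it preserves independence and has positive $\ell$-value), which is known to return a maximizer of $\ell$ over $\II$. Note that even with $\ell$ unconstrained in sign, we never gain by including an element of negative $\ell$-value, so the optimum lies in $\II$ (using that matroid independent sets are closed under taking subsets).

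The verification is one line: for any $OPT\in \II$,
\begin{equation*}
f(T^\ast)+\ell(T^\ast)\;\ge\;\ell(T^\ast)\;\ge\;\ell(OPT)\;=\;0\cdot f(OPT)+1\cdot \ell(OPT),
\end{equation*}
where the first inequality uses $f\ge 0$ and the second uses the optimality of $T^\ast$ for $\ell$ over $\II$. Taking the maximum over $OPT\in \II$ on the right-hand side yields the desired $(0,1)$-approximation guarantee, deterministically (so the expectation is trivial).

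There is essentially no technical obstacle here; the only subtlety worth flagging is that the guarantee uses the convention from \Cref{sec:prelims} that $(\alpha,\beta)$-approximation compares against $\max_{S\in \II}[\alpha f(S)+\beta\ell(S)]$ rather than against a single $OPT$, but this makes no difference since we bound $f(T^\ast)+\ell(T^\ast)$ against $\ell(S)$ uniformly in $S\in \II$. Hence the name ``trivial approximation'' is apt, and this lemma serves only as a baseline corner point for combining with stronger algorithms elsewhere in \Cref{sec:unconstrained}.
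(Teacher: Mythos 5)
Your proposal is correct and matches the paper's approach: both simply maximize the linear function $\ell$ over the matroid and invoke $f\ge 0$ to discard the submodular term. The only (immaterial) difference is in how integrality is obtained — the paper solves the LP over the solvable polytope $\PP$ and applies pipage rounding, preserving $\ell$ in expectation, whereas you observe that the matroid greedy (or a vertex solution of the LP) already yields an exact integral maximizer deterministically.
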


\begin{proof}

Because $\PP$ is solvable, we can maximize the linear function $\ell$ over it. Then, because $\PP$ is a matroid independence polytope, pipage rounding can be used to round the fractional solution returned by linear programming to a valid solution within $\PP$ while preserving the value of $\ell$ in expectation.

\end{proof}

The next lemma combines \Cref{lemma:extended-distorted-measured} with the \textit{aided measured continuous greedy} used by \cite{buchbinder2016nonsymmetric}.

\begin{lemma}[Guarantee of \textit{Distorted Aided Measured Continuous Greedy}]\label{lemma:distorted-aided}
Let $\ell$ be unconstrained. If we run \textit{Distorted Aided Measured Continuous Greedy} given a fractional solution $\bz$ and a polytope $\PP$ for a total of $t_f$ time, where $t_f\ge t_s$, it will generate $\by\in t_f\PP\cap \paren{(1-e^{-t_f})\cdot [0,1]^{\NN}}$ such that
\begin{align*}
\EE\brac{F(\by)+L(\by)}\ge e^{-t_f}[&(e^{t_s}+t_fe^{t_s}-t_se^{t_s}-1-o(1))f(OPT)+(-e^{t_s}+1)F(\bz\wedge \bone_{OPT})\\
&\phantom{e^{-t_f}[}+(-e^{t_s}-t_fe^{t_s}+t_se^{t_s}+1+t_f)F(\bz \vee \bone_{OPT})]\\
&+(1-e^{-t_f})L_+(\bone_{OPT}\backslash \bz)+(1-e^{t_s-t_f})L_+(\bone_{OPT}\wedge \bz)\\
&+t_f L_-(\bone_{OPT}\backslash \bz) + (t_f-t_s)L_-(\bone_{OPT}\wedge \bz).
\end{align*}
Note that the terms depending on $f$ are precisely the same as those in \Cref{lemma:aided-ext}.
\end{lemma}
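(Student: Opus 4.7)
The plan is to merge the two-phase structure of the \emph{aided measured continuous greedy} of \cite{buchbinder2016nonsymmetric} (analyzed in \Cref{lemma:aided-ext}) with the distorted objective of \cite{lu2021regularized} (analyzed in \Cref{lemma:extended-distorted-measured}). Specifically, I would consider an algorithm that maintains a fractional solution $\by(t)$ updated in timesteps of size $\delta$; during phase one ($t \in [0, t_s]$) it only increments coordinates outside $Z$ (treating $\bz = \bone_Z$ as integral, as in \Cref{lemma:aided-ext}), and during phase two ($t \in [t_s, t_f]$) it may increment all coordinates. At each timestep, the greedy direction is chosen from $\PP$ (subject to the phase restriction) to maximize the linearization of a distorted objective $G_t(\by) \triangleq e^{t-t_f} F(\by) + L(\by)$, and the update rule is $\by(t+\delta) = \by(t) + \delta \bz(t) \circ (\bone_{\NN} - \by(t))$.

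The proof then splits naturally into two parallel analyses. The $f$-component of the bound is inherited directly from \Cref{lemma:aided-ext}: the distortion factor $e^{t-t_f}$ only reweights the trade-off between marginal gains in $F$ and $L$, but the greedy optimality condition $\langle \nabla G_t(\by(t)), \bz(t) \rangle \ge \langle \nabla G_t(\by(t)), \bone_{OPT} \rangle$ still yields the same system of differential inequalities on $F$ as in the proof of \cite{buchbinder2016nonsymmetric}. Integrating over the two phases, together with the bound $\by_e(t) \le 1 - (1-\delta)^{t/\delta}$ established in \cite{lu2021regularized}, reproduces the $f$-coefficients term by term with \Cref{lemma:aided-ext}. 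The $L$-component generalizes \Cref{lemma:extended-distorted-measured} to account for the aided structure: define the \emph{active duration} $\tau_e$ of each $e \in \NN$ by $\tau_e \triangleq t_f$ if $e \notin Z$ and $\tau_e \triangleq t_f - t_s$ if $e \in Z$. Following the same argument as in the proof of \Cref{lemma:extended-distorted-measured}, elements with $\ell_e > 0$ contribute coefficient $1 - e^{-\tau_e}$ and elements with $\ell_e < 0$ contribute coefficient $\tau_e$, giving exactly the four $\ell$-coefficients $1 - e^{-t_f}$, $1 - e^{t_s - t_f}$, $t_f$, and $t_f - t_s$ stated in the inequality.

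The main obstacle will be carefully verifying that the two components decouple cleanly. In \Cref{lemma:aided-ext} (non-distorted, aided) the greedy maximizes $\langle \nabla F, \bz(t) \rangle$, while in \Cref{lemma:extended-distorted-measured} (distorted, non-aided) it maximizes $\langle \nabla G_t, \bz(t) \rangle$. In the combined algorithm the greedy maximizes the distorted objective, so when we unpack the optimality inequality and split into $F$- and $L$-parts we must ensure that the cross terms introduced by the distortion $e^{t-t_f}$ do not spoil either component's lower bound. Concretely, the ratio between the $F$- and $L$-marginal-gain weights in $\nabla G_t$ must be exactly the ratio that makes the two-phase differential system separable into the independent $F$-recursion of \cite{buchbinder2016nonsymmetric} and the independent $L$-recursion of \cite{lu2021regularized}. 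This is the same observation that drives the proof of \Cref{lemma:extended-distorted-measured}, and once combined with the phase-one restriction the remaining integration is standard but notationally heavy; I would defer the detailed bookkeeping to an appendix, closely mirroring the proofs of \Cref{lemma:aided-ext} and \Cref{lemma:extended-distorted-measured} in tandem.
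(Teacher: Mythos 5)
Your proposal matches the paper's proof in both structure and substance: the paper likewise runs the two-phase aided greedy on the distorted objective $G(\by(t))=e^{t-t_f}F(\by(t))+L(\by(t))$, lower-bounds $\frac{dG(\by(t))}{dt}$ by a sum whose $f$-part is exactly the aided-measured-continuous-greedy bound and whose $\ell_+$/$\ell_-$ parts follow from the coordinate bounds $\by_u(t)\le 1-e^{-t}$ for $u\notin Z$ and the delayed start for $u\in Z$, and then integrates over $[0,t_s)$ and $[t_s,t_f)$. Your ``active duration'' description of the $\ell$-coefficients is precisely what that integration yields, and the decoupling worry you flag is resolved exactly as you suggest, by comparing against the fixed feasible directions $\bone_{OPT\backslash Z}$ and $\bone_{OPT}$ in the two phases.
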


\begin{proof}

As with \Cref{lemma:aided-ext}, we only present an informal proof assuming direct oracle access to the multilinear extension $F$ and giving the algorithm in the form of a continuous-time algorithm. The techniques mentioned in \cite{buchbinder2016nonsymmetric} and \cite{lu2021regularized} can be used to formalize this at the cost of introducing the $o(1)$ term. 

Let $G(\by(t))\triangleq e^{t-t_f}F(\by(t)))+L(\by(t))$ be the value of the distorted objective at time $t$. Then
\begin{align}
\frac{dG(\by(t))}{dt}\ge & e^{t-t_f}\cdot 
\begin{cases}
f(OPT\backslash Z)-(1-e^{-t})f(OPT\cup Z) & t\in [0,t_s) \\
e^{t_s-t}f(OPT)-(e^{t_s-t}-e^{-t})f(OPT\cup Z) & t\in [t_s,t_f)
\end{cases}\nonumber\\
&+e^{-t}\cdot \begin{cases}
\ell_+(OPT\backslash Z) & t\in [0,t_s)\\
\ell_+(OPT\backslash Z)+e^{t_s}\ell(OPT\wedge Z) & t\in [t_s,t_f)
\end{cases}\nonumber\\
&+\begin{cases} 
\ell_-(OPT\backslash Z) & t\in [0,t_s) \\
\ell_-(OPT) & t\in [t_s,t_f) \\
\end{cases}\label{ineq:aided-ext-ext}
\end{align}
Here, the first and third terms of the summation correspond directly to those of the original aided measured continuous greedy, while the second comes from observing that $\by_u(t)\le 1-e^{-t}$ for $u\in OPT\backslash Z$ and $\by_u(t)\le 1-e^{-\max(t_s-t,0)}$ for $u\in OPT\wedge Z$.

To lower bound $G(\by(t_f))$, we can integrate \Cref{ineq:aided-ext-ext} from $t=0$ to $t=t_f$. As expected, the dependence on $f$ turns out to be the same as \Cref{lemma:aided-ext}.
\end{proof}

\begin{proof}[Proof of \Cref{thm:ell-nonneg-csm}]
The algorithm is similar to that of \Cref{thm:rusmNonposExtended}.

\begin{enumerate}
    \item Run the trivial approximation algorithm (\Cref{lemma:trivial-approx}).
    
    \item Generate $\bz$ using the local search procedure described by \cite[Lemma 3.1]{buchbinder2016nonsymmetric} on $(f+\ell,\PP)$. This finds $\bz\in \PP$ such that
    \begin{align}
    F(\bz)+L(\bz)&\ge \frac{(F+L)(\bz\vee \bone_{OPT})+(F+L)(\bz\wedge \bone_{OPT})}{2}-o(1)\cdot (f+\ell)(OPT)\nonumber\\
    &\ge \frac{1}{2}F(\bz\vee \bone_{OPT})+\frac{1}{2}F(\bz\wedge \bone_{OPT})+\frac{1}{2}\ell(OPT)+\frac{1}{2}L(\bz\wedge \bone_{OPT})-o(1)\cdot (f+\ell)(OPT),\label{align:convex-opt-1}
    \end{align}
    and
    \begin{equation}
    F(\bz)+L(\bz)\ge F(\bz\wedge \bone_{OPT})+L(\bz\wedge \bone_{OPT})-o(1)\cdot (f+\ell)(OPT).\label{align:convex-opt-2}
    \end{equation}
    Note that unlike \Cref{thm:rusmNonposExtended}, there is no guessing step.
    
    \item Run distorted aided measured continuous greedy given $\bz$ (\Cref{lemma:distorted-aided}), for all pairs 
    \[(t_f,t_s)\in \TT\triangleq \{(0.1x,1)\mid 0\le x\le 10\}.\]
    
    \item Round $\bz$ from step 1 and all fractional solutions found in steps 2 and 3 to valid integral solutions using pipage rounding, which preserves the value of $F+L$ in expectation.
    
    \item Return the solution from step 4 with the maximum value, or the empty set if none of these solutions has positive expected value. Let $\BOPT'$ be the expected value of this solution.
\end{enumerate}

As in the proof of \Cref{thm:rusmNonposExtended}, for a fixed $\beta$, we claim that to find a lower bound on $\alpha$ such that the following inequality is true:
\begin{equation*}
\BOPT'\triangleq \max\paren{\EE[F(\bz)+L(\bz)],\max_{(t_s,t_f)\in \TT}\paren{\EE\brac{F(\by_{t_s,t_f})+L(\by_{t_s,t_f})}}}\ge \alpha F(OPT) +\beta \ell(OPT),
\end{equation*}
it suffices to solve the following linear program:
\begin{align*}
\max\,        & x_1 \\
\text{s.t.}\, & (x_1,x_2,x_3,x_4,x_5)\in
\begin{aligned}[t]
conv(\{(&0,0,0,1,1),(0,0.5,0.5,0.5,1),(0,1,0,0,1)\}\cup \\ 
\{(&e^{t_s-t_f} + t_fe^{t_s-t_f} - t_se^{t_s-t_f} - e^{-t_f}, \\
&- e^{t_s-t_f} + e^{-t_f}, \\
&- e^{t_s-t_f} - t_fe^{t_s-t_f} + t_se^{t_s-t_f} + e^{-t_f} + e^{-t_f}t_f,\\
&1-e^{-t_f}, \\
&1-e^{t_s-t_f}) | (t_s,t_f)\in \TT\})\\
\end{aligned} \\
\text{and}\,& x_2\ge 0, x_3\ge 0, x_4\ge \beta, x_5\ge \beta.
\end{align*}
Any point $(x_1,x_2,x_3,x_4,x_5)$ within the convex hull satisfies:
$$\BOPT'\ge x_1f(OPT)+x_2 F(\bz \wedge \bone_{OPT})+x_3 F(\bz \vee \bone_{OPT})+x_4 L(\bone_{OPT}\backslash \bz)+x_5 L(\bz\wedge \bone_{OPT})$$
ignoring the $o(1)$ terms. The points determining the hull are as follows:
\begin{itemize}
    \item $(0,0,0,1,1)$ corresponds to \Cref{lemma:trivial-approx}
    \item $(0,0.5,0.5,0.5,1)$ corresponds to \Cref{align:convex-opt-1}
    \item $(0,1,0,0,1)$ corresponds to \Cref{align:convex-opt-2}
    \item The remaining points correspond to \Cref{lemma:distorted-aided} for all $(t_s,t_f)\in \TT$.
\end{itemize}
The constraints $x_2,x_3\ge 0$ ensure that
$$\BOPT'\ge x_1f(OPT)+x_4 L(\bone_{OPT}\backslash \bz)+x_5 L(\bz\wedge \bone_{OPT}).$$
The constraints $\min(x_4,x_5)\ge \beta$ ensure that
\[\BOPT'\ge x_1 f(OPT)+\beta \ell(OPT).\qedhere\]
\end{proof}

\subsection{Inapproximability}\label{subsec:non-neg-csm-inapprox}

In this subsection, we prove \Cref{thm:inapprox-csm-beta-1}, which can be used to show that \Cref{thm:ell-nonneg-csm} is tight for $\beta\ge (e-1)/e$. We then discuss whether the construction used in \Cref{thm:inapprox-csm-beta-1} could potentially be extended to \RUSM{}.

\inapproxNonnegCsm

\begin{proof} 

Let $\alpha\triangleq 1-\beta+\eps$. By \Cref{lemma:symmetry-gap-inapprox-linear}, it suffices to construct a submodular function $f$ satisfying
\begin{equation}
\max_{\bx\in \PP}[F(\overline{\bx})+L(\overline{\bx})]< \max_{S\in \II}[\alpha\cdot f(S)+\ell(S)],\label{eq:suffices}
\end{equation}
where $\PP$ is the matroid polytope corresponding to a matroid $\MM=(\NN,\II)$. We use the same $f$ that \cite{vondrak2011symmetry} uses for proving the inapproximability of maximization over matroid bases. Specifically, we consider the \textit{Maximum Directed Cut} problem on $k$ disjoint arcs; that is, $f(S)\triangleq \sum_{i=1}^k[a_i\in S\text{ and }b_i\not\in S]$. Its multilinear extension is as follows:
\begin{equation*}
F(\bx_{a_1\dots a_k}, \bx_{b_1 \dots b_k})=\sum_{i=1}^k\bx_{a_i}(1-\bx_{b_i}),
\end{equation*}
We define the independent sets of the matroid to be precisely the subsets of $\NN$ that contain at most one element from $a_1,\dots,a_k$ and at most $k-1$ elements from $b_1,\dots, b_k$, resulting in the following matroid independence polytope:
\begin{equation*}
\PP=\left\{(\bx_{a_i},\bx_{b_i})\middle| \sum_{i=1}^k\bx_{a_i}\le 1\text{ and }\sum_{i=1}^k\bx_{b_i}\le k-1\right\}.
\end{equation*}
Finally, we define $\ell$ as follows:
\begin{equation*}
\ell(a_i)=0, \ell(b_i)=\frac{1}{k}.
\end{equation*}
Then the RHS of \Cref{eq:suffices} is at least:
\begin{equation*}
\max_{S\in \II}[\alpha f(S)+\beta\ell(S)]\ge (\alpha f+\beta \ell)(\{a_1,b_2,b_3,\dots,b_k\})=\alpha + \beta \cdot \frac{k-1}{k},
\end{equation*}
while the LHS of \Cref{eq:suffices} corresponds to the value of the best \textit{symmetrized} solution $\overline{\bx}$, which is $\overline{\bx}_{a_i}=\frac{1}{k}, \overline{\bx}_{b_i}=\frac{k-1}{k}$, giving the following:
\begin{equation*}
\max_{\bx\in \PP}[F(\overline{\bx})+L(\overline{\bx})]=\frac{1}{k}+\frac{k-1}{k}=1.
\end{equation*}
For sufficiently large $k$ we have $\alpha+\beta\cdot \frac{k-1}{k}\ge (\alpha+\beta)\frac{k-1}{k}=(1+\eps)\cdot \frac{k-1}{k}>1$.
\end{proof}

In fact, the bound of \Cref{thm:inapprox-csm-beta-1} is (nearly) tight for $\beta$ close to one.

\begin{restatable}[Tight \RCSM{} Near $\beta=1$ for $\ell\ge 0$]{corollary}{tightRcsmBetaOne}\label{corollary:tight-csm-beta-1-ell-nonneg}

For all $\frac{e-1}{e}\le \beta< 1$, there is a $(1-\beta-\eps,\beta)$-approximation algorithm for \RCSM{} with non-negative $\ell$, nearly matching the bound of \Cref{thm:inapprox-csm-beta-1}.

\end{restatable}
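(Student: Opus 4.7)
The plan is a direct convex combination of two algorithms already established in the paper. Let algorithm (i) be the one of \Cref{corollary:regularizedcsm-nonneg-ell-e}, which gives an $(e^{-1}-\eps,\,1-e^{-1})$-approximation for \RCSM{} with $\ell\ge 0$, and let algorithm (ii) be the trivial algorithm of \Cref{lemma:trivial-approx}, which gives a $(0,1)$-approximation by maximizing $\ell$ over the matroid polytope and pipage rounding. The key observation is that in (i) the $\eps$-slack lives \emph{only} in the $f$-coefficient: the distorted measured continuous greedy at $t_f=1$ in \Cref{lemma:extended-distorted-measured}, applied with $\ell\ge 0$, yields an exact $\beta$-coefficient of $1-e^{-1}$. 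This is precisely what will let the combined algorithm have no $\eps$ in its $\beta$-coordinate.

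Given $\beta\in[\tfrac{e-1}{e},1)$, the plan is to set $p\triangleq e(1-\beta)\in(0,1]$, run (i) with probability $p$ and (ii) with probability $1-p$, and output the resulting set $T$. For any $S\in\II$, linearity of expectation will give
\begin{align*}
\EE[f(T)+\ell(T)] &\ge p\brac{(e^{-1}-\eps)f(S)+(1-e^{-1})\ell(S)}+(1-p)\ell(S)\\
&=(p/e-p\eps)f(S)+(1-p/e)\ell(S).
\end{align*}
Substituting $p=e(1-\beta)$ yields $p/e=1-\beta$ and $1-p/e=\beta$, so the right-hand side simplifies to $(1-\beta-p\eps)f(S)+\beta\ell(S)$, which is at least $(1-\beta-\eps)f(S)+\beta\ell(S)$ because $p\le 1$ and $f(S)\ge 0$. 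This is exactly the desired $(1-\beta-\eps,\beta)$-approximation.

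There is no real obstacle; the whole argument is a one-line interpolation once the correct two endpoints are chosen. The only point requiring care is that one must invoke \Cref{corollary:regularizedcsm-nonneg-ell-e} (equivalently, \Cref{lemma:extended-distorted-measured} at $t_f=1$ with $\ell\ge 0$), where the $\eps$-slack is confined to the $f$-coefficient, rather than the more general \Cref{thm:ell-nonneg-csm}, whose statement carries an $\eps$ in the $\beta$-coordinate that would then propagate into the final bound. The range $\beta\ge\tfrac{e-1}{e}$ is exactly the range for which $p\in[0,1]$, i.e., for which the convex combination is feasible; for $\beta<\tfrac{e-1}{e}$, one would need $p>1$ and the argument breaks down.
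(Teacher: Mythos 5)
Your proposal is correct and takes essentially the same approach as the paper: the paper also interpolates between \Cref{corollary:regularizedcsm-nonneg-ell-e} and \Cref{lemma:trivial-approx}, except it runs both and returns the better solution (which dominates any randomized convex combination) rather than randomizing over which to run.
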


\begin{proof} 

The better of \Cref{corollary:regularizedcsm-nonneg-ell-e} and \Cref{lemma:trivial-approx} will be an $(\alpha,\beta)$-approximation for all $(\alpha,\beta)$ lying above the segment connecting $\paren{\frac{1}{e}-\eps,\frac{e-1}{e}-\eps}$ and $(0,1)$.

\end{proof}

As the $f$ used by \Cref{lemma:symmetry-gap-inapprox-linear} to prove \Cref{thm:inapprox-csm-beta-1} is just a directed cut function, it is natural to ask whether directed cut functions can be used by \Cref{lemma:symmetry-gap-inapprox-linear} to show improved inapproximability for \RUSM{}. We build on \Cref{thm:max-dicut-usm} to show that doing so is impossible.

\begin{theorem}\label{thm:dicut-inapprox}

When $\ell$ is unconstrained, setting $f$ to be a directed cut function in \Cref{lemma:symmetry-gap-inapprox-linear} cannot be used to show $(0.5,1)$-inapproximability for \RUSM{}.

\end{theorem}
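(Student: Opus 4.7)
The plan is to derive a contradiction from the existence of a $(0.5, 1)$-approximation algorithm for \RUSM{} when $f$ is a directed cut function (\Cref{thm:max-dicut-usm}). Suppose for contradiction that some strongly symmetric instance $(f, \ell, \GG)$ with $f$ a directed cut function satisfies the symmetry gap condition of \Cref{lemma:symmetry-gap-inapprox-linear} for $\alpha=0.5$, $\beta=1$:
\[
\max_{\bx \in [0,1]^\NN} [F(\overline \bx) + L(\overline \bx)] < \max_{S \subseteq \NN} [0.5 f(S) + \ell(S)].
\]

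I would first symmetrize the linear program underlying \Cref{thm:max-dicut-usm} (an adaptation of the LPs of \cite{trevisan1998parallel,halperin2001combinatorial}). Because both its objective and constraints depend only on the $\GG$-invariant data $(f, \ell)$, averaging any optimal LP solution over $\GG$ yields a $\GG$-symmetric optimum with the same value: feasibility survives by convexity of the LP polytope, and the objective by its linearity together with $\GG$-invariance. The algorithm then rounds this symmetric LP solution to an integer set $T$ by including each element $u$ independently with probability $\mathbf p_u$ that is a coordinatewise function of the symmetric LP solution, so the marginal vector $\mathbf p$ is itself $\GG$-symmetric: $\mathbf p = \overline{\mathbf p}$.

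Next I would exploit the form of a directed cut function. Since $T$ is formed by independent Bernoulli trials with marginals $\mathbf p$, we have $\EE[f(T)] = \sum_{ab \in E} w_{ab}\, \mathbf p_a (1 - \mathbf p_b) = F(\mathbf p)$ and $\EE[\ell(T)] = L(\mathbf p)$ by linearity. Combining these identities with the $(0.5,1)$-approximation guarantee of \Cref{thm:max-dicut-usm} yields $F(\mathbf p) + L(\mathbf p) = \EE[f(T)+\ell(T)] \ge \max_S [0.5 f(S) + \ell(S)]$. Since $\mathbf p = \overline{\mathbf p}$, this gives
\[
\max_{\bx \in [0,1]^\NN} [F(\overline \bx) + L(\overline \bx)] \ge F(\overline{\mathbf p}) + L(\overline{\mathbf p}) \ge \max_S [0.5 f(S) + \ell(S)],
\]
contradicting the strict inequality assumed above.

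The main obstacle is verifying the two structural properties of \Cref{thm:max-dicut-usm}'s algorithm that the argument relies on: (i) its LP admits a $\GG$-symmetric optimal solution, and (ii) its rounding step is independent across elements, so that $\EE[f(T)] = F(\mathbf p)$ holds exactly rather than merely as an inequality. Both properties are standard for the MAX-DICUT LPs of \cite{trevisan1998parallel,halperin2001combinatorial}, whose rounding is via independent thresholding of the fractional variables, and should be transparent from the explicit construction given in the deferred proof of \Cref{thm:max-dicut-usm}; once they are in hand, the contradiction argument proceeds as above. If instead the rounding were correlated, one would have to account for an extra term $\sum_{ab} w_{ab}\,\mathrm{Cov}(\bone_a, \bone_b)$, which would have to be controlled separately.
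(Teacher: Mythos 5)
Your overall strategy---transfer the $(0.5,1)$ LP guarantee of \Cref{thm:max-dicut-usm} to a \emph{symmetric} fractional point, thereby contradicting the strict symmetry-gap inequality---is the same idea as the paper's, but the order in which you symmetrize creates a genuine gap. The guarantee of \Cref{thm:max-dicut-usm} is \emph{not} ``$F(\bx)+L(\bx)\ge \max_S[0.5f(S)+\ell(S)]$ at any LP optimum'': as the paper's proof of that theorem emphasizes, $F(\bx)\ge \tfrac12\hat f(\bx)$ fails for general fractional $\bx$ (e.g.\ for a single arc $(a,b)$ with $\bx_a=0.2$, $\bx_b=0.7$ one has $\bx_a(1-\bx_b)=0.06<0.1=\tfrac12\min(\bx_a,1-\bx_b)$), and holds only at \emph{half-integral} points, which is why the algorithm insists on a vertex of the LP polytope. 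Your plan takes an optimal (half-integral) LP solution and averages it over $\GG$; the resulting point $\overline{\bx^*}$ is symmetric and still LP-optimal (by concavity of $\hat f$ and linearity of $L$), but its coordinates are orbit-averages of values in $\{0,\tfrac12,1\}$ and are generically \emph{not} half-integral, so the step from the LP value $\tfrac12\hat f(\overline{\bx^*})+L(\overline{\bx^*})$ down to $F(\overline{\bx^*})+L(\overline{\bx^*})$ breaks. The same problem appears in your rounding formulation: $\EE[f(T)]=F(\mathbf p)$ under independent rounding is fine, but the chain $F(\mathbf p)\ge\tfrac12\hat f(\mathbf p)\ge$ (integral optimum of $0.5f+\ell$) is exactly what fails once $\mathbf p$ is no longer half-integral. (Also note that $F$ itself is bilinear, not concave, so you cannot recover the bound by arguing $F(\overline{\bx})\ge F(\bx)$.)

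The paper avoids this by symmetrizing \emph{before} solving: it defines the relaxation $\hat f$ directly in the orbit-averaged coordinates $(\mathrm{avg}_1(\bx),\dots,\mathrm{avg}_K(\bx))$, so that the relaxation is automatically $\GG$-invariant, and then observes that this quotient LP again has half-integral vertices. The resulting optimizer is therefore simultaneously symmetric \emph{and} half-integral, which is precisely the conjunction your argument needs and does not obtain. Your proof can be repaired by adopting this order of operations, i.e.\ by formulating the LP on the quotient (orbit) digraph before extracting a vertex.
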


The proof is deferred to \Cref{subsec:omitted-proofs}.

\section{Unconstrained \texorpdfstring{$\ell$}{ell}}\label{sec:unconstrained}

The results of this section are summarized in \Cref{fig:ell-arbitrary-sign}. 

\subsection{Approximation Algorithms}\label{subsec:unconstrained-approx}

In this subsection we prove \Cref{thm:ell-arbitrary-sign,thm:little-bit-better}. 

\approxArbitrary

Recall that Bodek and Feldman \cite[Theorem 1.2]{bodek2022maximizing} guaranteed a $\paren{\frac{\beta(1-\beta)}{1+\beta}-\eps,\beta}$ approximation for \RUSM{} using a \textit{local search} technique. \Cref{thm:ell-arbitrary-sign} improves on this approximation factor for all $\beta\in (0,1)$ and also provides guarantees for \RCSM{}.


\begin{proof} 

Our algorithm simply returns the better of the solutions returned by the following two algorithms:
\begin{enumerate}
    \item The set $T$ returned by running \Cref{lemma:extended-distorted-measured} (Distorted Measured Continuous Greedy) for $t_f=t$ time
    \item The set $T'$ returned by \Cref{lemma:trivial-approx} (Trivial Approximation)
\end{enumerate}
Now we show that the desired approximation factor is achieved. Disregard the factors of $o(1)$ in \Cref{lemma:extended-distorted-measured}; they can always be accounted for later at the cost of introducing the factor of $\eps$.
Add $t+e^{-t}-1$ times the inequality of \Cref{lemma:trivial-approx} to the inequality from \Cref{lemma:extended-distorted-measured}.
\begin{align*}
(t+e^{-t})\EE[\max\paren{f(T)+\ell(T),\ell(T')}]&\ge \EE[f(T)+\ell(T)] + (t+e^{-t}-1) \EE[\ell(T')]\\
&\ge t e^{-t}f(OPT)+t (\ell(OPT\cap \NN^+)+\ell(S\cap \NN^-))\\
&=t e^{-t}f(OPT)+t \ell(OPT).
\end{align*}
Then divide both sides by $t+e^{-t}$ and return the set out of $T$ and $T'$ that gives a higher value of $f+\ell$, giving the desired result after accounting for $\eps$:
\begin{equation*}
\EE[\max(f(T)+\ell(T),\ell(T'))]\ge \paren{\frac{t e^{-t}}{t+e^{-t}}-\eps}f(OPT)+\frac{t}{t+e^{-t}} \ell(OPT).\qedhere
\end{equation*}
\end{proof}

Next we show that \Cref{thm:ell-arbitrary-sign} is tight near $\beta=1$.

\begin{restatable}[Tight \RCSM{} Near $\beta=1$]{corollary}{arbitraryTight}\label{corollary:tight-csm-beta-1}

There is a $(1-\beta-\eps,\beta)$-approximation algorithm for \RCSM{} for any $\frac{e}{e+1}\le \beta< 1$, nearly matching the bound of \Cref{thm:inapprox-csm-beta-1}.

\end{restatable}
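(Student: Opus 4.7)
The plan is to interpolate between two $\RCSM{}$ algorithms that are already available for unconstrained $\ell$. The first is Theorem~\ref{thm:ell-arbitrary-sign} specialized to $t=1$: since $t\le 1$, this gives an $\paren{\frac{1}{e+1}-\eps,\,\frac{e}{e+1}}$-approximation for $\RCSM{}$. The second is the trivial algorithm of Lemma~\ref{lemma:trivial-approx}, which gives a $(0,1)$-approximation by maximizing the linear function $\ell$ over the matroid polytope $\PP$ and pipage-rounding. Crucially, both points $\paren{\frac{1}{e+1}-\eps,\frac{e}{e+1}}$ and $(0,1)$ lie essentially on the segment $\alpha+\beta=1$, so any convex combination will land on (or $\eps$-close to) that same segment.

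Concretely, given $\beta\in\brac{\frac{e}{e+1},\,1}$, I would set $\lambda\triangleq (1-\beta)(e+1)\in[0,1]$ and return the set produced by running the $t=1$ instantiation of Theorem~\ref{thm:ell-arbitrary-sign} with probability $\lambda$ and Lemma~\ref{lemma:trivial-approx} with probability $1-\lambda$. To see that this yields the required $(1-\beta-\eps,\beta)$-approximation, it suffices to observe that mixing an $(\alpha_1,\beta_1)$- and an $(\alpha_2,\beta_2)$-approximation with weights $\lambda$ and $1-\lambda$ gives a $\paren{\lambda\alpha_1+(1-\lambda)\alpha_2,\,\lambda\beta_1+(1-\lambda)\beta_2}$-approximation: for every $S\in\II$,
\[
\EE[f(T)+\ell(T)]=\lambda\,\EE[f(T_1)+\ell(T_1)]+(1-\lambda)\,\EE[f(T_2)+\ell(T_2)]\ge [\lambda\alpha_1+(1-\lambda)\alpha_2]f(S)+[\lambda\beta_1+(1-\lambda)\beta_2]\ell(S),
\]
and taking a maximum over $S\in\II$ recovers the definition of $(\alpha,\beta)$-approximation. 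Plugging in our two approximations gives coefficient $\lambda\cdot\frac{1}{e+1}-\lambda\eps=(1-\beta)-\lambda\eps\ge 1-\beta-\eps$ on $f(OPT)$ and coefficient $\lambda\cdot\frac{e}{e+1}+(1-\lambda)=\beta$ on $\ell(OPT)$, as desired.

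I do not expect any real obstacle here; the proof is a one-line convex combination of two theorems already proved in the paper. The only points to verify are (i) the range $\lambda\in[0,1]$ matches exactly $\beta\in[\tfrac{e}{e+1},1]$, which is why the hypothesis restricts $\beta$ to this interval, and (ii) the convex-combination identity for $(\alpha,\beta)$-approximations above. Juxtaposing the resulting $(1-\beta-\eps,\beta)$-approximability with the $(1-\beta+\eps,\beta)$-inapproximability from Theorem~\ref{thm:inapprox-csm-beta-1} then establishes the claimed near-tightness on the regime $\beta\in[\tfrac{e}{e+1},1)$.
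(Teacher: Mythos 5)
Your proposal is correct and takes essentially the same route as the paper: both interpolate between the $\paren{\frac{1}{e+1}-\eps,\frac{e}{e+1}}$-approximation coming from the $t=1$ instantiation of the distorted measured continuous greedy and the $(0,1)$-approximation of \Cref{lemma:trivial-approx}. The only cosmetic difference is that the paper observes a \emph{single} algorithm (the one in \Cref{thm:ell-arbitrary-sign} at $t=1$, which already returns the better of the two subroutines) simultaneously achieves both endpoint guarantees and hence every convex combination of them, whereas you present the same interpolation as a randomized mixture of the two subroutines; both formulations are valid and yield the same conclusion.
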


\begin{proof} 

Setting $t=1$, the output of \Cref{lemma:extended-distorted-measured} is both a $\paren{\frac{1}{e+1}-\eps,\frac{e}{e+1}}$-approximation and a $(0,1)$-approximation for \texttt{RegularizedCSM}. Therefore it is also an $(\alpha,\beta)$-approximation for all $(\alpha,\beta)$ lying above the segment connecting $\paren{\frac{1}{e+1}-\eps,\frac{e}{e+1}}$ and $(0,1)$.

\end{proof}

However, our result is not tight for $\beta<e/(e+1)$; it turns out that it is possible to do a little better than \Cref{thm:ell-arbitrary-sign} for $\beta$ near 0.7 by making use of \Cref{lemma:distorted-aided}.

\arbitraryBitBetter

\begin{proof} 
The algorithm is \Cref{thm:ell-nonneg-csm} augmented to use the guessing step from \Cref{thm:rusmNonposExtended}. That is, we start by guessing the value of $\ell_-(OPT)$ to within a factor of $1+\eps$ and replacing $\PP$ with $\PP\cap \{\bx : L_-(\bx)\ge (1+\eps)\ell_-(OPT)\}$ as in \Cref{thm:rusmNonposExtended}, and then run \Cref{thm:ell-nonneg-csm}.

To analyze the guarantee of this algorithm, we set up a linear program similar to that of \Cref{thm:ell-nonneg-csm} with two additional variables $x_6$ and $x_7$ corresponding to $L_-(\bone_{OPT}\backslash \bz)$ and $L_-(\bz \wedge \bone_{OPT})$, respectively. Again, we ignore terms that are $o(1)$ and those depending on $\eps$.
\begin{align*}
\max\, & x_1\\
\text{s.t.}\, & (x_1,x_2,x_3,x_4,x_5,x_6,x_7)\in 
\begin{aligned}[t] 
conv(\{(&0,0,0,1,1,0,0),\\
       (&0,0.5,0.5,0.5,1,1,1),\\
       (&0,1,0,0,1,0,1)\}\cup \\ 
     \{(&e^{t_s-t_f} + t_fe^{t_s-t_f} - t_se^{t_s-t_f} - e^{-t_f} - e^{t_s-t_f} + e^{-t_f}, \\
        & - e^{t_s-t_f} - t_fe^{t_s-t_f} + t_se^{t_s-t_f} + e^{-t_f} + e^{-t_f}t_f,\\
        & 1-e^{-t_f}, 1-e^{t_s-t_f}, t_f, t_f-t_s) \mid (t_s,t_f)\in \TT\})
\end{aligned} \\
\text{and } &x_2\ge 0, x_3\ge 0, x_4\ge \beta, x_5\ge \beta, x_6\le \beta, x_7\le \beta.
\end{align*}
The points determining the hull are as follows:
\begin{itemize}
    \item $(0,0,0,1,1,0,0)$ corresponds to \Cref{lemma:trivial-approx}.
    \item $(0,0.5,0.5,0.5,1,1,1)$ corresponds to \Cref{align:convex-opt-1}. Note that this inequality holds only because of the guessing step.
    \item $(0,1,0,0,1,0,1)$ corresponds to \Cref{align:convex-opt-2}.
    \item The remaining vertices correspond to \Cref{lemma:distorted-aided}.
\end{itemize}
Choosing $\TT=\{(0.205, 0.955)\}$ and solving the linear program gives $x_1\ge 0.280$ as desired.
\end{proof}

We conclude by noting that an analogue of \Cref{corollary:tight-csm-beta-1-ell-nonneg} (Tight \texttt{RegularizedCSM} Near $\beta=1$ for $\ell\ge 0$) holds for unconstrained $\ell$, though for a smaller range of $\beta$.

\subsection{Inapproximability}\label{subsec:unconstrained-inapprox}

In this subsection we prove \Cref{thm:inapprox-usm,thm:0.408-inapprox}. Note that \Cref{thm:inapprox-csm-beta-1} cannot possibly apply to \texttt{RegularizedUSM} because \Cref{thm:ell-arbitrary-sign} achieves $(1-\beta+\eps,\beta)$-approximations for $\beta$ close to one. Unfortunately, we are unable to prove $(1,\eps)$-inapproximability of \texttt{RegularizedUSM}, but we modify \Cref{thm:inapprox-usm-non-pos} to show improved inapproximability for unconstrained $\ell$ than for $\ell$ non-negative or $\ell$ non-positive.

\begin{restatable}[Inapproximability of \RUSM{}]{theorem}{inapproxRusm}
\label{thm:inapprox-usm}

There are instances of \RUSM{} where $(\alpha(\beta),\beta)$ is inapproximable for any $(\alpha(\beta),\beta)$ in \Cref{tab:inapprox-usm}. In particular, $(0.440,1)$ is inapproximable.

\end{restatable}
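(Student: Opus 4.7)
The plan is to generalize the symmetry gap construction of Theorem~\ref{thm:inapprox-usm-non-pos} by lifting the sign restriction on $\ell$. I keep the submodular function $f$ and permutation group $\GG$ exactly as defined there: $f$ is the weighted sum of two directed hyperedge cut functions on $(\{a_1,\dots,a_k\},a)$ and $(\{b_1,\dots,b_k\},b)$ with weight $\kappa$ each and an undirected edge cut function on $\{a,b\}$ with weight $1-\kappa$, and $\GG$ is generated by the swap $\sigma_1$ and rotation $\sigma_2$ from Lemma~\ref{lemma:cardinality-on-subset}. By symmetry it suffices to parameterize $\ell$ by $\ell_a=\ell_b\triangleq\ell_q$ and $\ell_{a_i}=\ell_{b_i}\triangleq\ell_p$, and crucially I now allow $\ell_p$ and $\ell_q$ to take any real value rather than only non-positive ones as in Theorem~\ref{thm:inapprox-usm-non-pos}. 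The pair $(f,\ell)$ is still strongly symmetric with respect to $\GG$, so Lemma~\ref{lemma:symmetry-gap-inapprox-linear} remains applicable with $\FF=2^{\NN}$.

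Once $(\kappa,\ell_p,\ell_q)$ is fixed, the symmetrization reduction used in Theorem~\ref{thm:inapprox-usm-non-pos} collapses the left-hand side $\max_{\bx\in[0,1]^{\NN}}\brac{F(\overline{\bx})+L(\overline{\bx})}$ (as $k\to\infty$) to a two-variable maximum
\[
\hat{H}(\kappa,\ell_p,\ell_q)\triangleq\max_{0\le q\le 1,\,0\le p}\brac{(1-\kappa)\cdot 2q(1-q)+\kappa\cdot 2(1-q)(1-e^{-p})+2q\ell_q+2p\ell_p}.
\]
The right-hand side $\max_{S\subseteq\NN}\brac{\alpha f(S)+\beta\ell(S)}$ is lower-bounded by evaluating at a short list of candidate sets such as $\emptyset$, $\{a\}$, $\{a,b\}$, $\{a,b_1\}$, $\{a,b_1,\dots,b_k\}$, and their symmetric counterparts, each of which contributes a linear function of $(\alpha,\beta,\ell_p,\ell_q)$. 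For any target $\beta$, the best inapproximability bound the construction certifies is
\[
\alpha(\beta)=\inf_{\kappa\in[0,1],\,\ell_p,\ell_q\in\RR}\brac{\hat{H}(\kappa,\ell_p,\ell_q)-\beta\cdot\ell(S^*)},
\]
where $S^*$ is the candidate attaining the max on the right-hand side. This has exactly the min-max structure of Theorem~\ref{thm:inapprox-usm-non-pos}, except that $\ell_p$ and $\ell_q$ now range over all of $\RR$ rather than just the non-positive reals, and I will brute-force the optimization over a dense grid of $(\kappa,\ell_p,\ell_q)$. For $\beta=1$ I expect $S^*=\{a,b_1\}$ to remain optimal, and sweeping over signs that were previously forbidden should push the attained $\alpha$ strictly below the $0.4773$ bound of Theorem~\ref{thm:inapprox-usm-non-pos}; the numerical search should land below $0.440$.

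The main obstacle is the bookkeeping required by the max over candidate sets on the right-hand side: since allowing positive $\ell_p$ or $\ell_q$ can make sets other than $\{a,b_1\}$ optimal (for example $\{a,b_1,\dots,b_k\}$ when $\ell_p$ is sufficiently positive, or $\{a,b\}$ when $\ell_q>0$), the min-max optimization splits into regimes according to which $S^*$ is tight, and after picking $(\kappa,\ell_p,\ell_q)$ one must verify that the candidate used is indeed the maximizer on the right-hand side for that choice of parameters. Once the regime is pinned down the remaining calculation is routine and parallels Theorem~\ref{thm:inapprox-usm-non-pos}, and the full table of $(\alpha(\beta),\beta)$ values follows by running the numerical optimization for each desired $\beta$.
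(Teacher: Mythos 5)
Your proposal matches the paper's proof essentially exactly: the paper reuses the construction of \Cref{thm:inapprox-usm-non-pos} verbatim, lifts the non-positivity restriction on $\ell_p$ and $\ell_q$, and numerically minimizes $\max_{0\le q\le 1,\,0\le p}\brac{(1-\kappa)2q(1-q)+\kappa 2(1-q)(1-e^{-p})+2p\ell_p+2q\ell_q}-\beta(\ell_p+\ell_q)$ over $\kappa,\ell_p,\ell_q$, using $S=\{a,b_1\}$ on the right-hand side. Your worry about identifying the true maximizing set $S^*$ is unnecessary --- \Cref{lemma:symmetry-gap-inapprox-linear} only needs the symmetrized maximum to fall below $\alpha f(S)+\beta\ell(S)$ for \emph{some} $S$, so any fixed candidate yields a valid (if possibly suboptimal) bound.
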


\begin{table}[h]
    \centering
    \begin{tabular}{|c|c|c|c|c|}
    \hline
    $\beta$ & $\alpha(\beta)$ & $\kappa$ & $\ell_p$ & $\ell_q$ \\
    \hline
    0.1 & 0.0935 & 0.6705 & -0.6095 & -0.2680 \\
    0.2 & 0.1743 & 0.6513 & -0.5322 & -0.2192 \\
    0.3 & 0.2433 & 0.6498 & -0.4705 & -0.1505 \\
    0.4 & 0.3008 & 0.6506 & -0.4207 & -0.0893 \\
    0.5 & 0.3477 & 0.6484 & -0.3800 & -0.0410 \\
    0.6 & 0.3846 & 0.6405 & -0.3400 & 0.0020 \\
    0.7 & 0.4114 & 0.6288 & -0.2900 & 0.0600 \\
    0.8 & 0.4295 & 0.6099 & -0.2400 & 0.1200 \\
    0.9 & 0.4384 & 0.6092 & -0.2100 & 0.1700 \\
    1.0 & 0.4392 & 0.5888 & -0.1800 & 0.2100 \\
    \hline
    \end{tabular}
    \caption{Inapproximability of $(\alpha(\beta),\beta)$-approximations for \texttt{RegularizedUSM} with unconstrained $\ell$ (\Cref{thm:inapprox-usm})}
    \label{tab:inapprox-usm}
\end{table}

\begin{proof} 

The construction is the same as \Cref{thm:inapprox-usm-non-pos} but we allow both $\ell_p$ and $\ell_q$ to be positive. Therefore, our goal is to compute:
\begin{equation*}
\min_{0\le \kappa\le 1,\ell_p,\ell_q}\brac{\max_{0\le q\le 1, 0\le p}\brac{(1-\kappa)2q(1-q)+\kappa 2(1-q)(1-e^{-p})+2p\ell_p+2q\ell_q}-\beta (\ell_p+\ell_q)}.
\end{equation*}
It turns out that allowing $\ell_q$ to be positive gives better bounds than \Cref{thm:inapprox-usm-non-pos} for $\beta > 0.6$.

\end{proof}

We can do slightly better than \Cref{thm:inapprox-usm} for $\beta$ very close to one with a construction inspired by \cite[Theorem 1.6]{bodek2022maximizing}.

\begin{restatable}[Inapproximability of \RUSM{}, $\beta=1$]{theorem}{inapproxRusmBetter}
\label{thm:0.408-inapprox}

There are instances of \RUSM{} where $(0.408,1)$ is inapproximable.

\end{restatable}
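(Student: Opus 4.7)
The plan is to apply \Cref{lemma:symmetry-gap-inapprox-linear} with $\FF=2^{\NN}$ (so $P(\FF)=[0,1]^{\NN}$, since we are in the unconstrained setting) to a strongly symmetric instance whose symmetrized maximum is strictly below $0.408\cdot f(S^*)+\ell(S^*)$ for some asymmetric witness $S^*\subseteq \NN$. The construction I have in mind extends the one used by \cite[Theorem 1.6]{bodek2022maximizing} to prove $(0.4998+\eps,1)$-inapproximability of \RUSM{} with non-negative $\ell$. The key new degree of freedom is that $\ell$ is now \emph{unconstrained} in sign, which enlarges the parameter space; this is precisely the extra slack one would expect to translate into a substantially smaller achievable $\alpha$ (from $0.4998$ down to $0.408$), at the cost of losing the sign restriction on $\ell$.

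Concretely, I would take a ground set partitioned into a bounded number of ``types,'' with $k$ identical copies of each type, and define $f$ as a weighted combination of cut (or hyperedge-cut) functions between the types, chosen so that the instance is strongly symmetric under a group $\GG$ that permutes copies within each type and also acts by some small symmetry on the types themselves (analogous to the $\sigma_1,\sigma_2$ of \Cref{lemma:cardinality-on-subset}). Each type $i$ is assigned a single linear value $\ell_i\in\RR$, with no sign constraint. After symmetrization the vector $\overline{\bx}$ collapses to a per-type profile $(p_1,\dots,p_m)\in[0,1]^m$, and in the limit $k\to\infty$ the quantity $F(\overline{\bx})+L(\overline{\bx})$ becomes an explicit smooth function $\hat{H}(p_1,\dots,p_m)$ in finitely many variables (each cut/coverage contribution turning into a product-of-marginals or a $1-e^{-p_i}$ factor, exactly as in \Cref{lemma:cardinality-on-subset} and \Cref{thm:inapprox-usm-non-pos}).

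Next, I would single out an asymmetric witness $S^*\subseteq \NN$ (typically one copy from a designated subset of types) and rewrite the symmetry-gap condition of \Cref{lemma:symmetry-gap-inapprox-linear} for $(\alpha,\beta)=(0.408,1)$ as
\begin{equation*}
\max_{(p_1,\dots,p_m)\in[0,1]^m} \hat{H}(p_1,\dots,p_m) \;<\; 0.408\cdot f(S^*)+\ell(S^*).
\end{equation*}
This reduces the problem to a finite-dimensional min--max: minimize over the cut weights defining $f$ and over $(\ell_1,\dots,\ell_m)\in\RR^m$ the smallest $\alpha$ for which the above inequality can be made to hold, and check that this minimum is strictly below $0.408$. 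I would carry this out by numerical search, as in the proofs of \Cref{thm:inapprox-usm-non-pos} and \Cref{thm:inapprox-usm}, reporting explicit values of the weights and of $(\ell_1,\dots,\ell_m)$ that witness the bound; invoking \Cref{lemma:symmetry-gap-inapprox-linear} with $\beta=1$ then completes the proof.

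The main obstacle is the combinatorial one of locating a construction whose symmetry gap is genuinely this small. The Gharan--Vondr\'ak-style gadget of two directed hyperedges plus an undirected edge that powers \Cref{thm:inapprox-usm-non-pos,thm:inapprox-usm} bottoms out near $\alpha(1)\approx 0.4392$, so a qualitatively richer design (following the more elaborate construction behind \cite[Theorem 1.6]{bodek2022maximizing}) is required, together with a careful numerical optimization over its free parameters. Once a suitable parametric family is identified, verifying the inequality at the reported parameter point is routine, but the ``design'' step, and in particular ruling out that adversarial asymmetric solutions still dominate after the $\ell^+,\ell^-$ weights are tuned, is the delicate part of the argument.
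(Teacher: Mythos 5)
Your high-level plan is the right one — apply \Cref{lemma:symmetry-gap-inapprox-linear} with $\FF=2^{\NN}$ to a strongly symmetric instance inspired by \cite[Theorem 1.6]{bodek2022maximizing}, exploiting the unconstrained sign of $\ell$ — but the proposal stops exactly where the proof begins: you never exhibit the instance. You explicitly defer the ``design step'' as the main obstacle, and without a concrete $f$, $\ell$, and witness set the symmetry-gap inequality cannot be verified, so there is no proof here, only a search strategy. Moreover, the proposal slightly misjudges what is needed: you anticipate ``a qualitatively richer design'' with many types and a heavy numerical min--max, whereas the construction that actually works is \emph{simpler} than the Gharan--Vondr\'ak gadget, not richer.

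The paper's instance is a single generalized directed hyperedge on $\NN=\{a_1,\dots,a_k,b_1,\dots,b_k\}$ with $f(S)=[S\cap\{a_1,\dots,a_k\}\neq\emptyset]\cdot[\{b_1,\dots,b_k\}\not\subset S]$, and the antisymmetric linear function $\ell(a_i)=-0.2037$, $\ell(b_i)=+0.2037$. Writing $p=\sum_i\bx_{a_i}$ and $q=k-\sum_i\bx_{b_i}$, the symmetrized objective becomes $(1-e^{-p})(1-e^{-q})-0.2037(p+q)+0.2037k$ as $k\to\infty$, which is maximized at $p=q=0$ with value $0.2037k$: the negative weights on the tails and positive weights on the heads make it unprofitable for any symmetric point to cut the hyperedge at all. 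Meanwhile the asymmetric set $\{a_1,b_1,\dots,b_{k-1}\}$ achieves $\alpha+0.2037(k-2)$ in the right-hand side of \Cref{lemma:symmetry-gap-inapprox-linear}, and comparing the two yields inapproximability for every $\alpha>0.4074$, a closed-form two-variable check rather than a large numerical optimization. That specific gadget — and in particular the idea of penalizing the tails and rewarding the heads so that the symmetric optimum collapses to the purely linear solution — is the missing content of your argument.
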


\begin{proof} 

As usual, we use \Cref{lemma:symmetry-gap-inapprox-linear}. Define $f$ to be the directed cut function of a generalized hyperedge $(a_1\dots a_k, b_1\dots b_k)$; that is, the generalized hyperedge is said to be cut by $S$ if $S$ contains at least one of the tails of the hyperedge ($a_1 \dots a_k$) but not all of the heads of the hyperedge ($b_1\dots b_k$).
\[\NN\triangleq \{a_1 \dots a_k, b_1\dots b_k\}\]
\[f(S)\triangleq [S\cap \{a_1\dots a_k\} \neq \emptyset]\cdot [\{b_1 \dots b_k\}\not \subset S]\]
\[\ell(a_i)=-0.2037, \ell(b_i)=0.2037\]
Define $p\triangleq \sum_{i=1}^k\bx_{a_i}$ and $q\triangleq k-\sum_{i=1}^k\bx_{b_i}$. Then as $k\to\infty$,
\[F(\overline{\bx})=\paren{1-\paren{1-p/k}^k}\paren{1-\paren{1-q/k}^k}\approx (1-e^{-p})(1-e^{-q}).\]
Now,
\begin{align*}
\max_{\bx}[F(\overline \bx)+L(\overline \bx)]&=\max_{p,q\ge 0}[(1-e^{-p})(1-e^{-q})-0.2037(p+q)+0.2037k]\\
&=0.2037k,
\end{align*}
where the last equality follows since the maximum is attained at $p=q=0$. On the other hand,
\[
\max_S[\alpha f(S)+\ell(S)]\ge (f+\ell)(\{a_1,b_{1\dots k-1}\})=\alpha+0.2037(k-2).
\]
It follows that we have shown $(\alpha,1)$-inapproximability for any $\alpha$ satisfying
\[
0.2037k<\alpha+0.2037(k-2)\implies \alpha > 0.4074. \qedhere
\]

\end{proof}

\section{Open Problems}\label{sec:open}

Most of these problems pertain to \RUSM{} because the gaps between approximability and inapproximability are larger for \RUSM{} compared to \RCSM{}.

\subsection{Approximability}

\paragraph*{\Cref{sec:non-positive}: Non-positive $\ell$.}

\Cref{thm:rusmNonposExtended} attains bounds for \RUSM{} with $\alpha\ge 0.398$. What is the maximum $\alpha$ such that an $(\alpha,\beta)$ approximation exists for some $\beta$? In particular, is $\alpha=0.5$ achievable?

\paragraph*{\Cref{sec:unconstrained}: Unconstrained $\ell$.} 

Is there an algorithm that achieves an $(\eps,1)$ approximation for \RUSM{}? Recall that for \RCSM{} this was achievable when $\ell$ was restricted to be non-positive or non-negative (\Cref{thm:0.385-non-pos,thm:rusmNonnegBetaOne}, respectively), but not in the case where $\ell$ can have arbitrary sign (\Cref{thm:inapprox-csm-beta-1}).

\paragraph*{\Cref{subsec:online-algos}: Online \RUSM{}.}

Can the $\alpha$ in \Cref{thm:dicut-oblivious} be improved? Is there an online algorithm that works for general non-monotone $f$ and $\ell$ non-positive? We note that the \textit{semi-streaming} algorithms studied by Kazemi et al. \cite{kazemi2021regularized} and Nikolakaki et al. \cite{nikolakaki2021efficient} provide a $(0.5,1)$-approximation algorithm for \RUSM{} when $f$ is monotone. For non-monotone \USM{}, simply selecting each element of $f$ with probability $0.5$ achieves a $0.25$-approximation \cite{feige2011maximizing}. For non-monotone \CSM{} where the constraint is a cardinality constraint, Buchbinder et al. \cite{buchbinder2014online} provide an online algorithm achieving a competitive ratio of $\frac{56}{627}>0.0893$ when \textit{preemption} is allowed.

\subsection{Inapproximability}

\Cref{tab:best-inapprox} summarizes some of the best known inapproximability results and their corresponding approximation guarantees. The gaps between approximability and inapproximability are particularly large in the second and fourth rows, corresponding to \RUSM{} for $\ell\le 0$ and unconstrained $\ell$, respectively. All inapproximability results use the symmetry gap technique; are there any other inapproximability techniques potentially worth considering?

\begin{table}[h]
    \centering
    \begin{tabular}{|c|c|c|c|c|}
    \hline 
    Section & Problem & Inapproximability & Approximability & Source of Inapproximability \\
    \hline 
    \Cref{sec:cardinality-inapprox} & \CSM{} & 0.478 & 0.385 & \cite[Theorem E.2]{gharan2011submodular} \\
    \Cref{sec:non-positive} & \RUSM{}, $\ell\le 0$ & $(0.5, 2\ln 2-\eps)$ & None for $\alpha=0.5$ & \Cref{thm:inapprox-two-ln-two} \\
    \Cref{sec:non-neg-usm} & \RUSM{}, $\ell\ge 0$ & $\paren{0.5,2\sqrt 2/3+\eps}$ & $(0.5,0.75)$ & \Cref{thm:0.5-0.943-inapprox-nonneg} \\
    \Cref{sec:unconstrained} & \RUSM{} & $\paren{0.408,1}$ & $(0,1)$ & \Cref{thm:0.408-inapprox} \\
    \hline 
    \end{tabular}
    \caption{Gaps Between Current Approximability and Inapproximability}
    \label{tab:best-inapprox}
\end{table}

\section*{Acknowledgements}

This research was conducted as part of MIT's Undergraduate Research Opportunities Program. I thank my supervisor Tasuku Soma for many helpful discussions, as well as the authors of \cite{bodek2022maximizing}, which this work was heavily influenced by.

\printbibliography

\appendix

\section{Appendix}

\subsection{Prior Work}\label{subsec:prior-work}

We outline the general idea for all \textit{continuous greedy} algorithms because our results build on them.

\subsubsection{Submodular Maximization}

For surveys of submodular maximization results, see Krause and Golovin \cite{krause2014submodular} or Buchbinder and Feldman \cite{buchbinder2018submodular}. 

\paragraph*{$f$ Monotone (Constrained):}

It is well-known that a simple \textit{greedy} algorithm achieves a $\paren{1-\frac{1}{e}}$-approximation for maximizing monotone submodular functions subject to a cardinality constraint \cite{nemhauser1978analysis}, and that this approximation factor is optimal \cite{nemhauser1978best}.

Calinescu et al. \cite{calinescu2011maximizing} introduced the \textit{continuous greedy} algorithm, which achieves a $\paren{1-\frac{1}{e}}$-approximation for maximizing the multilinear extension of a monotone submodular function over a solvable down-closed polytope $\PP$. The idea is to continuously evolve a fractional solution $\by(t)$ from ``time'' $t=0$ to $t=1$ such that 
\[\by(t)\in t\cdot \PP \qquad \text{and}\qquad F(\by(t))\ge (1-e^{-t})f(OPT).\]
This continuous process can be discretized into a polynomial number of steps at the cost of a negligible loss in the approximation factor. If $\PP$ is the \textit{matroid polytope} corresponding to a matroid $\MM=(\NN,\II)$, then \textit{pipage rounding} may be used to round the fractional solution $\by(1)$ to an independent set $S\in \II$ such that $\EE[f(S)]\ge F(\by(1))$ \cite{vondrak2011symmetry}.

\paragraph*{$f$ Non-monotone (Unconstrained):} Feige et al. \cite{feige2011maximizing} showed that no polynomial-time algorithm may provide a $(0.5+\eps)$-approximation for maximizing a non-monotone submodular function. Buchbinder et al. \cite{buchbinder2012double} later discovered a \textit{randomized double greedy} algorithm that achieves a 0.5-approximation in expectation. The idea is to iterate through the elements of the ground set $\NN$ in arbitrary order, and for each one choose whether or not to include it in the returned set with some probability.

\paragraph*{$f$ Non-monotone (Constrained):}

Feldman et al. \cite{feldman2011unified} showed a $1/e>0.367$-approximation for maximizing the \textit{multilinear extension} of a non-monotone submodular function over a solvable down-closed polytope $\PP$ using a \textit{measured continuous greedy}. The idea is to continuously evolve a fractional solution $\by(t)$ from $t=0$ to $t=1$ such that 
$$\by(t)\in (t\cdot \PP)\cap ((1-e^{-t})\cdot [0,1]^{\NN}) \qquad \text{and}\qquad F(\by(t))\ge te^{-t}f(OPT).$$
As with the original continuous greedy, the fractional solution $\by(1)$ can be rounded to an integer solution when $\PP$ is a matroid polytope. Additionally, when $f$ is monotone, measured continuous greedy provides the same guarantee as \cite{calinescu2011maximizing}.

The approximation factor was later improved by Buchbinder and Feldman \cite{buchbinder2016nonsymmetric} to $0.385$. The idea is to first run \textit{local search} on the multilinear extension $F$ to find a ``locally optimal'' fractional solution $\bz\in \PP$, round $\bz$ to a set $Z$, and then run a measured continuous greedy ``aided'' by $Z$. Either $Z$ will be a 0.385-approximation in expectation, or the set returned by \textit{aided measured continuous greedy} will be. The aided measured continuous greedy consists of running measured continuous greedy from $t=0$ to $t=t_s$ on $\NN\backslash Z$, followed by running measured continuous greedy from $t=t_s$ to $t=1$ on the entire ground set $\NN$, where $t_s=0.372$. The optimal value of $t_s$ was determined by solving a non-convex optimization problem.

On the inapproximability side, Gharan and Vondrak \cite{gharan2011submodular} showed that no polynomial-time algorithm may achieve a 0.478-approximation for maximizing a non-negative submodular function subject to a matroid independence constraint or a 0.491-approximation for maximizing a non-negative submodular function subject to a cardinality constraint using the \textit{symmetry gap} framework of Vondrak \cite{vondrak2011symmetry}. The symmetry gap framework may also be used to succinctly reprove the optimality of the $1-\frac{1}{e}$ and $\frac{1}{2}$ approximation factors for monotone and nonmonotone maximization, respectively, which were previously proved by ad hoc methods. The idea is that given a maximization problem with a symmetry gap of $\gamma\in (0,1)$, we can construct a family of pairs of functions that require exponentially many value oracle queries to distinguish but whose maxima differ by a factor of $\gamma$. This in turn shows the inapproximability of a $(\gamma+\eps)$-approximation.

\subsubsection{Regularized Submodular Maximization}\label{subsubsec:regularized-submod}

\paragraph*{Monotone $f$:} Sviridenko et al. \cite{sviridenko2017optimal} first presented an $(1-1/e-\eps,1-\eps)$-approximation algorithm for \texttt{RegularizedCSM} involving a step where the value of $\ell(OPT)$ needs to be ``guessed'' to within a factor of $1+\eps$, followed by continuous greedy on $\PP\cap \{\bx: L(\bx)\ge \ell(OPT)\}$. Afterward, if $\PP$ is a matroid independence polytope, $\bx$ can be rounded to a set $S$ such that $\bone_S\in \PP$ using pipage rounding such that $\EE[f(S)+\ell(S)]\ge F(\bx)+L(\bx)$.

Feldman \cite{feldman2018guess} later combined continuous greedy with the notion of a \textit{distorted objective} that initially places higher weight on the linear term and increases the weight on the submodular term over time. This \textit{distorted continuous greedy} achieves the same approximation factor as \cite{sviridenko2017optimal} without the need for the guessing step. The idea is to continuously evolve a fractional solution $\by(t)$ from $t=0$ to $t=t_f$ such that
\[\by(t)\in t\cdot \PP \qquad \text{and}\qquad G_t(\by(t))\ge (e^{t-t_f}-e^{-t_f})f(OPT)+t\ell(OPT),\]
where $G_t(\by)\triangleq e^{t-t_f}F(\by)+L(\by)$ is the distorted objective at time $t$.\footnote{Actually, the original paper shows this only for $t_f=1$, but as noted by \cite{bodek2022maximizing} this can easily be generalized.} For $t_f=1$, this gives a $(1-1/e-\eps,1)$-approximation, eliminating the $\eps$ in the linear term that appears in the bound of \cite{sviridenko2017optimal} due to the guessing step.

Using the symmetry gap technique \cite{vondrak2011symmetry}, Bodek and Feldman \cite[Theorem 1.1]{bodek2022maximizing} proved that no $(1-e^{-\beta}+\eps,\beta)$-approximation algorithm for \texttt{RegularizedUSM} exists for any $\beta\ge 0$, even when $\ell$ is constrained to be non-positive (see \Cref{fig:ell-non-pos} for an illustration). This matches the guarantee of \textit{distorted continuous greedy}, which achieves a $(1-e^{-\beta}-\eps,\beta)$-approximation for \texttt{RegularizedCSM} whenever $\beta\in [0,1]$. When $\ell$ is constrained to be non-positive, Lu et al. \cite{lu2021regularized} achieve a $(1-e^{-\beta}-\eps,\beta)$-approximation for \texttt{RegularizedCSM} for any $\beta\ge 0$ using \textit{distorted measured continuous greedy} (described below). For the remainder of this section, $f$ is not necessarily monotone.

\paragraph*{Non-positive $\ell$:} Lu et al. \cite{lu2021regularized} presented a $(\beta e^{-\beta}-\eps,\beta)$-approximation algorithm for \texttt{RegularizedCSM} combining the measured and distorted continuous greedies mentioned above due to Feldman et al. \cite{feldman2011unified,feldman2018guess}. The idea is to continuously evolve a solution $\by(t)$ from $t=0$ to $t=t_f$ such that
\[\by(t)\in (t\cdot \PP)\cap ((1-e^{-t})\cdot [0,1]^{\NN}) \qquad \text{and}\qquad G_t(\by(t))\ge te^{-t_f}f(OPT)+t\ell(OPT),\]
where $G_t(\by)=e^{t-t_f}F(\by)+L(\by)$ as in distorted continuous greedy above.\footnote{Actually, the original paper shows this only for $t_f=1$, but as noted by \cite{bodek2022maximizing} this can easily be generalized.} Setting $t_f=\beta$ gives the desired approximation factor. Note that when $\ell=0$, the guarantee of \textit{distorted measured continuous greedy} becomes the same as \textit{measured continuous greedy}. As noted in the previous paragraph, the approximation guarantee of this algorithm becomes the same as Feldman's distorted continuous greedy when $f$ is monotone.

Bodek and Feldman \cite[Theorem 1.3]{bodek2022maximizing} proved $(\alpha(\beta),\beta)$-inapproximability for \texttt{RegularizedUSM} for all $\beta\ge 0$, where $\alpha(\beta)$ is an increasing function satisfying $\alpha(1)\approx 0.478$, matching the best known bound for maximizing a submodular function subject to a matroid constraint \cite{gharan2011submodular} (see \Cref{fig:ell-non-pos} for an illustration). 

\paragraph*{Non-negative $\ell$:} Bodek and Feldman \cite[Theorem 1.5]{bodek2022maximizing} showed that Buchbinder et al.'s \textit{double greedy} \cite{buchbinder2012double} is simultaneously a $(\alpha,1-\alpha/2)$-approximation for \texttt{RegularizedUSM} for any $\alpha\in [0,0.5]$, and that a $(0.4998+\eps,1)$-approximation for \texttt{RegularizedUSM} is impossible \cite[Theorem 1.6]{bodek2022maximizing}.

\paragraph*{Unconstrained $\ell$:} Bodek and Feldman \cite[Theorem 1.2]{bodek2022maximizing} presented a $\paren{\frac{\beta(1-\beta)}{1+\beta}-\eps,\beta}$-approximation for \texttt{RegularizedUSM} using a \textit{local search} technique.

Sun et al. \cite{sun2022maximizing} presented an algorithm for \texttt{RegularizedCSM} where the sign of $\ell$ is unconstrained which turns out to be identical to that of Lu et al. \cite{lu2021regularized}. They showed that their algorithm outputs $\bx\in \PP$ such that $F(\bx)+L(\bx)\ge \max_{S\in \PP}\brac{\paren{\frac{1}{e}-\eps}\cdot f(S)+\paren{\frac{\beta(S)-e}{e(\beta(S)-1)}}\ell(S)}$, where $\beta(S)\triangleq \frac{\sum_{u\in S\cap \NN^+}\ell(u)}{-\sum_{u\in S\cap \NN^-}\ell(u)}\ge 0$. Note that when $\ell\le 0$, $\beta(S)=0$ and the coefficient of $\ell(S)$ is $1$, recovering the approximation guarantee of Lu et al. \cite{lu2021regularized}. However, this is not quite an $(\alpha,\beta)$-approximation algorithm when $\ell$ is allowed to have arbitrary sign since $\beta(S)$ is not constant. Furthermore, the expression $\frac{\beta(S)-e}{e(\beta(S)-1)}$ could potentially be negative, which is problematic.

\subsection{Online Algorithms for \RUSM{}}\label{subsec:online-algos}

Here, we discuss whether the general class of \textit{online} algorithms can achieve approximation factors for \RUSM{} when $\ell$ is not necessarily non-negative. First, we formally define the notion of online algorithms in the context of $f+\ell$ sums with $f$ a directed cut function.

\begin{definition}[Online Algorithms for Directed Cuts]

When $f$ is a directed cut function, we say that an algorithm is \textit{online} in the sense of Bar-Noy and Lampis \cite{bar2012online} if it works in the following setting:
\begin{enumerate}
    \item The vertices of the ground set are revealed in the order $u_1,u_2,\dots,u_n$.
    \item The algorithm is provided with $\ell_u$, the total in-degree of $u$ ($in(u)$), the total out-degree of $u$ ($out(u)$), as well as the edges between $u$ and all previously revealed vertices, only after $u$ is revealed.
    \item The algorithm makes an irreversible decision about whether to include $u$ in the returned set before any vertices after $u$ are revealed.
\end{enumerate}

\end{definition}

We note that both \DetDG{} and \RanDG{} are examples of online algorithms. Huang and Borodin \cite{huang2014bounds} extended the notion of online algorithms to general non-monotone submodular $f$, though we do not consider their extension here. We first show that deterministic online algorithms cannot achieve any $(\alpha,\beta)$-approximation.

\begin{theorem}\label{thm:dicut-deterministic-bad}

There are instances of \RUSM{} with $f$ a directed cut function and $\ell$ non-positive such that no deterministic online algorithm can provide a $(\alpha, \beta)$-approximation for any $\alpha>0$.

\end{theorem}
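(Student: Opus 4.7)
The plan is to fix an arbitrary deterministic online algorithm $A$ and exhibit, for each such $A$, a two-vertex directed-cut instance on which $A$'s output has value exactly $0$ while the optimum has value $1$. Since this forces $\alpha = 0$ in any $(\alpha,\beta)$-approximation, the theorem follows immediately. I will take $\ell \equiv 0$, which satisfies the non-positivity requirement; the same argument can be rerun with $\ell(u_i) = -\eps$ for arbitrarily small $\eps > 0$ if a strictly negative regularizer is desired.

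The ground set is $\NN = \{u_1, u_2\}$. The adversary reveals $u_1$ first, announcing $\ell(u_1) = 0$ and $in(u_1) = out(u_1) = 1$; no edge weights are disclosed yet because no earlier vertex exists. The algorithm $A$ must commit to $u_1 \in T$ or $u_1 \notin T$ on the basis of this information alone. The crucial observation is that this information will be identical across the two adversarial branches described below, so $A$'s commitment is a deterministic function of it, and the adversary can react to the commitment before disclosing any edge weights.

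Once $A$ commits, the adversary reveals $u_2$ with $\ell(u_2) = 0$, $in(u_2) = out(u_2) = 1$, together with both edges $(u_1,u_2)$ and $(u_2,u_1)$, assigning their weights adversarially. If $A$ already included $u_1$, set $w(u_1,u_2) = 0$ and $w(u_2,u_1) = 1$: the unique optimum is $\{u_2\}$ with $f + \ell = 1$ (cutting $u_2 \to u_1$), while every $T \ni u_1$ has $f(T) = 0$, since the only positive-weight edge $u_2 \to u_1$ has its head inside $T$ and is therefore not cut. If $A$ excluded $u_1$, set $w(u_1,u_2) = 1$ and $w(u_2,u_1) = 0$: the unique optimum is $\{u_1\}$ with $f + \ell = 1$, while every $T \not\ni u_1$ has $f(T) = 0$.

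The only point needing verification is that $A$'s view at the moment it decides on $u_1$ is truly independent of the adversary's later choice, and this is immediate from the online protocol: the quantities $\ell(u_1), in(u_1), out(u_1)$ are identical in the two branches, and no edge information is revealed until $u_2$ arrives. Everything else is a routine check on a single positive-weight directed edge. Comparing the algorithm's output value $0$ against the approximation target $\alpha \cdot 1 + \beta \cdot 0 = \alpha > 0$ completes the argument and shows that no such $\alpha$ is achievable.
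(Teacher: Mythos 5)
There is a genuine gap, and it lies exactly at the point you flag as ``the only point needing verification.'' In the online model defined in the paper (\Cref{subsec:online-algos}), the algorithm is given $in(u_1)$ and $out(u_1)$ --- the \emph{total} in- and out-degree of $u_1$ in the final graph --- at the moment $u_1$ is revealed. Your adversary announces $in(u_1)=out(u_1)=1$ and then, after seeing the algorithm's decision, sets one of the two edge weights to $0$: in the first branch $out(u_1)$ becomes $0$ and in the second branch $in(u_1)$ becomes $0$, contradicting the degrees already disclosed. So the two branches are not both consistent with the algorithm's view at decision time, and the adversary argument collapses. A quick sanity check confirms the approach cannot be repaired within $\ell\equiv 0$: your argument, if valid, would show that no deterministic online algorithm achieves \emph{any} constant factor for plain online \texttt{MAX-DICUT}, but Bar-Noy and Lampis give constant-competitive deterministic algorithms in precisely this degree-informed model. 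With the degrees held fixed (say $in(u_1)=out(u_1)=1$ on two vertices), whichever single edge has positive weight is cut by the ``correct'' side of $u_1$, and the algorithm that includes $u_1$ when $out(u_1)\ge in(u_1)$ already gets a constant fraction.

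The paper's proof respects the model by committing the entire cut function up front --- edge $u_1\to u_2$ of weight $\alpha/2$ and edge $u_2\to u_1$ of weight $1$, so $in(u_1)=1$ and $out(u_1)=\alpha/2$ are honest --- and reserving the adversary's adaptivity for the one quantity the model genuinely hides until $u_2$ arrives, namely $\ell(u_2)$. If the algorithm takes $u_1$, its value is capped at $\alpha/2<\alpha=\alpha f(\{u_2\})$; if it rejects $u_1$, the adversary sets $\ell(u_2)=-1$ so that taking $u_2$ is worthless and the algorithm ends at value $0$, while $\{u_1\}$ still has positive value. This is why the non-positive linear term is essential to the theorem rather than an optional perturbation, as your write-up suggests. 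To fix your proof you would need to restructure it along these lines: fix all edge weights (hence all degrees) before $u_1$ is revealed, and let only the later-revealed $\ell$ values depend on the algorithm's first decision.
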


\begin{proof} 

Suppose that after $u_1$ is revealed, the algorithm is provided with $in(u_1)=1$, $out(u_1)=\alpha/2$, and $\ell(u_1)=0$. 
\begin{enumerate}
    \item If the algorithm includes $u_1$ in the returned set, then this algorithm fails to provide the desired approximation factor on the following instance:
    \[f(S)=\alpha/2\cdot [u_1\in S\text{ and }u_2\not\in S]+[u_2\in S\text{ and }u_1\not \in S], \ell(u_2)=0,\] 
    since it outputs a set with value at most $\alpha/2$, whereas if we let $OPT\triangleq \{u_2\}$ then $f(OPT)=1$ and $\ell(OPT)=0$, implying that $\alpha f(OPT)+\beta \ell(OPT)=\alpha > \alpha/2$.
    
    \item On the other hand, if the algorithm does not include $u_1$ in the returned set, then this algorithm fails to provide the desired approximation factor on the following instance:
    \[f(S)=\alpha/2\cdot [u_1\in S\text{ and }u_2\not\in S]+[u_2\in S\text{ and }u_1\not \in S], \ell(u_2)=-1,\] 
    since it outputs a set with value $0$ whereas if we let $OPT\triangleq \{u_1\}$ then $f(OPT)=\alpha/2$ and $\ell(OPT)=0$, implying that $\alpha f(OPT)+\beta \ell(OPT)=\alpha/2>0$.
\end{enumerate}

\end{proof}

We next show that \RanDG{} does not achieve any $(\alpha,\beta)$-approximation by adapting the proof of \Cref{thm:randomized-dg-tight}.

\begin{corollary}\label{corollary:randomized-dg-bad}

There are instances of \RUSM{} with $f$ a directed cut function and $\ell$ non-positive such that \RanDG{} does not provide any $(\alpha,\beta)$-approximation for any $\alpha>0$.

\end{corollary}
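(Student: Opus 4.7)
Fix $\alpha > 0$. The plan is to reuse the construction from \Cref{thm:randomized-dg-tight}, flipping the sign of $\ell(u_n)$ so that $\ell$ becomes non-positive and choosing $r$ small enough that $r/(r+1) < \alpha$; any $r < \alpha/(1-\alpha)$ works when $\alpha<1$, and any $r > 0$ works when $\alpha \ge 1$. Explicitly, I would set $\NN \triangleq \{u_1, \ldots, u_n\}$ and
\[
f(S) \triangleq \frac{1}{n-1}\sum_{i=1}^{n-1}\paren{r\cdot[u_i\in S\wedge u_n\not\in S] + [u_n\in S\wedge u_i\not\in S]},
\]
together with $\ell(u_i) = 0$ for $i < n$ and $\ell(u_n) = -2$, processed in the order $u_1, u_2, \ldots, u_n$. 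Here $f$ is a directed cut function and $\ell \le 0$, as required.

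The first $n-1$ iterations of \RanDG{} behave exactly as in the proof of \Cref{thm:randomized-dg-tight}: for every $i < n$ we have $\ell(u_i) = 0$, $u_n \not\in X_{i-1}$, and $u_n \in Y_{i-1}$, so the marginal quantities remain $a_i = r/(n-1)$ and $b_i = 1/(n-1)$. Thus each $u_i$ with $i < n$ is included in $X_n$ independently with probability $r/(r+1)$. The new step is iteration $n$: writing $k \triangleq |X_{n-1}|$ and noting that $Y_{n-1} \setminus \{u_n\} = X_{n-1}$ (so both candidate marginals coincide),
\[
g(u_n \mid X_{n-1}) \;=\; \frac{n-1-k}{n-1} - \frac{rk}{n-1} - 2 \;\le\; 1 - 2 \;=\; -1,
\]
which forces $a_n = 0$ and guarantees $u_n \not\in X_n$ deterministically.

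Consequently $\ell(X_n) = 0$ and $f(X_n) = r|X_n|/(n-1)$ with probability one, so linearity of expectation yields $\EE[f(X_n) + \ell(X_n)] = r \cdot \EE[|X_n|]/(n-1) = r^2/(r+1)$. Taking $S = \{u_1, \ldots, u_{n-1}\}$ gives $f(S) = r$ and $\ell(S) = 0$, hence $\max_{S\subseteq \NN}[\alpha f(S) + \beta \ell(S)] \ge \alpha r$. By choice of $r$, $r^2/(r+1) < \alpha r$, so \RanDG{} fails to provide an $(\alpha, \beta)$-approximation on this instance, regardless of $\beta$. There is no serious obstacle; the only subtlety is choosing $|\ell(u_n)| > 1$ so that iteration $n$ excludes $u_n$ unconditionally on $k$, which removes the need for the Chernoff bound used in \Cref{thm:randomized-dg-tight} since the expectation $\EE[f(X_n)]$ is then pinned down exactly by the independence of the first $n-1$ iterations.
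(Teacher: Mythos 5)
Your proof is correct, but the construction differs from the paper's. The paper's proof keeps $\ell(u_n)=0$ and instead spreads the negative weight over the first $n-1$ coordinates via $\ell(u_i)=\frac{1-r}{n-1}$ for $i<n$; this flips each selection probability to $\frac{1}{r+1}$, uses $OPT=\{u_n\}$ (so $f(OPT)=1$ is fixed), requires $r\ge 1$ large, and needs a Chernoff bound to argue the returned set has value close to $\frac{1}{r+1}<\alpha$. You instead concentrate the negativity on $u_n$ alone by setting $\ell(u_n)=-2$, take $r$ small, and observe that $u_n$ is then excluded \emph{deterministically} (since $b_n\ge 1>0=a_n$), which pins down $\EE[f(X_n)+\ell(X_n)]=\frac{r^2}{r+1}$ exactly by linearity and eliminates the concentration argument. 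The trade-off is that your witness has $f(OPT)=r$ shrinking with $\alpha$, while the paper's has $f(OPT)=1$; both are valid, and yours is a bit tidier since it avoids the ``high probability'' qualifier entirely. One small point you handled correctly but are worth flagging: choosing $\ell(u_n)$ strictly less than $-1$ (as opposed to exactly $-1$) is what rules out the degenerate case $a_n=b_n=0$ when $X_{n-1}=\emptyset$, so the exclusion of $u_n$ really is unconditional rather than merely high-probability.
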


\begin{proof} 

Define $f$ to be the same as in \Cref{thm:randomized-dg-tight}, $\ell(u_n)=0$, and $\ell(u_i)=\frac{1-r}{n-1}$ for all $i\in [1,n-1]$. Then
\begin{equation*}
\max_S[\alpha f(S)+\beta \ell(S)]\ge (\alpha f+\beta \ell)(\{u_n\})=\alpha\cdot 1+\beta\cdot 0=\alpha.
\end{equation*}
For each $i\in [1,n-1]$, \[a_i=\ell(u_i)+f(u_i|\emptyset)=\frac{1-r}{n-1}+\frac{r}{n-1}=\frac{1}{n-1}\]
and 
\[b_i=-\ell(u_i)-f(u_i|\NN\backslash \{u_i\})=\frac{r-1}{n-1}+\frac{1}{n-1}=\frac{r}{n-1}.
\]
So by similar reasoning as the proof of \Cref{thm:randomized-dg-tight}, the fraction $f$ of $\{u_1,\dots,u_{n-1}\}$ selected by double greedy will be close to $\frac{1}{r+1}$ with high probability. If $u_n$ is included in the returned set, then the value of the set will be $f(1-r)+(1-f)\approx \frac{1}{r+1}$, whereas if $u_n$ is not, then the value of the set will be
$f\approx \frac{1}{r+1}$. So regardless of whether double greedy chooses to include $u_n$ in the returned set or not, the returned set will have expected value at most $\frac{1}{r+1}+\eps<\alpha$ when both $r$ and $n$ are sufficiently large.

\end{proof}

On the other hand, there are randomized algorithms that achieve $(\alpha,\beta)$-approximations. In fact, the algorithm we provide next is \textit{oblivious} in the sense of Feige and Shlomo \cite{feige2015oblivious}; that is, it uses only information local to each vertex.

\begin{theorem}\label{thm:dicut-oblivious}

For any $\beta\in [0,1]$, there is an oblivious $(\beta(1-\beta),\beta)$-approximation algorithm for \RUSM{} with $f$ a directed cut function and $\ell$ having arbitrary sign. 

\end{theorem}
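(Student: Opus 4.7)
The plan is to define an oblivious algorithm where each vertex $u$ decides, using only $\ell_u$ and $out(u)$, whether to be ``active'', and then, if active, is included in $T$ independently with probability $\beta$. Concretely, set
\[
\phi(u) \triangleq \ell_u + (1-\beta)\, out(u), \qquad S^{*} \triangleq \{u \in \NN : \phi(u) \geq 0\},
\]
and include each $u \in S^{*}$ in $T$ independently with probability $\beta$, excluding every $u \notin S^{*}$. This qualifies as oblivious in the sense of Bar-Noy--Lampis because each vertex consults only its own linear coefficient and its own total out-degree.

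The first step of the analysis is to expand $\EE[f(T) + \ell(T)]$ edge-by-edge and combine with $\EE[\ell(T)] = \beta\,\ell(S^{*})$. For each edge $(u,v) \in E$, the expected contribution $w_{uv}\,\Pr[u \in T, v \notin T]$ equals $\beta(1-\beta)\,w_{uv}$ if $u,v \in S^{*}$, equals $\beta\,w_{uv}$ if $u \in S^{*}$ and $v \notin S^{*}$, and equals $0$ otherwise. Using $\beta - \beta(1-\beta) = \beta^{2}$, a short computation yields the identity
\[
\EE[f(T) + \ell(T)] \;=\; \beta \sum_{u \in S^{*}} \phi(u) \;+\; \beta^{2}\, w(E_{S^{*} \to \overline{S^{*}}}),
\]
where $w(E_{A \to B}) \triangleq \sum_{(u,v) \in E,\, u \in A,\, v \in B} w_{uv}$ and $\overline{S^{*}} \triangleq \NN \setminus S^{*}$. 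The surplus term is non-negative.

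The second step compares $\sum_{u \in S^{*}} \phi(u)$ against $\sum_{u \in S} \phi(u)$ for an arbitrary set $S \subseteq \NN$. Since $\phi(u) \geq 0$ precisely on $S^{*}$, we get $\sum_{u \in S^{*}} \phi(u) \geq \sum_{u \in S} \phi(u)$; and expanding gives
\[
\sum_{u \in S} \phi(u) \;=\; \ell(S) + (1-\beta) \sum_{u \in S} out(u) \;\geq\; \ell(S) + (1-\beta)\, f(S),
\]
using $\sum_{u \in S} out(u) = w(E_{S \to S}) + w(E_{S \to \overline{S}}) \geq w(E_{S \to \overline{S}}) = f(S)$. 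Multiplying by $\beta$ and chaining with the identity from the first step yields $\EE[f(T) + \ell(T)] \geq \beta(1-\beta)\, f(S) + \beta\, \ell(S)$ for every $S$, which is exactly the desired $(\beta(1-\beta), \beta)$-approximation.

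The main obstacle is guessing the correct local quantity $\phi(u)$. Naive choices such as $p_u = \beta$ uniformly, or $p_u = \beta \cdot \mathbb{1}[\ell_u \geq 0]$, break on a single-edge example $(a,b)$ with $\ell_a$ slightly negative and $w_{ab}$ large, where $\{a\}$ is the maximizer of $\beta(1-\beta) f + \beta \ell$ but the algorithm never selects $a$. The additive shift by $(1-\beta)\, out(u)$ is exactly what makes the identity in step one sharp (so that the ``extra'' edges $E_{S \to S}$ can be dropped without overshooting) and the pointwise comparison in step two automatic, and notably the argument holds against every $S$ rather than requiring any optimality property of $OPT$.
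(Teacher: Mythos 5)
Your proposal is correct and matches the paper's proof essentially exactly: the same oblivious rule (include $u$ with probability $\beta$ iff $\ell_u+(1-\beta)\,out(u)\ge 0$) and the same chain of inequalities $\EE[f(T)+\ell(T)]\ge \beta\sum_{u}\max(\phi(u),0)\ge \beta\sum_{u\in S}\phi(u)\ge \beta(1-\beta)f(S)+\beta\ell(S)$. The only cosmetic difference is that you compute the expectation exactly (isolating the non-negative surplus $\beta^2 w(E_{S^*\to\overline{S^*}})$) where the paper simply lower-bounds each cut edge's probability by $\beta(1-\beta)$.
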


\begin{proof} 

For each vertex $v$, select it with probability $\beta$ if $(1-\beta)\cdot \text{out}(v)+\ell(v)\ge 0$, and 0 otherwise. Then
\begin{align*}
\beta(1-\beta) f(OPT)+ \beta\ell(OPT)&\le \sum_{v\in OPT}\paren{\beta(1-\beta)\cdot \text{out}(v)+\beta \ell(v)}\\
&\le \sum_{v\in V}\beta\max((1-\beta)\cdot \text{out}(v)+\ell(v),0),    
\end{align*}
and the last expression lower bounds the expected value of the solution returned by the randomized algorithm since every vertex is \textit{not} selected with probability at least $1-\beta$.

\end{proof}

\subsection{Omitted Proofs}\label{subsec:omitted-proofs}

\begin{proof}[Proof of \Cref{lemma:aided-ext} (Remainder)]
We modify the non-formal proof of \cite{buchbinder2016nonsymmetric}. This non-formal proof uses some simplifications such as allowing a direct oracle access to the multilinear extension $F$ and giving the algorithm in the form of a continuous time algorithm, but these simplifications may be removed using known techniques at the cost of introducing the $o(1)$ into the guarantee \cite{buchbinder2016nonsymmetric}. 

By \cite[Lemma 4.3]{buchbinder2016nonsymmetric},
\begin{equation*}
\frac{dF(\by(t))}{dt}\ge \begin{cases}
F(\by(t)\vee \bone_{OPT\backslash Z})-F(\by(t)) & t\in [0,t_s) \\
F(\by(t)\vee \bone_{OPT})-F(\by(t)) & t\in [t_s,t_f)
\end{cases}.
\end{equation*}
Also, by \cite[Lemma 4.4]{buchbinder2016nonsymmetric}, for every time $t\in [0,t_f)$ and set $A\subseteq \NN$ it holds that:
\begin{equation*}
F(\by(t)\vee \bone_A)\ge \paren{e^{-\max\{0,t-t_s\}}-e^{-t}}[f(A)-f(A\cup Z)]+e^{-t}f(A).
\end{equation*}
So then by \cite[Corollary 4.5]{buchbinder2016nonsymmetric}, plugging in $A=OPT\backslash Z$ and $A=OPT$ for $t\in [0,t_s)$ and $t\in[t_s,t_f)$, respectively, gives us
\begin{align*}
\frac{dF(\by(t))}{dt}&\ge \begin{cases}
f(OPT\backslash Z)-(1-e^{-t})f(OPT\cup Z) & t\in [0,t_s) \\
e^{t_s-t}f(OPT)-(e^{t_s-t}-e^{-t})f(OPT\cup Z) & t\in [t_s,t_f)
\end{cases}-F(\by(t))\\
&\triangleq G(t)-F(\by(t)).
\end{align*}
By submodularity of $f$, we may replace $f(OPT\backslash Z)$ with $f(OPT) - f(OPT\cap Z)$ in $G(t)$. Then
\begin{align*}
F(\by(t_f))&\ge e^{-t_f}\brac{\int_0^{t_f}e^tG(t)\,dt+F(\by(0))}\\
&=e^{-t_f}\left(\int_0^{t_s}e^t\brac{f(OPT)-f(OPT\cap Z)-(1-e^{-t})f(OPT\cup Z)}dt\right.\\
&\left.+\int_{t_s}^{t_f}e^t\brac{e^{t_s-t}f(OPT)-(e^{t_s-t}-e^{-t})f(OPT\cup Z)}dt\right).
\end{align*}
After evaluating and rearranging this final expression, we can see that this matches \Cref{align:aided-ext-integral}.

\end{proof}

\begin{proof}[Proof of \Cref{thm:max-cut-csm}]
Consider the following linear program:
\begin{maxi}
{\bx\in \PP}{\frac{1}{2}\hat{f}(\bx)+L(\bx)\triangleq  \frac{1}{2}\max_{\bf c}\paren{\sum_{ab\in E}w_{ab}{\bf c}_{ab}}+L({\bf x})}{}{\label{maxi:undirected}}
\addConstraint{{\bf c}_{ab}}{\ge 0}
\addConstraint{{\bf c}_{ab}}{\le \bx_a+\bx_b}
\addConstraint{{\bf c}_{ab}}{\le 2-\bx_a-\bx_b}
\end{maxi}
Here, ${\bf c}_{ab}$ corresponds to whether the edge $(a,b)$ was cut. Note that $\hat{f}(\bone_S)=f(S)$ for all $S\subseteq \NN$, meaning that $\hat{f}$ is an extension of $f$ (though not multilinear).  Furthermore, since $f$ is an \textit{undirected} cut function,
\begin{equation}
\forall (\bx_a, \bx_b)\in [0,1]^2, \bx_a(1-\bx_b)+(1-\bx_a)\bx_b\ge \frac{1}{2}\min(\bx_a+\bx_b,2-\bx_a-\bx_b),\label{eq:undirected-relax}
\end{equation}
implying that $F(\bx)\ge \frac{1}{2}\hat{f}(\bx)$ for all $\bx\in [0,1]^{\NN}$. \Cref{eq:undirected-relax} can be verified by first replacing $(\bx_a,\bx_b)$ with $(1-\bx_a,1-\bx_b)$ if $\bx_a+\bx_b>1$, and then performing the following sequence of computations:
\begin{align*}
\bx_a(1-\bx_b)+(1-\bx_a)\bx_b &= \bx_a+\bx_b-2\bx_a\bx_b \\
&\ge \bx_a+\bx_b-\frac{(\bx_a+\bx_b)^2}{2}\\
&=   \bx_a+\bx_b\paren{1-\frac{\bx_a+\bx_b}{2}}\\
&\ge \frac{\bx_a+\bx_b}{2}.
\end{align*}
Let $\bx^*$ be a solution attaining the optimal value for \Cref{maxi:undirected}, which can be found using any LP solver (e.g. using the ellipsoid method). Then
\begin{align*}
F(\bx^*)+L(\bx^*)&\ge \frac{1}{2}\hat{f}(\bx^*)+L(\bx^*)\\
&=\max_{\bx\in \PP}\brac{\frac{1}{2}\hat{f}(\bx^*)+L(\bx^*)}\\
&=\max_{S\in \II}\brac{\frac{1}{2}f(S)+\ell(S)}\\
&\ge \frac{1}{2}f(OPT)+\ell(OPT).
\end{align*}
Thus, $\bx^*$ achieves the desired approximation factor. We can finish by using pipage rounding to round $\bx^*$ to an integral solution within $\II$ that preserves the value of $f+\ell$ in expectation.
\end{proof}

\begin{proof}[Proof of \Cref{thm:max-dicut-usm}]
Consider a linear program similar to the one in the proof of \Cref{thm:max-cut-csm}.
\begin{maxi}
{\bx\in [0,1]^{\NN}}{\frac{1}{2}\hat{f}(\bx)+L(\bx)\triangleq  \frac{1}{2}\max_{\bf c}\paren{\sum_{ab\in E}w_{ab}{\bf c}_{ab}}+L({\bf x})}{}{\label{maxi:dicut}}
\addConstraint{{\bf c}_{ab}}{\ge 0}
\addConstraint{{\bf c}_{ab}}{\le {\bx}_a}
\addConstraint{{\bf c}_{ab}}{\le 1-{\bx}_b}
\end{maxi}
Unfortunately, it does \textit{not} suffice to just find any $\bx^*$ that attains the optimum value and apply pipage rounding. The reason for this is that $F(\bx)\not \ge \frac{1}{2}\hat{f}(\bx)$ in general. However, it can be verified that 
\begin{enumerate}
    \item $F(\bx)\ge \frac{1}{2}\hat{f}(\bx)$ when $\bx$ is \textit{half-integral}; that is, $\bx_u\in \{0,0.5,1\}$ for all $u\in \NN$. This inequality can easily be verified for the cut function of a single directed edge, and thus extends to sums of cut functions.
    
    \item The $(\bx,\bf c)$ polytope defined by the constraints in \Cref{maxi:dicut} is bounded, and all its vertices are half-integral. Note that this property would no longer hold if the constraint $\bx\in \PP$ was included, which is why \Cref{thm:max-dicut-usm} does not apply to \RCSM{}.
\end{enumerate}
Both of these properties were previously used by Halperin and Zwick's combinatorial 0.5-approximation to \texttt{MAX-DICUT} \cite{halperin2001combinatorial}. Thus, the remainder of our algorithm is identical to that in the proof of \Cref{thm:max-cut-csm}, except we additionally require that the point $\bx^*$ returned by the LP solver is a vertex of the polytope to guarantee that $F(\bx)\ge \frac{1}{2}\hat{f}(\bx)$.

\end{proof}

\begin{proof}[Proof of \Cref{thm:dicut-inapprox}]
Our goal is to show that when $f$ is a directed cut function,
\[\max_{\bx \in [0,1]^{\NN}}\brac{F(\overline \bx)+L(\overline \bx)}\ge \max_{S\subseteq \NN}[0.5f(S)+\ell(S)].\]
The idea is to first construct an auxiliary function $\hat{f}(\bx)$ satisfying the following properties (note that we do not capitalize $\hat{f}$ since it is not a multilinear extension):
\begin{enumerate}
    \item The function $\hat{f}$ is symmetric; that is, $\hat{f}(\bx)=\hat{f}(\overline{\bx})$ for all $\bx \in [0,1]^{\NN}$.
    \item The function $\hat{f}$ upper bounds $f$; that is, $\hat{f}(\bone_S)\ge f(S)$ for all $S\subseteq \NN$.
    \item There exists $\bx^*\in \text{argmax}_{\bx\in [0,1]^{\NN}}\brac{0.5 \hat{f}(\bx)+L(\bx)}$ such that $\bx^*=\overline{\bx^*}$ and $F(\bx^*)\ge 0.5\hat{f}(\bx^*)$.
\end{enumerate}
Assuming that all these properties hold, we find:
\begin{align*}
\max_{\bx\in [0,1]^{\NN}} \brac{F(\overline{\bx})+L(\overline{\bx})}
& \ge \max_{\bx\in [0,1]^{\NN}} \brac{0.5 \hat{f}(\bx)+L(\bx)} && \text{(by properties 1 and 3)} \\
& \ge \max_{S\subseteq \NN}[0.5 \hat{f}(\bone_S)+\ell(S)] && \\
& \ge \max_{S\subseteq \NN}\brac{0.5 f(S)+\ell(S)} && \text{(by property 2)}.
\end{align*}

Before defining $\hat{f}$, we examine the symmetrization operator $\overline{\bx}$. Recall that symmetrization is defined with respect to a permutation group $\GG$. Partition the ground set into $K$ subsets $\NN=\NN_1\cupdot \NN_1 \dots \cupdot \NN_K$, where $\cupdot$ denotes the disjoint union of two sets, and define the function $g: \NN \to \{1,2,\dots,K\}$ to be the mapping from every element of the ground set to the subset that contains it. This mapping satisfies the property that $g(u_i)=g(u_j)$ if and only if there exists a permutation $\sigma \in \GG$ such that $\sigma(u_i)=u_j$. Observe that \[\overline{\bx}_{u_i}=\text{avg}_{g(u_i)}(\bx)\triangleq \frac{\sum_{u\in \NN_{g(u_i)}}\bx_u}{|\NN_{g(u_i)}|};\] 
that is, the value at $u_i$ in $\overline{\bx}$ is just the average of the values in $\bx$ of all $u$ in the same subset as $u_i$.

Next, we define $\hat{f}$ in terms of $\text{avg}_1(\bx),\text{avg}_2(\bx),\dots,\text{avg}_K(\bx)$, which guarantees that property 1 is satisfied. For all $1\le i,j\le K$, define $w_{ij}\ge 0$ as the sum of the weights of the edges directed from $\NN_i$ to $\NN_j$ (where $i$ can equal $j$). Then
\[\hat{f}(\bx)\triangleq \sum_{i=1}^K\sum_{j=1}^Kw_{ij}\min(\text{avg}_i(\bx),1-\text{avg}_j(\bx)).\]
It remains to show that properties 2 and 3 are satisfied.

\paragraph*{Property 2:} Since every subset $\NN_i$ is symmetric, the proportion of edges from $\NN_i$ to $\NN_j$ that are cut by $S$ is bounded above by the proportion of elements in $\NN_i$ contained within $S$ (which is precisely $\text{avg}_i(\bx)$) as well as the proportion of elements in $\NN_j$ not contained within $S$ (which is precisely $1-\text{avg}_j(\bx)$). This implies that $\hat{f}(\bone_S)\ge f(S)$ for all $S$.

\paragraph*{Property 3:} Similar to the proof of \Cref{thm:max-dicut-usm}, it can be shown that there exists a half-integral tuple $(\text{avg}_1(\bx),\allowbreak\text{avg}_2(\bx),\allowbreak\dots,\text{avg}_K(\bx))$ at which $0.5\hat{f}(\bx)+L(\bx)$ is maximized. This tuple corresponds to a half-integral and symmetric $\bx^*$. As in \Cref{thm:max-dicut-usm}, it is easy to check that $F(\bx^*)\ge 0.5\hat{f}(\bx^*)$, completing the proof.
\end{proof}
\end{document}